\newtheorem{theorem}{Theorem}[section]
\newtheorem{corollary}[theorem]{Corollary}
\newtheorem{lemma}[theorem]{Lemma}
\newtheorem{proposition}[theorem]{Proposition}
\newtheorem{claim}[theorem]{Claim}
\newtheorem{fact}[theorem]{Fact}
\theoremstyle{definition}
\newtheorem{definition}[theorem]{Definition}
\newtheorem*{theorem*}{Theorem}
\newtheorem*{corollary*}{Corollary}
\newtheorem*{conjecture*}{Conjecture}
\newtheorem*{lemma*}{Lemma}
\newtheorem*{thm*}{Theorem}
\newtheorem*{prop*}{Proposition}
\newtheorem*{obs*}{Observation}
\newtheorem*{definition*}{Definition}
\newtheorem*{remark*}{Remark}
\newtheorem*{rec*}{Recommendation}
\newenvironment{fminipage}%
  {\begin{Sbox}\begin{minipage}}%
  {\end{minipage}\end{Sbox}\fbox{\TheSbox}}
\def\prob#1#2{\mbox{Pr}_{#1}\left[ #2 \right]}
\def\expec#1#2{{\mathbb{E}}_{#1}\left[ #2 \right]}
\def\setof#1{\left\{#1  \right\}}
\def\union{\cup}
\def\abs#1{\left|#1  \right|}
\def\norm#1{\left\| #1 \right\|}
\def\calD{\mathcal{D}}
\def\calL{\mathcal{L}}
\def\calT{\mathcal{T}}
\def\calM{\mathcal{M}}
\def\calK{\mathcal{K}}
\newcommand\LLambda{\boldsymbol{\mathit{\Lambda}}}
\newcommand\PPi{\boldsymbol{\Pi}}
\newcommand\xxi{\boldsymbol{\xi}}
\newcommand\bb{\boldsymbol{\mathit{b}}}
\newcommand\hh{\boldsymbol{\mathit{h}}}
\newcommand\pp{\boldsymbol{\mathit{p}}}
\newcommand\qq{\boldsymbol{\mathit{q}}}
\newcommand\yy{\boldsymbol{\mathit{y}}}
\newcommand\xx{\boldsymbol{\mathit{x}}}
\newcommand\veczero{\boldsymbol{0}}
\newcommand\vecone{\boldsymbol{1}}
\newcommand\matzero{\boldsymbol{0}}
\renewcommand\AA{\boldsymbol{\mathit{A}}}
\newcommand\BB{\boldsymbol{\mathit{B}}}
\newcommand\CC{\boldsymbol{\mathit{C}}}
\newcommand\HH{\boldsymbol{{H}}}
\newcommand\II{\boldsymbol{\mathit{I}}}
\newcommand\MM{\boldsymbol{\mathit{M}}}
\newcommand\LL{\boldsymbol{\mathit{L}}}
\newcommand\VV{\boldsymbol{\mathit{V}}}
\newcommand\XX{\boldsymbol{\mathit{X}}}
\newcommand\YY{\boldsymbol{\mathit{Y}}}
\newcommand\ZZ{\boldsymbol{\mathit{Z}}}
\newcommand\llambda{\boldsymbol{\mathit{\lambda}}}
\newcommand\R{\mathbb{R}}
\newcommand\C{\mathbb{C}}
\newcommand{\E}[1]{\mathop{{}\mathbb{E}}\left[#1\right]}
\newcommand{\Var}[1]{\mathop{{}\operatorname{Var}}\left[#1\right]}
\newcommand\tr{\mathrm{Tr}}
\newcommand{\trp}{\top}
\DeclareMathOperator*{\diag}{diag}
\def\calI{\mathcal{I}}
\def\iindependent{$\ell_{\infty}$-independent}
\def\iindependence{$\ell_{\infty}$-independence}
\def\amindependent{average multiplicatively independent}
\def\amindependence{average multiplicative independence}
\def\Dinf{D_{\text{inf}}}
\def\Dam{D_{\text{am}}}
\def\tsiindependent{two-sided \iindependent}
\def\tsiindependence{two-sided \iindependence}
\def\muhom{\mu^{\text{hom}}}
\DeclareMathOperator{\dist}{dist}
\newif\ifsodaversion
\begin{document}

\title{Scalar and Matrix Chernoff Bounds from $\ell_{\infty}$-Independence}
\author{
  Tali Kaufman \\ Bar-Ilan University \\ \href{mailto:kaufmant@mit.edu}{kaufmant@mit.edu} 
  \and
  Rasmus Kyng\\ ETH Zurich \\ \href{mailto:kyng@inf.ethz.ch}{kyng@inf.ethz.ch} 
  \and
  Federico Sold\'a \\ ETH Zurich\\ \href{mailto:federico.solda@inf.ethz.ch}{federico.solda@inf.ethz.ch} 
}
\date{\today}

\clearpage\maketitle
\thispagestyle{empty}
\vspace{-1em}
\begin{abstract}
We present new scalar and matrix Chernoff-style concentration bounds for a broad class of probability distributions over the binary hypercube $\{0,1\}^n$.
Motivated by recent tools developed for the study of mixing times of Markov chains on discrete distributions, we say that a distribution is $\ell_\infty$-independent when the infinity norm of its influence matrix $\mathcal{I}$ is bounded by a constant. We show that any distribution which is $\ell_\infty$-independent satisfies a matrix Chernoff bound that matches the matrix Chernoff bound for independent random variables due to Tropp. Our matrix Chernoff bound is a broad generalization and strengthening of the matrix Chernoff bound of Kyng and Song (FOCS'18). Using our bound, we can conclude as a corollary that a union of $O(\log|V|)$ random spanning trees gives a spectral graph sparsifier of a graph with $|V|$ vertices with high probability, matching results for independent edge sampling, and matching lower bounds from Kyng and Song.

\end{abstract}


\newpage
\pagenumbering{arabic} 

\section{Introduction}
\paragraph{Concentration of dependent scalar random variables.}
One of the most important and fundamental properties of random variables is that in many settings, they ``concentrate'' around some expected outcome, i.e. they are close to some typical outcome.
The central limit theorem provides a limiting version of this result, while standard Chernoff bounds for sums of independent Bernoulli random variables give one of the most important examples of non-asymptotic concentration statements.

Many concentration statements focus on random variables constructed in some way from a collection of other independent random variables, e.g. by taking a sum of these independent r.v.s or by considering another function whose inputs are independent random variables.
However, in many applications, we want to understand the concentration properties of random variables constructed from a collection of other variables that are \emph{not independent}.
A classical example in theoretical computer science is a random spanning tree of an undirected (weighted) graph.
To describe a random spanning tree in a graph $G = (V,E)$ with $\abs{E}$ edges and $\abs{V}$ vertices, we can construct a binary vector in $\setof{0,1}^{\abs{E}}$ with $\abs{V}-1$ entries that equal 1, indicating the edges that are present in the tree.
These $\abs{E}$ random Bernoulli variables are not independent, but they still exhibit many forms of concentration.

An important result by Dubhashi and Ranjan \cite{DR98} showed that standard Chernoff bounds apply to random binary vectors whose variables exhibit various forms of negative correlation, in particular if so-called \emph{negative regression} or \emph{negative association} holds for the distribution.
As a consequence, Chernoff bounds apply to the indicator vector for a random spanning tree as described above.
Beyond negative correlation, Paulin \cite{P14} proved standard Chernoff bounds for weakly dependent random variables satisfying a condition known as Dobrushin uniqueness condition.

A random vector $\xxi \in \setof{0,1}^n$ is called $k$-homogeneous if every outcome of $\xxi$ contains exactly $k$ ones. $k$-homogeneous distributions are stationary distributions of some natural random walks on pure simplicial complexes namely the ``up and down'' and the ``down and up'' walks. As part of a new theory of high-dimensional expanders, Kaufman and Mass introduced the study of these high order random walks \cite{KM17}. Subsequent works of 
Dinur, Kaufman and Oppenheim \cite{DK17, KO18, O18} provided tight spectral analysis of these walks that implies fast mixing time bounds for these random walks. This, in turn, makes it possible to efficiently sample (approximately) from their stable distribution.
Building on this theory, \cite{ALGV19a} proved a breakthrough result on rapid mixing for distributions associated with log-concave polynomials that include the uniform distribution on the bases of matroids \cite{AHK18,AGV18a} and strongly Rayleigh distributions \cite{BBL09a}.
The result was strengthened and the proof simplified by Cryan et al. \cite{CGM19}  and yet further improvements in the mixing time were obtained by Anari et al.\ \cite{ALGVV21}.
Further developing this theory, Anari et al.\ \cite{ALG21} introduced the spectral independence property and showed that it implies rapid mixing. The spectral independence approach has been used by Anari et al.\ \cite{ALG21} and several subsequent works \cite{AASV21a, ALG21, BCCPSV21, CGSV21, CLV20, CLV21b, FGYZ21, JPV21, L21a} to prove rapid mixing of an important case of ``down and up'' walk called Glauber dynamics in particular in the framework of spin systems. Spin systems capture many combinatorial models of interest, including the hard-core model on weighted independent sets, the Ising model, and colorings.
In this context, for some regimes, optimal mixing times has been obtained by Chen et al. \cite{CLV21} using approximate tensorization of entropy and generalizing the work of Cryan et al. \cite{CGM19}.  Recently, Anari at al. \cite{AJKPV21} introduced the notion of entropic independence as an analog of spectral independence and established the tight mixing time for Glauber dynamics for a broad range of Ising models.
It is possible to extend many of these analyses to non-homogeneous distributions by using a \emph{homogenization argument} which pads the random vector with enough entries to enforce homogeneity. This, however, frequently leads to much weaker bounds than analyses that directly work on non-homogeneous distributions.

A priori, it is not clear that proving fast mixing of such random walks proves concentration of some sort for samples from $\xxi$.
But, a central step in proving mixing is to bound various quantities known as the Poincar\'e or (modified) log-Sobolev constants of the walk.
It turns out that somewhat standard techniques such as the ``Herbst argument'' can translate a bound on these mixing-time quantities into a concentration statement (see \cite{BLM13} for a concentration argument from the modified log-Sobolev inequality and see \cite{L99} for a concentration argument from the Poicar\'e inequality).
When using the stronger modified log-Sobolev inequality, these arguments show that for a $k$-homogeneous random vector $\xxi \in \setof{0,1}^n$, given a function $f : \setof{0,1}^n \to \R$ that is $1$-Lipschitz w.r.t. Hamming distance,
$\abs{f(\xxi) - \E{f(\xxi)}} \leq O(\sqrt{k \log(1/\delta)})$ with probability at least $1-\delta$.
We call this type of concentration a McDiarmid bound as an early variant of this result was shown by \cite{M89}.
For distributions that have the \emph{Stochastic Covering Property} (SCP), Peres and Pemantle \cite{PP14} showed precisely this bound -- and the indicator vector of a random spanning tree falls under this category. 
By bounding the modified log-Sobolev constant, Hermon and Salez \cite{HS19} obtained again the result of Peres and Pemantle \cite{PP14}, while Cryan et al. \cite{CGM19} obtained a McDiarmid-like bound for $k$-homogeneous strongly log-concave distributions. Later work by \cite{GV18} extended this type of concentration bound to apply to the full range of negative association and negative regression type distributions considered by \cite{DR98} that include all the distributions that satisfy the SCP.
For non-homogeneous distributions, the recent Poincar\'e and MLS-based concentration approaches can be applied by first using a homogenization argument. This pads the random vector obtained by adding some extra variables so that all the outcomes contains exactly $n$ ones. In combination with McDiarmid-like bounds, this technique leads to particularly coarse bounds that depend on $n$ instead of $k$.

In contrast, Chernoff bounds apply to the more restrictive function class\footnote{Observe that the setting of Chernoff bounds also implies that $f$ is $1$-Lipschitz w.r.t. Hamming distance.} of  $f(\xxi) = \sum_i c_i \xxi(i)$ with ${c_i \in [0,1]}$ for all $i$.
The classical Chernoff bound and the matching bounds of \cite{DR98} show for various distributions that $\abs{f(\xxi) - \E{f(\xxi)}} \leq O(\sqrt{\E{f(\xxi)} \log(1/\delta)})$ with probability at least $1-\delta$, provided $\log(1/\delta) \leq \E{f(\xxi)}$.
We call this a ``Chernoff-type bound''.
This Chernoff-type bound can be much stronger than the McDiarmid bound when $\E{f(\xxi)} \ll k$.
Thus, generally, while Chernoff bounds are sometimes perceived as a weaker cousin of the more general McDiarmid bound for $1$-Lipschitz functions, the Chernoff bound can prove much stronger concentration in some cases.

\paragraph{Concentration of dependent matrix-valued random variables.}
Beyond scalar-valued random variables, an important area of study for concentration has been random matrices. In concentration theory for (symmetric) random matrices, we study a (symmetric, real-valued) matrix valued random variable, say, $\XX \in \R^{d \times d}$.
Two important cases here are analogous to the scalar sum-function and $1$-Lipschitz function settings we considered above.
In the matrix setting, we find that the gap between different types of matrix-concentration statements widens.

Rudelson \cite{R99} and Ahlswede and Winter \cite{AW02} proved early Chernoff-like bounds for matrices.
Tropp \cite{T12} proved a more refined bound, which we will focus on.
Consider a random binary vector  $\xxi \in \setof{0,1}^n$ and a function $f(\xxi) = \sum_i \CC_i \xxi(i)$ where each $\CC_i$ is a positive-semi definite $d \times d$ matrix with spectral norm $\norm{\CC_i} \leq 1$.
When the $\xxi$ entries are independent, Tropp's bound shows that
$\lambda_{\max}\left( f(\xxi) -\expec{}{f(\xxi)} \right) \leq O(\sqrt{ \lambda_{\max}(\E{f(\xxi)}) \log(d/\delta) })$ with probability at least $1-\delta$, provided $\log(d/\delta) \leq \lambda_{\max}(\E{f(\xxi)})$.
A similar bound holds for the lower tail but with $\lambda_{\min}(\cdot)$ replacing $\lambda_{\max}(\cdot)$.

For $k$-homogeneous Strongly Rayleigh distributions, \cite{KS18} proved a weaker Chernoff bound, showing that given $f$ with the same properties,
$\lambda_{\max}\left( f(\xxi) -\expec{}{f(\xxi)} \right) \leq O(\sqrt{ \log(k) \lambda_{\max}(\E{f(\xxi)}) \log(d/\delta) })$ with probability at least $1-\delta$, provided $\log(d/\delta) \leq \log(k) \lambda_{\max}(\E{f(\xxi)})$. Again, a lower tail bound also holds after replacing $\lambda_{\max}(\cdot)$ replacing $\lambda_{\min}(\cdot)$.

These bounds of \cite{T12} and \cite{KS18} that we call Chernoff-like bounds work particularly well when $\lambda_{\max}(\expec{}{f(\xxi)})$ is small, which happens when the ``mass'' of the distribution is well-spread out across all matrix ``directions''.
An example of this is matrix-valued random variables with isotropic mean or covariance matrix.
For example, in spectral graph theory, the bound \cite{T12} can be used to prove that in a graph with $\abs{E}$ edges and $\abs{V}$ vertices, we can obtain a spectral sparsifier  of the graph Laplacian using $\Theta(\abs{V} \log \abs{V})$ edges.
Similarly, \cite{KS18} shows as a corollary that a spectral sparsifier of the graph Laplacian can be constructed using $\Theta (\log^2 \abs{V})$ random spanning trees of the graph.

Auon et al. \cite{ABY20} and Kathuria \cite{K20} established matrix analogs of the Poincar\'e inequality, and 
showed that this implies concentration for symmetric-matrix-valued functions.
Garg et al. \cite{GKS21} showed the scalar Poincar\'e inequality implies the matrix Poincar\'e inequality.
Further improvements were obtained by Huang and Tropp \cite{HT21a} using the Bakry–\'Emery curvature criterion.  The bounds presented in these results are analogous to the McDiarmid bound in the scalar settings and, unfortunately, this type of bound works poorly in the important isotropic setting.
In particular when restricted to a $k$-homogeneous strongly Rayleigh distributions, for a function $f : \setof{0,1}^n \to \R^{d \times d}$, \cite{ABY20,K20,HT21a} show that  $\norm{ f(\xxi) -\expec{}{f(\xxi)} }\leq O(\sqrt{k}\log(d/\delta))$ with probability at least $1-\delta$. 
This means the bound can only prove that $O(|V|\log^2|V| )$ random spanning trees give a spectral sparsifier of a graph.
If, instead, we take some care to modify their bound to treat a union of independent SCP distributions more carefully, we can reduce this to $O(\sqrt{|V|}\log |V| )$ spanning trees to build a spectral sparsifier -- but this is still exponentially worse than \cite{KS18}.

Why are the bounds of \cite{ABY20,K20} and \cite{HT21a} weak compared to \cite{T12}, when we look at setting like spectral graph sparsification?
Somewhat heuristically, we can say that this comes from the classic matrix Chernoff (and matrix Bernstein \cite{T12}) result leading to spectral norm deviation that depends on $\lambda_{\max}(\Var{f(\xxi)})$, while the others incur spectral norm deviation that depends on $\sum_i \lambda_{i}(\Var{f(\xxi)}) = \tr(\Var{f(\xxi)})$, because they cannot tell apart the different directions of variance, and pay simultaneously for the variance in all eigenvalue directions.
We can think of this distinction as saying the classic matrix Chernoff is ``direction-aware'' whereas the \cite{ABY20,K20,HT21a} bounds are ``direction-unaware''.
In many applications of matrix concentration, such as spectral graph sparsification, a direction-aware bound is necessary to obtain good results.
Our techniques prove direction-aware bounds.

\paragraph{Our contributions: New(ish) measures of dependence and new concentration results.}
The observations above suggest that we should try to prove scalar and matrix Chernoff bounds for the many recently studied distributions.
In contrast to the well-established Herbst-argument for concentration of $1$-Lipschitz functions based on modified log-Sobolev constant or Poicar\'e constants, there is no standard recipe for proving Chernoff bounds for dependent distributions, and in fact it is not clear that Chernoff bounds hold given control over the Poincar\'e or the modified log-Sobolev constants.
Nonetheless, we show that it is possible to prove standard Chernoff bounds for many of these recently studied distributions, using a property that we call \emph{\iindependence}.
To introduce this notion, we need to recall the notion of an \emph{influence matrix} \cite{D70}, in particular, we first state the variant introduced in \cite{ALG21} that we call a \emph{two-sided} influence matrix. 
Given a set $\Lambda \subset [n]$ and a vector $\sigma_\Lambda \in\{0,1\}^\Lambda$, we define the two-sided influence matrix $\Psi_\mu^{\sigma_\Lambda}$ as
\[
\Psi_\mu^{\sigma_\Lambda}(i,j) := \prob{\xxi \sim \mu}{\xi_j =1|\xi_i =1\land \xi_
\ell = \sigma_\Lambda(\ell)\ \forall \ell \in \Lambda} - \prob{\xxi \sim \mu}{\xi_j=1| \xi_i=0 \land \xi_\ell = \sigma_\Lambda(\ell)\ \forall \ell \in \Lambda},
\]
if the conditioning is feasible (i.e. the event being conditioned on has non-zero probability) and $\Psi_\mu^{\sigma_\Lambda}(i,j) := 0$ otherwise.

We define also another variant of influence matrix that appears in \cite{L21a} and we call \emph{one-sided} influence matrix. 
Given a distribution $\mu$, a set of elements $\Lambda\subset [n]$ and two other indexes $i,j \in [n]\setminus \Lambda$ we define the influence of an element $i$ on another element $j$ when conditioning on $\Lambda$ as
$$ \calI_\mu^\Lambda(i\rightarrow j) := \prob{\xxi \sim \mu}{\xi_j =1|\xi_i =1\land \xi_\ell =1\ \forall \ell\in \Lambda} - \prob{\xxi \sim \mu}{\xi_j=1| \xi_\ell =1\ \forall \ell\in \Lambda} .$$
The matrix $\calI_\mu^{\Lambda}(i,j) := \calI_\mu^\Lambda(i\rightarrow j)$ if the conditioning on $\Lambda\cup\{i,j\}$ is feasible and $\calI_\mu^{\Lambda}(i,j) := 0$ otherwise is called the pairwise (one-sided) influence matrix.

With the definition of influence matrix, we can finally state the definition of {\iindependence}.
\begin{definition}\label{def:iindependence}
We say that a distribution $\mu$ is (one sided) \emph{\iindependent} with parameter $\Dinf$ if 
$$ \norm{\calI_\mu^\Lambda}_\infty = \max_{i\in [n]} \sum_{j\in [n]} |\calI_\mu^\Lambda(i\rightarrow j)| \leq \Dinf $$
for all subsets $\Lambda \subset[n]$.

We say that $\mu$ is \emph{\tsiindependent} with parameter $\Dinf$ if
$$ \norm{\Psi_\mu^{\sigma_\Lambda}}_\infty \leq \Dinf ,$$
for all $\Lambda \subset [n]$ and $\sigma_\Lambda \in\{0,1\}^\Lambda$.
\end{definition}

It is easy to see that {\tsiindependence} is stronger than {\iindependence} and in Appendix~\ref{sec:example_appendix} we show an example of $k$-homogeneous one-sided {\iindependent} distribution which is not {\tsiindependent}. Notice that two-sided {\iindependence} is a slightly stronger variant of the notion of spectral independence introduced in \cite{ALG21} that requires $\lambda_{\max}(\Psi_{\mu_{\sigma_\Lambda}}) \leq \eta$ for every $\sigma_\Lambda \in \{0,1\}^\Lambda, \Lambda \subset [n]$. 
Two-sided {\iindependence} is implied by SCP, but is not comparable with the notions of negative regression and negative association.
Although, to the best of our knowledge, nobody has named the property of {\iindependence} before, several papers \cite{L21a, AASV21a, ALG21, CLV21, CGSV21, FGYZ21, FGKP21b, CLV20, BCCPSV21} used {\tsiindependence} as a tractable way to prove spectral independence. Thus to our luck, we can furnish large number of distributions with bounded {\iindependence}.
In Section~\ref{sec:iindeplist}, we provide an overview of known families of distributions with  bounded {\iindependence}.

We show that linear matrix-valued functions of $k$-homogeneous {\iindependent} random variables satisfy matrix Chernoff bounds of the form presented by Tropp. 
We state our main theorem:
 \begin{theorem} \label{theorem:main_contribution}
 Let $\xi_1,\xi_2,\dots,\xi_n\in \{0,1\}$ be $n$ random variables with some joint distribution $\mu$ which is $k$-homogeneous and {\iindependent} with parameter $D$. Let $\YY_1,\dots,\YY_n\in \R^{d\times d}$ be a collection of symmetric matrices such that $0\preceq \YY_i\preceq R\II,\ \forall i\in [n]$ for some $R > 0$. Define $\mu_{\min}:= \lambda_{\min} \left(\expec{\xxi \sim \mu}{\sum_{i} \xi_i \YY_i}\right)$ and $\mu_{\max}:= \lambda_{\max} \left(\expec{\xxi \sim \mu}{\sum_{i} \xi_i \YY_i}\right)$. Then for any $\delta\in [0,1]$
 \[
   \prob{}{\lambda_{\min}\left( \sum_i \xi_i \YY_i \right) \leq (1-\delta) \mu_{\min}} \leq d \exp\left({-\frac{\delta^2 \mu_{\min}}{O(R D^2)} } \right) \quad \mbox{and}
 \]
 \[
   \prob{}{\lambda_{\max}\left( \sum_i \xi_i \YY_i \right) \geq (1+\delta) \mu_{\max}} \leq d \exp\left({-\frac{\delta^2\mu_{\max}}{O(R D^2)} }\right).
   \]
\end{theorem}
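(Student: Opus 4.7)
I would start from the standard matrix Laplace transform method that underlies Tropp's bound: for any $\theta > 0$,
\[
\prob{}{\lambda_{\max}\left(\sum_i \xi_i \YY_i\right) \geq (1+\delta) \mu_{\max}} \leq e^{-\theta (1+\delta) \mu_{\max}} \cdot \expec{}{\tr \exp\left(\theta \sum_i \xi_i \YY_i\right)},
\]
with an analogous statement for the lower tail using $\tr \exp(-\theta \sum_i \xi_i \YY_i)$. The entire difficulty is then to bound the matrix moment generating function $\expec{\xxi\sim \mu}{\tr \exp(\theta \sum_i \xi_i \YY_i)}$ when the $\xi_i$ are dependent. For independent variables, Tropp's original proof uses Lieb's concavity together with iterated conditioning that peels off one coordinate at a time. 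The core of the proof will be a substitute for this iterated conditioning step that only uses \iindependence.

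The plan is to introduce a sequential/coupling process that samples from $\mu$ one coordinate at a time, in the spirit of Kyng--Song's argument for strongly Rayleigh distributions. Concretely, I would fix an ordering, reveal $\xi_1, \xi_2, \dots$ in turn, and study the Doob-style martingale $M_t = \expec{}{\tr \exp(\theta \sum_i \xi_i \YY_i) \mid \xi_1, \dots, \xi_t}$. At each step, conditioning on the revealed prefix $\Lambda = \{1,\dots,t\}$ produces a new distribution $\mu^{\sigma_\Lambda}$ over the remaining coordinates, which by hypothesis is still \iindependent{} with parameter $D$. The key lemma I would aim to prove is that revealing one coordinate changes the conditional matrix MGF by at most a controlled multiplicative factor, where the control comes from $D$ via the one-sided influence matrix. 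Since the $\YY_i$ satisfy $0 \pleq \YY_i \pleq R\II$, the change from fixing $\xi_t$ to either $0$ or $1$ affects the exponent by at most $\theta \YY_t$, and the \iindependence{} hypothesis bounds how strongly this perturbation propagates to the conditional expectations of the remaining $\xi_j \YY_j$.

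To combine these incremental bounds into a global bound on the matrix MGF, I would use Lieb's concavity of $A \mapsto \tr \exp(H + \log A)$ to linearize each conditional expectation inside the matrix exponential, absorbing the effect of revealing $\xi_t$ into an updated ``effective mean'' $\sum_{j>t} \E[\xi_j \mid \xxi_{\leq t}] \YY_j$. Iterating this through all $n$ steps should yield a matrix MGF bound of the form $\tr \exp(g(\theta) \cdot \expec{}{\sum_i \xi_i \YY_i})$ for some function $g$, with $g(\theta) = O(\theta^2 R D^2)$ in the relevant small-$\theta$ regime; the $D^2$ factor accounts for both a direct influence contribution and a second-order amplification coming from how $\ell_\infty$-independence controls the shift in conditional means. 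Optimizing $\theta = \Theta(\delta/(RD^2))$ then gives the claimed tail bound. The homogeneity $\sum_i \xi_i = k$ enters to ensure the martingale stays well-defined and that the incremental conditional distributions remain feasible.

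\paragraph{Main obstacle.}
The hardest step will be the incremental bound on the matrix MGF under one-coordinate conditioning. The \iindependence{} hypothesis is a scalar statement about $\ell_\infty$ row sums of an influence matrix, while the object we must control is genuinely matrix-valued and direction-aware. Bridging this gap likely requires showing that $\ell_\infty$ influence bounds give a ``matrix correlation decay'': for any PSD matrix $\MM$ with $0\pleq \MM \pleq R\II$, the covariance $\expec{}{\xi_j \MM \mid \xi_i = 1} - \expec{}{\xi_j \MM}$ can be controlled in Loewner order by a multiple of $|\calI^\Lambda_\mu(i \to j)| \cdot R\II$. Getting the $D^2$ (rather than a weaker $k D$ or $nD$) factor in the final bound will depend on carefully exploiting both the $\ell_\infty$ bound on the influence matrix and the spectral norm bound on each $\YY_i$, rather than the much looser trace-of-variance bound that matrix Poincar\'e arguments produce.
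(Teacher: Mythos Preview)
Your proposal is essentially the Kyng--Song martingale strategy (fix an order, reveal coordinates one by one, control the Doob martingale with Lieb/Golden--Thompson), and the paper explicitly departs from this framework. Two concrete gaps:

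\textbf{(1) One-sided vs.\ two-sided conditioning.} If you reveal $\xi_t$ in a fixed order, at each step you must condition on whatever value $\xi_t$ takes, including $\xi_t=0$. Controlling the shift in conditional marginals after conditioning on $\xi_t=0$ is a \emph{two-sided} $\ell_\infty$-independence statement, but the theorem only assumes the one-sided version (bounded $\norm{\calI_\mu^\Lambda}_\infty$ where $\Lambda$ records only the coordinates set to $1$). The paper avoids this by never fixing an order: it samples a coordinate $v$ with probability $\nu(v)=p(v)/k$ proportional to its marginal, always conditions on $\xi_v=1$, and inducts on the homogeneity parameter $k$. This uses only one-sided conditioning. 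The induction hypothesis is strengthened to hold for all symmetric shifts $\HH$ in $\tr\exp(\HH+\theta\sum_i\xi_i\YY_i)$, and the inductive step is closed with Golden--Thompson rather than Lieb.

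\textbf{(2) The direction-aware second moment.} You correctly flag that bounding the perturbation by $|\calI_\mu^\Lambda(i\to j)|\cdot R\II$ is direction-unaware and will not give the sharp $D^2$ exponent. The paper's key new ingredient, which your plan is missing, is a separate notion they call \emph{average multiplicative independence}: $\expec{v\sim\nu}{|p_v(u)-p(u)|}\le (D/k)\,p(u)$ for every $u$. They prove (via the Bayes identity $\nu(u)p_u(v)=\nu(v)p_v(u)$) that for $k$-homogeneous $\mu$ this is \emph{equivalent} to $\ell_\infty$-independence with the same constant. This equivalence is what yields the direction-aware bound $\expec{v\sim\nu}{\ZZ_v^2}\preceq D^2\,\expec{v\sim\nu}{\YY_v}$ (with $\ZZ_v$ the shift in the conditional mean), rather than merely $\preceq D^2\II$. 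Combined with $\ZZ_v\preceq D\II$ and a quadratic Taylor bound on the matrix exponential, one gets $\expec{v\sim\nu}{e^{(\theta+c\theta^2)\ZZ_v-c\theta^2\YY_v}}\preceq\II$ for $c=\Theta(D^2)$, closing the induction with no $\log k$ loss. Without this averaged-multiplicative step your scheme will either lose a $\log k$ (as in Kyng--Song) or collapse to a trace-type bound.
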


When specialized to distributions with the SCP, Theorem~\ref{theorem:main_contribution} answers positively the question posed by Kyng and Song \cite{KS18} on whether the $\log(k)$ factor in the exponent that appears in their matrix Chernoff bound for Strongly Rayleigh distributions can be removed. We obtain as corollary that $O(\log\abs{V})$ random spanning trees gives a spectral sparsifier of a (weighted, undirected) graph $G = (V,E)$ whp improving on the $O(\log^2\abs{V})$ bound given in \cite{KS18}. Our result thus matches the lower bound from \cite{KS18} (see their Theorem 1.8).
In Appendix~\ref{sec:tree_appendix}, we sketch a proof of the corollary and provide some relevant preliminaries.

\begin{corollary}
  \label{cor:spantreesparsify}
  Given as input a weighted graph $G$ with $n$ vertices and a parameter $\epsilon > 0$, let $T_1, T_2, \cdots, T_t$ denote $t$ independent inverse leverage score weighted random spanning trees.
We we choose $t = C\epsilon^{-2} \log n$ then with probability $1-1/n^{\Omega(C)}$,
\begin{align*}
(1-\epsilon)  \LL_G \preceq \frac{1}{t} \sum_{i=1}^t \LL_{T_i} \preceq (1+\epsilon) \LL_G.
\end{align*}
where $\LL_G$ is the Laplacian matrix of the graph $G$.
\end{corollary}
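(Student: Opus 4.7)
The plan is to apply Theorem~\ref{theorem:main_contribution} directly to the joint distribution over the $t$ independent random spanning trees, after reducing the spectral sparsification claim to a matrix concentration statement in the usual way.

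First I would set up the matrix variables. For each pair $(i,e)$ with $i \in [t]$ and $e \in E$, let $\xi_{i,e}$ denote the indicator that edge $e$ belongs to tree $T_i$, and define
\[
\YY_{i,e} \defeq \frac{1}{t} \cdot \frac{w_e}{\tau_e} \cdot \LL_G^{\pinv/2}\, \bb_e \bb_e^{\trp}\, \LL_G^{\pinv/2},
\]
where $\bb_e$ is the signed edge indicator vector, $w_e$ the edge weight, and $\tau_e = w_e \bb_e^{\trp} \LL_G^{\pinv} \bb_e$ the leverage score. Working in $\im(\LL_G)$ (identified with $\R^{n-1}$ so that the hypothesis $d = n-1$ fits Theorem~\ref{theorem:main_contribution}), one has $\sum_{i,e} \xi_{i,e} \YY_{i,e} = \frac{1}{t}\sum_{i=1}^t \LL_G^{\pinv/2} \LL_{T_i} \LL_G^{\pinv/2}$, and since $\Pr[\xi_{i,e}=1] = \tau_e$ for an inverse leverage score weighted tree, one also gets $\expec{}{\sum_{i,e} \xi_{i,e} \YY_{i,e}} = \PPi_{\im(\LL_G)}$, so $\mu_{\min} = \mu_{\max} = 1$. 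A direct computation gives $\norm{\YY_{i,e}} = \tfrac{1}{t} \cdot \tfrac{w_e}{\tau_e} \cdot \bb_e^{\trp}\LL_G^{\pinv}\bb_e = \tfrac{1}{t}$, so we may take $R = 1/t$.

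Next I would verify the hypotheses of Theorem~\ref{theorem:main_contribution} on the joint distribution $\mu$ of $(\xi_{i,e})_{i\in[t], e \in E}$. Each individual spanning tree distribution is $(n-1)$-homogeneous and is strongly Rayleigh (Burton--Pemantle), so by the list of examples in Section~\ref{sec:iindeplist} it is $\ell_\infty$-independent with a universal constant $D_0 = O(1)$. Because the trees are drawn independently, $\mu$ is the product of $t$ such distributions; this product is $t(n-1)$-homogeneous, and its pairwise influence matrix $\calI_\mu^{\Lambda}$ is block diagonal across the $t$ copies for every conditioning set $\Lambda$, so $\norm{\calI_\mu^{\Lambda}}_{\infty}$ equals the maximum of the block-wise influence norms, which is at most $D_0$. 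Hence $\mu$ is $\ell_\infty$-independent with parameter $D = O(1)$.

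Finally I would plug the parameters into Theorem~\ref{theorem:main_contribution} with $\delta = \epsilon$, giving
\[
\prob{}{\lambda_{\max}\!\left(\tfrac{1}{t}\sum_i \LL_G^{\pinv/2}\LL_{T_i}\LL_G^{\pinv/2}\right) \geq 1+\epsilon}
 + \prob{}{\lambda_{\min}(\cdot) \leq 1-\epsilon}
\leq 2n \exp\!\left(-\frac{\epsilon^2 t}{O(R D^2)}\right)
= 2n \exp\!\left(-\Omega(\epsilon^2 t)\right),
\]
and choosing $t = C\epsilon^{-2}\log n$ yields failure probability $n^{-\Omega(C)}$. Conjugating the spectral inequality by $\LL_G^{1/2}$ (which is an isometry on $\im(\LL_G)$, and both sides vanish on $\ker(\LL_G) = \ker(\LL_{T_i})$) converts this into $(1-\epsilon)\LL_G \preceq \tfrac{1}{t}\sum_i \LL_{T_i} \preceq (1+\epsilon)\LL_G$, which is the claim.

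The main obstacle is the $\ell_\infty$-independence of the spanning tree distribution itself: this is not a routine calculation from Kirchhoff's formula, and I would rely on it via the cited fact that strongly Rayleigh distributions, and in particular determinantal point processes like random spanning trees, belong to the catalogue of $\ell_\infty$-independent families compiled in Section~\ref{sec:iindeplist}. Given that input, the block-diagonal argument for products and the matrix bookkeeping above are straightforward.
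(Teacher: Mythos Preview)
Your proposal is correct and follows essentially the same route as the paper's proof sketch: normalize by $\LL_G^{\pinv/2}$, use that random spanning trees are strongly Rayleigh (hence SCP, hence $\ell_\infty$-independent with parameter $2$ via Proposition~\ref{prop:SCP_iindep}), observe that the influence matrix of an independent product is block-diagonal so the concatenated $t(n-1)$-homogeneous vector inherits $D=O(1)$, and apply Theorem~\ref{theorem:main_contribution} with $R=1/t$ and $\mu_{\min}=\mu_{\max}=1$. Two cosmetic remarks: the strongly Rayleigh property of spanning trees is due to \cite{BBL09a} rather than Burton--Pemantle, and in your displayed exponent the extra $t$ in the numerator is a slip (the factor $t$ enters because $R=1/t$ sits in the denominator), though your final $\exp(-\Omega(\epsilon^2 t))$ is correct.
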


We also show that it is possible to relax the homogeneity assumption if we assume that the distribution is {\tsiindependent} as stated in the following corollary.
 \begin{corollary}\label{cor:homogenization}
 Let $\xi_1,\xi_2,\dots,\xi_n\in \{0,1\}$ be $n$ random variables with some joint distribution $\mu$ which is {\tsiindependent} with parameter $D$. Let $\YY_1,\dots,\YY_n\in \R^{d\times d}$ be a collection of symmetric matrices such that $0\preceq \YY_i\preceq R\II,\ \forall i\in [n]$ for some $R > 0$.
 Define $\mu_{\min}:= \lambda_{\min} \left(\expec{\xxi \sim \mu}{\sum_{i} \xi_i \YY_i}\right)$ and $\mu_{\max}:= \lambda_{\max} \left(\expec{\xxi \sim \mu}{\sum_{i} \xi_i \YY_i}\right)$. Then for any $\delta\in [0,1]$
 \[
   \prob{}{\lambda_{\min}\left( \sum_i \xi_i \YY_i - \expec{\xxi \sim \mu}{ \sum_i \xi_i \YY_i} \right) \leq \delta \mu_{\min}} \leq d \exp\left({-\frac{\delta^2 \mu_{\min}}{O(RD^2)} } \right) \quad \mbox{and}
 \]
 \[
   \prob{}{\lambda_{\max}\left( \sum_i \xi_i \YY_i - \expec{\xxi \sim \mu}{ \sum_i \xi_i \YY_i} \right) \geq \delta \mu_{\max}} \leq d \exp\left({-\frac{\delta^2\mu_{\max}}{O(RD^2)} }\right).
   \]
\end{corollary}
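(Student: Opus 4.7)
The plan is to reduce Corollary~\ref{cor:homogenization} to Theorem~\ref{theorem:main_contribution} through a standard homogenization argument: pad the distribution $\mu$ with auxiliary ``complement'' variables to produce a homogeneous distribution $\muhom$ on $\{0,1\}^{2n}$, show that two-sided $\ell_\infty$-independence of $\mu$ is inherited as one-sided $\ell_\infty$-independence by $\muhom$ (with only a constant-factor loss), and then invoke Theorem~\ref{theorem:main_contribution} on $\muhom$ with matrices that are zero on the padded coordinates.

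\textbf{Homogenization.} I would define $\muhom$ on $\{0,1\}^{2n}$ by coupling $\xxi \sim \mu$ with auxiliary coordinates $\xi_{n+i} := 1 - \xi_i$ for each $i \in [n]$. Then $\muhom$ is $n$-homogeneous. Setting $\YYtil_i := \YY_i$ for $i \in [n]$ and $\YYtil_i := \veczero$ for $i \in [n+1,2n]$, we have $\sum_{i=1}^{2n} \xi_i \YYtil_i = \sum_{i=1}^n \xi_i \YY_i$ almost surely under $\muhom$, so the quantities $\mu_{\min}$ and $\mu_{\max}$ are preserved verbatim, as is the spectral bound $\YYtil_i \pleq R\II$.

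\textbf{Transferring independence.} The key step is to argue that $\muhom$ is one-sided $\ell_\infty$-independent with parameter $2D$. Any conditioning ``$\xi_\ell = 1$ for all $\ell \in \Lambda^*$'' in $\muhom$ is equivalent to a two-sided conditioning $\sigma_\Lambda$ in $\mu$ on $\Lambda := (\Lambda^* \cap [n]) \cup \{\ell - n : \ell \in \Lambda^* \cap [n+1,2n]\}$, where the coordinates inherited from $[n+1,2n]$ are fixed to $0$ instead of $1$. For $i,j \in [n] \setminus \Lambda$, the elementary identity
\[
\Pr[\xi_j = 1 \mid \xi_i = 1, \sigma_\Lambda] - \Pr[\xi_j = 1 \mid \sigma_\Lambda] = \Pr[\xi_i = 0 \mid \sigma_\Lambda]\cdot \Psi_\mu^{\sigma_\Lambda}(i,j)
\]
yields $|\calI_{\muhom}^{\Lambda^*}(i \to j)| \leq |\Psi_\mu^{\sigma_\Lambda}(i,j)|$. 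For $j \in [n+1,2n]$ the coupling $\xi_j = 1 - \xi_{j-n}$ gives $\calI_{\muhom}^{\Lambda^*}(i \to j) = -\calI_{\muhom}^{\Lambda^*}(i \to j-n)$, so the contribution from the padded block at most doubles the row sum. A symmetric argument handles $i \in [n+1,2n]$, where conditioning on $\xi_i = 1$ translates to conditioning on $\xi_{i-n} = 0$ in $\mu$. Summing over $j \in [2n]$, each row of $\calI_{\muhom}^{\Lambda^*}$ is bounded by $2\|\Psi_\mu^{\sigma_\Lambda}\|_\infty \leq 2D$.

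\textbf{Conclusion.} With $\muhom$ shown to be $n$-homogeneous and one-sided $\ell_\infty$-independent with parameter $2D$, I apply Theorem~\ref{theorem:main_contribution} to the tuple $(\YYtil_i)_{i \in [2n]}$ under $\muhom$. Since the sum and its expectation coincide with the original sum and its mean, the concentration inequalities transport directly to $\mu$, the only change being that $D$ is replaced by $2D$ in the exponent, which is absorbed by the $O(\cdot)$. The main obstacle is really just the transfer-of-independence step: one must handle the four cases depending on which of the blocks $[n]$ and $[n+1,2n]$ contain $i$ and $j$, and the calculation relies on the relation linking one-sided and two-sided influence via the factor $\Pr[\xi_i = 0 \mid \sigma_\Lambda]$, together with the observation that negating a coordinate negates its influence.
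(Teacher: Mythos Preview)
Your proposal is correct and follows essentially the same route as the paper: homogenize $\mu$ to an $n$-homogeneous distribution $\muhom$ on $\{0,1\}^{2n}$, show that two-sided $\ell_\infty$-independence of $\mu$ yields one-sided $\ell_\infty$-independence of $\muhom$ with parameter $2D$ (via the negation relation $\calI_{\muhom}^{\Lambda^*}(i\to j)=-\calI_{\muhom}^{\Lambda^*}(i\to j+n)$ and the translation of one-sided conditioning in $\muhom$ to two-sided conditioning in $\mu$), set the padded matrices to zero, and invoke Theorem~\ref{theorem:main_contribution}. Your explicit use of the identity $\Pr[\xi_j=1\mid \xi_i=1,\sigma_\Lambda]-\Pr[\xi_j=1\mid \sigma_\Lambda]=\Pr[\xi_i=0\mid \sigma_\Lambda]\,\Psi_\mu^{\sigma_\Lambda}(i,j)$ makes the bound $|\calI_{\muhom}^{\Lambda^*}(i\to j)|\le |\Psi_\mu^{\sigma_\Lambda}(i,j)|$ more transparent than the paper's presentation, but the argument is the same.
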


Compared to \cite{KS18}, we need several innovations to make our proofs go through. The first is to adopt different overall induction hypothesis which departs from the martingale framework of \cite{KS18} -- and crucially, we find a more strongly ``direction-aware'' induction hypothesis using trace matrix exponentials.
This accounts for our ability to shave off a $\log(k)$ factor compared to this work.
Our other innovations relate to the introduction of {\iindependence} which is a much more relaxed condition on a probability distribution than the Strongly Rayleigh property.
We show that {\iindependence} both provides a strong stability property for the distribution under conditioning.
We also show that {\iindependence} for $k$-homogeneous distributions is equivalent to a notion we call \emph{{\amindependence}} (Definition~\ref{def:amindependence}).
Average multiplicative independence allows us fine-grained direction-aware control over changes to the distribution, which is crucial to obtaining our matrix concentration bound.

Our result is much more broadly applicable than \cite{KS18}, as {\iindependence} has recently been established for a wide range of distributions. Furthermore, through a homogenization argument, we extend the result also to non-homogeneous distributions that fulfill a stronger notion of {\iindependence}. We remark that the bounds given in Theorem~\ref{theorem:main_contribution} do not explicitly depend on $k$ but only on $\expec{\xxi \sim \mu}{\sum_{i} \xi_i \YY_i}$. By virtue of this, our Chernoff-like bound generalizes naturally and extends without loss to the non-homogeneous case. In contrast, McDiarmid type bounds can incur a large loss when applied to non-homogeneous functions through homogenization.

Our proof of Theorem~\ref{theorem:main_contribution} is simple but substantially different from other proofs of concentration for dependent variables, and we believe it highlights a conceptually important point: Many recent works have shown McDiarmid-type bounds for concentration of $1$-Lipschitz functions using the Herbst argument, but these are fundamentally weaker than Chernoff-type bounds in many regimes.
The difference is especially stark in the matrix setting, where our bounds are exponentially stronger than similar for concentration of $1$-Lipschitz functions for related distributions.
We present a brief overview of the proof in Section~\ref{sec:proof_overview} and the complete proof in Section~\ref{sec:proof_complete}.

Paulin \cite{P14} considered the interdependence matrix $\AA$ of a distribution $\mu$ which is entry-wise bigger than the influence matrix $\calI_\mu^\Lambda$ for every $\Lambda \subset[n]$. Under the hypothesis that $\norm{\AA}_1 <1$ and $\norm{\AA}_\infty \leq 1$ the author proved a Chernoff bound statement. 
This result stops holding when one of the conditions $\norm{\AA}_1 <1$ or $\norm{\AA}_\infty \leq 1$ do not apply; our approach is instead much more broadly applicable since it adapts to the level of dependence of the random variables.

\paragraph{Further related work.}
A recent manuscript by Anari et al. \cite{AJKPV21} introduced a notion of \emph{entropic independence}, which is stronger than spectral independence and leads to stronger mixing time results when it applies -- including for Ising models in some regimes and for so-called fractionally log-concave polynomials.
Thus spectral independence is implied by both {\tsiindependence} and entropic independence, but the relation between the latter two is unclear.
A manuscript by Eldan and Shamir \cite{ES20} studies a notion of log-concave distributions over the hypercube that differs from the one discussed above and they use it to prove bounds on the variance of 1-Lipschitz functions of samples from the distribution.
Both the notion and the technique is substantially different -- and the implications for concentration are more limited as they rely on Chebyshev's inequality.

Expander Chernoff bounds \cite{G98} show that using an expander graph, we can take multiple ``pseudo-independent'' samples from a distribution over its vertices, by first sampling one vertex at random and then taking later samples using the trajectory of a random walk starting from this initial point. In particular, the work of \cite{G98} and later refinements showed that when we associate a scalar value to each vertex, the samples obtained using the random walk will exhibit concentration around the mean.
This provides a randomness-efficient way to obtain concentration.
This phenomenon was generalized to matrix-valued functions of the vertices by Garg et al. \cite{GLSS18a}.
Their result is closer to a direction-unaware matrix Chernoff than the classic direction-aware Chernoff bound of \cite{T12}.

\paragraph{Discussion and open questions.}
Recently, there has been a flurry of work on mixing times for random walks associated with various distributions over discrete, finite probability spaces.
This has led to the development of a broad array of new tools for understanding such distributions, including local spectral expansion, spectral independence, and entropic independence.
We view our work here as an early step toward an associated theory of concentration -- and we highlight that understanding the overall picture for concentration is likely to require moving beyond standard arguments from modified log-Sobolev or Poincar\'e inequalities.
%
%
While we have studied matrix Chernoff bounds, it is likely that many other matrix concentration bounds can be established from (two-sided) {\iindependence}. 

Finally, we wish to point out another basic open question.
We define a \emph{McDiarmid bound} for a $k$-homogeneous distribution $\calD$ over $\setof{0,1}^n$ as any statement along the lines of
\begin{quote}
  ``For any function $f : \setof{0,1}^n \to \R$ that is $1$-Lipschitz w.r.t. Hamming distance, if $\xxi \sim \calD$, then $\abs{f(\xxi) - \E{f(\xxi)}} \leq O(\sqrt{k \log(1/\delta)})$ with probability at least $1-\delta$.''
\end{quote}
We define a \emph{Chernoff bound} for a distribution $\calD$ over $\setof{0,1}^n$ as any statement along the lines of
\begin{quote}
  ``For any linear function linear $f(\xxi) = \sum_i c_i \xxi(i)$ with $c_i \in [0,1]$ for all $i$,
  \[
    \abs{f(\xxi) - \E{f(\xxi)}} \leq O(\sqrt{\E{f(\xxi)} \log(1/\delta)})
  \]
  with probability at least $1-\delta$, at provided $\log(1/\delta) \leq \E{f(\xxi)}$.''
\end{quote}
We could hope to combine the best features of both, and show $\abs{f(\xxi) - \E{f(\xxi)}} \leq O(\sqrt{\E{f(\xxi)} \log(1/\delta)})$ for a non-negative 1-Lipschitz function.
We are not aware of such a statement or a counterexample existing in the literature. 

\paragraph{Organization of the paper.}
In Section~\ref{sec:proof_overview}, we give an overview of the proof of Theorem~\ref{theorem:main_contribution}, our main theorem.
In Section~\ref{sec:iindeplist}, we list some known families of distributions with bounded {\iindependence}.
In Section~\ref{sec:preliminaries}, we state preliminaries and in Section~\ref{sec:proof_complete} we give the full proof of Theorem~\ref{theorem:main_contribution}.

\section{Overview of the proof}\label{sec:proof_overview}
Let $\mu$ be a $k$-homogeneous distribution. A natural way to prove Chernoff bound is to establish a bound on the moment generating function of the random variables. 
For matrix-valued random variables, we can deduce concentration from a bound on the expected trace of a matrix exponential of the variable.
We prove that for a certain constant $c\geq 1$, $$ \tr\left(\expec{\xxi\sim \mu}{e^{\theta \left( f(\xxi) - \expec{\xxi \sim \mu}{f(\xxi)} \right) }} \right) \leq \tr\left(e^{c\theta^2\expec{\xxi \sim \mu}{f(\xxi)}}\right) \quad \forall \theta \in \left(-\frac{1}{c},\frac{1}{c}\right),$$
for all matrix functions $f$ of the form $f(\xxi) =\sum_i\xi_i\YY_i$ with $0\preceq \YY_i\preceq\II$ for all $i$. Given this bound, the theorem follows from the standard Chernoff argument and the constant $c$ defines the quality of the bound. We show by induction over $k$ that a slightly stronger statement holds:
for every symmetric matrix $\HH$,
$$ \tr\left( \expec{\xxi \sim \mu}{e^{\HH + \theta \sum_{i}\xi_i \YY_i}} \right) \leq \tr \left( e^{\HH+  (\theta+c\theta^2)   \expec{\xxi \sim \mu}{\sum_{i}\xi_i \YY_i} }\right) .$$
Since the distribution $\mu$ is $k$-homogeneous, taking the expectation w.r.t. $\mu$ is equivalent to first drawing a variable $v\in[n]$ with a distribution proportional to the marginals of $\mu$ and then draw the other $(k-1)$ variables conditional on the first one.  Once we fix the first chosen variable $v$, the remaining variables are sampled based on a distribution which is  $(k-1)$-homogeneous and inherits {\iindependence} from $\mu$.  So, for each fixed $v$, we apply the inductive hypothesis to the distribution that samples the remaining $(k-1)$ variables conditional on $v$. In order to simplify the notation we introduce the quantity  $\ZZ_v := \expec{\xxi \sim\mu}{\sum_{i} \xi_i \YY_i|\xi_v =1} - \expec{\xxi\sim \mu}{\sum_{i} \xi_i \YY_i} $ that describes how much the expectation of the sum function changes when we know that our outcome must contain $v$.  Using the Golden-Thompson trace inequality and some linear algebra we then obtain:
$$ \tr\left( \expec{\xxi \sim \mu}{e^{\HH + \theta \sum_{i}\xi_i \YY_i}} \right) \leq \tr \left( e^{\HH+  (\theta+c\theta^2)   \expec{\xxi \sim \mu}{\sum_{i}\xi_i \YY_i} }\right) \norm{\expec{v\in [n]}{ e^{(\theta+c\theta^2) \ZZ_v - c\theta^2 \YY_v}}} $$
\par
In order to conclude our proof, it remains to show $\expec{v\in [n]}{ e^{(\theta+c\theta^2) \ZZ_v - c\theta^2 \YY_v}}  \preceq \II$ for some fixed $c$. In order to prove this, we show that there exist two constants $\Dinf$ and $\Dam$ such that: (a) $\ZZ_v\preceq \Dinf\II\ \forall v$ and (b) $\expec{v}{\ZZ_v} \preceq \Dam \expec{v}{\YY_v}$. The former condition is satisfied when the distribution $\mu$ is  {\iindependent} and the later is satisfied when $\mu$  has a property that we call {\amindependence}.
These properties require that the marginals do not change too much when we condition on a single variable, but quantify this in different ways. {\iindependence} requires that for every variable we may condition on,
we can bound the sum of the absolute values of changes in marginals of all other variables.
On the other hand, {\amindependence} imposes a multiplicative bound on the change in the marginal of each variable under conditioning, but averaged over the different variables we may condition on.

Surprisingly, we show that, in the case of $k$-homogeneous distributions, the {\iindependence} and {\amindependence} are equivalent. Hence if $\mu$ is {\iindependent} with parameter $\Dinf$ then it is also {\amindependent} with parameter $\Dam = \Dinf$. Finally we show that conditions (a) and (b) imply $\expec{v\in [n]}{ e^{(\theta+c\theta^2) \ZZ_v - c\theta^2 \YY_v}}  \preceq \II$ for $c=1/O(\Dinf \Dam)$.  Hence if $\mu$ is {\iindependent} with parameter $D$, it suffices to choose $c=1/O(D^2)$ that gives the factor $1/O(D^2)$ at the exponent in the bound. 


\section{Overview of {\iindependent} Distributions}
\label{sec:iindeplist}

\ifsodaversion

In this section, we give a list of some notable distributions with bounded {\iindependence}.

\subsection{Stochastic covering property}
Stochastic covering property \cite{PP14} is a form of negative dependence that is weaker than Strongly Rayleigh property.
\begin{definition} 
Let $(\xi_1,\dots,\xi_n)\in \{0,1\}^n$. We say that the distribution of $\xxi$ has the stochastic covering property (SCP) if for every set of indexes $\tau \subset [n]$ and for every index $v\in[n]$ the following holds. Let $\xxi'\in \{0,1\}^{n-|\tau|-1}$ be the distribution on entries $[n]\setminus (\tau \cup \{v\})$ of $\xxi$ conditional on $\xi_i=1$ for $i\in \tau$ and $\xi_v =0$. Let $\xxi''$ be the distribution of the same entries of $\xxi$ conditional on $\xi_i=1$ for $i\in \tau$ and $\xi_v=1$.
Then, there exists a coupling between $\xxi'$ and $\xxi''$ (i.e. a joint distribution of the two vectors), s.t. in every outcome of the coupling the value of $\xxi'$ can be obtained from the value of $\xxi''$ by either changing a single from 0 to 1 or by leaving all entries unchanged.
\end{definition}
\begin{proposition}\label{prop:SCP_iindep}
A $k$-homogeneous distribution that satisfy SCP is {\tsiindependent} with parameter 2.
\end{proposition}
The proof of the proposition can be found in Appendix~\ref{sec:proof_appendix}.

\subsection{Sector stable distributions}
\begin{definition}
Given a density $\mu:\binom{[n]}{k} \mapsto \R_{\geq 0}$ the multivariate generating polynomial $g_\mu$ is defined as 
$$ g_\mu (z_1,\dots,z_n) := \sum_S \mu(S) \prod_{i\in S} z_i.  $$
\end{definition}

\begin{definition}[Sector stability] We name the open sector of aperture $\alpha\pi$ the set of points:
$$ \Gamma_\alpha := \{\exp(x+i y)| x\in \R , y\in \left(-\alpha \frac{\pi}{2},\alpha \frac{\pi}{2}\right)\}. $$
We say that a polynomial $g(z_1,\dots,z_n)$ is $\Gamma_\alpha$-\emph{stable} if 
\[
(z_i,\dots,z_n) \in \Gamma_\alpha \implies g(z_1,\dots,z_n)\neq 0.
\]
\end{definition}

\begin{theorem}
Let $\mu:2^{[n]}\mapsto \R_{\geq0}$ be a distribution. If the generating multi-affine polynomial $g_\mu \in \R[z_1,\dots,z_n]$ is $\Gamma_\alpha$-stable with $\alpha\leq 1$, then $\mu$ is {\iindependent} with parameter $2/\alpha$.
\end{theorem}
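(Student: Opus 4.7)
The plan is to translate the claim into the language of multi-affine generating polynomials, reduce to the unconditioned case using stability-preservation under conditioning, and then extract the row-sum bound from the zeros of a univariate slice of $g_\mu$.

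First I would reduce to $\Lambda = \emptyset$. The distribution conditioned on $\xi_\ell = 1$ for every $\ell \in \Lambda$ has generating polynomial proportional to $\partial_\Lambda g_\mu(z)\big|_{z_\ell = 0,\,\ell \in \Lambda}$, and both partial differentiation and restriction to $z_\ell = 0$ preserve $\Gamma_\alpha$-stability of multi-affine polynomials (via a sector-version of Gauss--Lucas and a Hurwitz-type limiting argument, respectively). So it suffices to prove $\sum_{j \in [n]} \abs{\calI_\mu(i \to j)} \leq 2/\alpha$ under the assumption that $g_\mu$ itself is $\Gamma_\alpha$-stable. Normalizing so that $g_\mu(\mathbf 1) = 1$, multi-affinity yields the explicit formulas
\[
\calI_\mu(i \to j) = \frac{\partial_i \partial_j g_\mu(\mathbf 1)}{\partial_i g_\mu(\mathbf 1)} - \partial_j g_\mu(\mathbf 1) \quad (j \neq i), \qquad \calI_\mu(i \to i) = 1 - \partial_i g_\mu(\mathbf 1).
\]

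Second, I would collapse the signed row-sum into a single logarithmic derivative. Fix $i$, set $\epsilon_j := \sign(\calI_\mu(i \to j)) \in \{-1, 0, 1\}$ for $j \neq i$, and let $\vec v := \sum_{j \neq i} \epsilon_j \vec e_j$. A direct computation using multi-affinity gives
\[
\sum_{j \neq i} \abs{\calI_\mu(i \to j)} = \partial_t \log \partial_i g_\mu(\mathbf 1 + t \vec v)\big|_{t=0} \;-\; \partial_t \log g_\mu(\mathbf 1 + t \vec v)\big|_{t=0}.
\]
Both $\partial_i g_\mu$ and $g_\mu$ are $\Gamma_\alpha$-stable (differentiation preserves stability), so the univariate polynomials $P(t) := \partial_i g_\mu(\mathbf 1 + t \vec v)$ and $Q(t) := g_\mu(\mathbf 1 + t \vec v)$ are nonzero whenever $\mathbf 1 + t \vec v$ sends every coordinate into $\Gamma_\alpha$. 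For $\alpha \leq 1$ and $\vec v \in \{-1, 0, 1\}^n$ the set of such $t$ contains an explicit region around the origin whose complement is where all zeros of $P$ and $Q$ must lie.

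Third, factoring $P(t) = P(0)\prod_k(1 - t/p_k)$ and $Q(t) = Q(0)\prod_k(1 - t/q_k)$ gives $\partial_t \log P|_{t=0} = -\sum_k 1/p_k$ and similarly for $Q$. The elementary geometric estimate $\abs{\mathrm{Re}(1/r)} \leq \sec(\alpha \pi / 2) \abs{1/r}$ for $r$ outside $\Gamma_\alpha$, together with the nonnegativity of the coefficients of $g_\mu$ (which pins down the signs of the linear terms), bounds the difference of the two log derivatives by $2/\alpha - (1 - \partial_i g_\mu(\mathbf 1))$; adding back $\abs{\calI_\mu(i \to i)} = 1 - \partial_i g_\mu(\mathbf 1)$ then yields the claimed bound of $2/\alpha$. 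The main obstacle is precisely this last step: obtaining the sharp constant $2/\alpha$ rather than some weaker $C/\alpha$ requires carefully relating the admissible region for $t$ to the cone $\Gamma_\alpha$ in each coordinate and exploiting cancellation between the two logarithmic derivatives. I expect the cleanest route is to mirror the log-derivative estimates developed by Anari--Liu--Oveis Gharan and by Alimohammadi et al.\ for spectral independence from sector stability, adapted here to control a single row sum (one-sided $\ell_\infty$) rather than the top eigenvalue of the influence matrix.
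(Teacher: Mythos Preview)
The paper does not actually prove this theorem: it simply cites \cite{AASV21a}, Theorem~51 and Corollary~59. So there is no ``paper's own proof'' to compare against; what follows is a comparison against the cited argument.

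Your outline is the right one and is essentially the approach of \cite{AASV21a}: reduce to $\Lambda=\emptyset$ via closure of sector stability under $\partial_j$ and $z_j\to 0$, rewrite the row sum of the influence matrix as a difference of logarithmic derivatives of two univariate restrictions, and control those via the location of the roots. Your derivative identities are correct.

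There is, however, one concrete gap in how you set things up. With $\vv\in\{-1,0,1\}^n$ containing both signs, the zero-free region for $P(t)=\partial_i g_\mu(\vecone+t\vv)$ and $Q(t)=g_\mu(\vecone+t\vv)$ is $\{t:1+t\in\Gamma_\alpha\}\cap\{t:1-t\in\Gamma_\alpha\}$, an intersection of two shifted sectors (a lens, for small $\alpha$). Roots merely lying outside this lens do not give usable control on $\sum_k 1/p_k$, and the inequality you quote, $\abs{\mathrm{Re}(1/r)}\leq\sec(\alpha\pi/2)\abs{1/r}$, is trivially true and does no work. The way \cite{AASV21a} handles this is to first remove the mixed signs: for each $j$ with $\epsilon_j=-1$, apply the multi-affine flip $g\mapsto z_j\, g(\ldots,1/z_j,\ldots)$, which preserves $\Gamma_\alpha$-stability (the sector is closed under inversion) and replaces $\xi_j$ by $1-\xi_j$, hence negates $\calI_\mu(i\to j)$. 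After this reduction all off-diagonal influences are nonnegative, so one may take $\vv=\sum_{j\neq i}\ee_j$, and the univariate restrictions are genuinely $\Gamma_\alpha$-stable in the single variable $z$. The root-location estimate then goes through cleanly and yields the constant $2/\alpha$. I would rework your second and third steps along these lines rather than trying to extract the sharp constant from the lens-shaped region directly.
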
 
See \cite{AASV21a}, Theorem~51 and Corollary~59 for the proof of the theorem.

\subsection{Gibbs Distributions of Spin Systems}

Spin systems capture many combinatorial models of interest in statistical physics, the Gibbs distribution is a probability distribution defined over the collection of all the configurations of a spin system. 
\paragraph{The monomer-dimer model}
Given a graph $G = (V,E)$ a matching $M\subseteq E$ of $G$ is set of edges without common vertexes. Let $G = (V,E)$ be a graph and $\lambda > 0$ be a real parameter, the Gibbs distribution $\mu$ for the monomer-dimer model with fugacity $\lambda$ is defined on the collection $\calM$ of all matchings of $G$ where $$\mu(M ) :=  \frac{\lambda^{|M|}}{Z}$$ and $Z = \sum_{M \in \calM} \lambda^{|M|}$.
\begin{theorem}[See proof of Theorem~6.1 from \cite{CLV21}]
Fix an integer $\Delta\geq 3$ and a real number $\lambda >0$. Then for every graph $G=(V,E)$ with maximum degree at most $\Delta$ and $|E|=m$ edges, for every $\Lambda\subseteq E$, for every feasible boundary condition $\tau:\Lambda\mapsto \{0,1\}$, the Gibbs distribution of the monomer-dimer model with fugacity $\lambda$ is {\tsiindependent} with parameter $1+\min\{2\lambda\Delta,s\sqrt{1+\lambda\Delta}\}$
\end{theorem}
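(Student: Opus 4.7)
The plan is to bound $\norm{\Psi_\mu^{\sigma_\Lambda}}_\infty = \max_e \sum_f |\Psi_\mu^{\sigma_\Lambda}(e,f)|$ uniformly over all feasible pinnings $\tau$. The first move is a reduction to the unconditioned case: conditioning a monomer--dimer Gibbs distribution on a boundary condition $\tau:\Lambda\to\{0,1\}$ yields another monomer--dimer Gibbs distribution with the same fugacity $\lambda$ on a modified graph with maximum degree still at most $\Delta$. Indeed, pinning $\xi_f = 1$ for $f=\{u,v\}$ corresponds to removing $f$ together with every edge sharing a vertex with $f$, while pinning $\xi_f = 0$ simply deletes $f$. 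Thus it suffices to prove $\norm{\Psi_\mu}_\infty \le 1 + \min\{2\lambda\Delta, s\sqrt{1+\lambda\Delta}\}$ for the unconditioned monomer--dimer Gibbs measure on an arbitrary graph of maximum degree at most $\Delta$.

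Next, fix an edge $e$. The diagonal contribution is exactly $|\Psi_\mu(e,e)| = 1$, which accounts for the additive constant in the bound. For $f \ne e$, the plan is to use the classical self-avoiding walk representation of correlations in the monomer--dimer model, which goes back to Heilmann--Lieb and was cast in a tree-of-self-avoiding-walks form by Weitz. In this representation, $\Psi_\mu(e,f)$ can be written as a signed sum over self-avoiding paths in the line graph from $e$ to $f$, with each term a product of local edge activities controlled by $\lambda$. Summing $|\Psi_\mu(e,f)|$ over $f$ therefore reduces to summing absolute contributions of all self-avoiding walks emanating from $e$ in the tree $T_{\mathrm{SAW}}(e)$ of self-avoiding walks rooted at $e$.

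For the simpler bound $1 + 2\lambda\Delta$, the strategy is a one-step coupling argument: couple the two conditional distributions (with $\xi_e = 0$ versus $\xi_e = 1$) so that disagreements propagate only along paths starting at the two endpoints of $e$. Each step of such a path branches into at most $\Delta - 1$ edges incident to the current vertex, and the contribution per step is bounded by $\lambda/(1+\lambda \cdot 0) = \lambda$, so a crude geometric bound in $\lambda(\Delta-1)$, combined with the two endpoints of $e$, gives the additive $2\lambda\Delta$. For the sharper bound $1 + s\sqrt{1+\lambda\Delta}$, the strategy is to avoid the naive path sum and instead exploit an $\ell_2$-style bound on the single-step influence operator on $T_{\mathrm{SAW}}(e)$: one writes the total row sum as $\vecone^{\trp} \mathbf{M}\, \vecone$ for a suitable weighted non-negative operator $\mathbf{M}$, and bounds this via Cauchy--Schwarz against a potential function that diagonalizes the tree recursion. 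Because the monomer--dimer partition function is real-rooted (Heilmann--Lieb), the resulting operator is symmetrizable with spectral radius of order $\sqrt{\lambda\Delta}$, which is what produces the $\sqrt{1+\lambda\Delta}$ scaling.

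The main obstacle is obtaining the square-root bound. A direct sum of absolute values over self-avoiding walks yields only a geometric series in $\lambda(\Delta-1)$, which matches the $2\lambda\Delta$ bound but cannot produce the improved $\sqrt{1+\lambda\Delta}$ growth in the large $\lambda\Delta$ regime. To recover that scaling one must replace the crude per-step triangle inequality with a spectral estimate on the symmetrized single-step operator, chosen so that Cauchy--Schwarz becomes essentially tight, and then propagate this estimate through the tree recursion using the correct potential. This is the technically delicate step and is precisely the argument carried out in the proof of Theorem~6.1 of \cite{CLV21}, which is exactly the statement this theorem invokes.
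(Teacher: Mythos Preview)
The paper does not provide a proof of this theorem: it is quoted as a known result, with the citation ``See proof of Theorem~6.1 from \cite{CLV21}'' standing in for the argument. There is therefore nothing in the present paper to compare your proposal against; the only benchmark is the original proof in \cite{CLV21}.

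Your high-level outline is in the right spirit for that proof --- the reduction to the unpinned case via the observation that conditioning a monomer--dimer measure yields another monomer--dimer measure on a subgraph of the same maximum degree is correct, as is the isolation of the diagonal term $|\Psi(e,e)|=1$. The self-avoiding-walk (Godsil/Weitz-style) representation is indeed the engine behind the off-diagonal bound in \cite{CLV21}, and both regimes (a coupling/path-counting argument for the $2\lambda\Delta$ term and a more refined estimate for the $\sqrt{1+\lambda\Delta}$ term) are present there. However, several details of your sketch are imprecise or do not match the actual argument: the per-step contribution in the path expansion is not the naive $\lambda$ you write but a ratio of matching polynomials bounded via the Heilmann--Lieb recursion, and the square-root improvement in \cite{CLV21} is obtained not by a Cauchy--Schwarz/spectral estimate on a symmetrized operator but by a direct potential-function analysis of the recursion on the SAW tree that exploits the quadratic nature of the Heilmann--Lieb recurrence. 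Your description of the obstacle is accurate, but the resolution you propose is not the one actually used. Since the present paper only cites the result, none of this affects the paper itself.
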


\paragraph{Ising/Potts model}
Consider a graph $G=(V\cup \partial V,E\cup \partial E)$ of degree at most $\delta$, where $\partial E$ contains the edges between $V$ and $\partial V$. Let $q\geq 2$ be a positive integer, $\xi \in [q]^{\partial V}$ a vector of boundary conditions, $\AA_{xy}\in \R^{q\times q}$ for $\{x,y\} \in E \cup \partial E$ a collection of matrices representing the nearest neighbour interactions, and $\BB_x\in \R_{>0}^q$ for $x\in V$ a collection of vectors representing the external fields. A vector $\sigma \in [q]^{V}$ is called a configuration of the model and corresponds to an assignment of a label form $[q]$ for every vertex in $V$ . The Gibbs distribution the $q$-spin system is defined as the distribution $\mu$ over all the possible configurations $\sigma \in [q]^{V}$ such that
$$ \mu(\sigma) :=\frac{1}{Z_G} \prod_{\{x,y\}\in E} A_{xy}(\sigma_x,\sigma_y) \prod_{\{x,y\} \in \partial E, x\in V,y\in \partial V} A_{xy}(\sigma_x,\xi_y)\prod_{x\in V}B_x(\sigma_x), $$
where $Z_G$ is a normalization term such that $\sum_{\sigma\in [q]^V}\mu(\sigma)=1$.\\

The Gibbs distribution is not defined on $\setof{0,1}^n$ but we can encode a configuration $\sigma \in [q]^V$ with a binary vector $\xxi \in \setof{0,1}^{|V| q}$ such that $\xi(v,i) =1 \iff \sigma(v)=i$, for all $v\in V$ and $i\in [q]$. 
\begin{theorem}[See proof of Theorem 4.13\cite{BCCPSV21}]
Let $\beta \in \R$, $\Delta \geq 3$, $q\geq 2$, and $\hh \in \R^q$. The Ising/Potts model is a particular case of $q$-state spin system where $A_{xy}(i,j)= \exp(\beta \delta_{ij})$ and $B_x(i) =\exp(h(i))$. For the Ising/Potts model, if $\beta <\max\{\frac{2}{\Delta},\frac{1}{\Delta}\ln(\frac{q-1}{\Delta})\}$ then the Gibbs distribution $\mu$ is {\iindependent} with parameter $\eta(\beta,\Delta)$. 
\end{theorem}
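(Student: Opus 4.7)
The plan is to bound the one-sided influence matrix $\calI_\mu^\Lambda$ by combining Weitz's self-avoiding walk tree reduction with a tree-recursion correlation-decay argument in the regime specified by the hypothesis on $\beta$. The encoding $\xxi \in \{0,1\}^{|V|q}$ with $\xi(v,i) = \mathbf{1}[\sigma(v) = i]$ makes the influence of a conditioning site $(v,i)$ on a target site $(w,j)$ equal to the change in the Gibbs marginal that $\sigma(w) = j$ when we additionally fix $\sigma(v) = i$ on top of the boundary condition encoded by $\Lambda$. Weitz's construction associates to $v$ a self-avoiding walk tree $T_v$ of maximum degree $\Delta$, rooted at $v$, so that the marginal at $w$ in $G$ under the conditioning equals the marginal at a designated copy of $w$ in $T_v$ with induced boundary conditions. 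This transports the problem to a total-influence estimate on $T_v$.

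On $T_v$ I would then run a recursion on depth. For each depth $d$, I would express the contribution of depth-$d$ sites to the total influence as a telescoping product of per-level contractions coming from the standard tree recursion for conditional marginals. Introducing a potential function $\Phi$ in the spirit of the analyses in \cite{CLV20,CLV21,BCCPSV21}, I would show that the $\Phi$-weighted total influence of depth-$d$ sites is bounded by $\alpha(\beta,\Delta,q)^d$ times the initial influence, where $\alpha < 1$ in the stated parameter regime and the weighting absorbs the at-most-$\Delta$ branching of $T_v$. Summing the geometric series then gives
\[
\sum_{(w,j)} \bigl|\calI_\mu^\Lambda((v,i) \to (w,j))\bigr| \;\leq\; \eta(\beta, \Delta)
\]
uniformly in $v,i,\Lambda,\sigma_\Lambda$, which is exactly the required \iindependence.

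The main obstacle is verifying the per-level contraction with $\alpha < 1$ throughout the regime $\beta < \max\{2/\Delta,\,(1/\Delta)\ln((q-1)/\Delta)\}$. For Ising ($q = 2$), the first bound comes from a direct linearization of $\tanh$ along the tree recursion that is uniform in the external field $\hh$, and the contraction can be read off from a first-derivative estimate of the recursion. For Potts with $q \geq 3$, the nonlinearity of the conditional marginal map is considerably more severe and a naive linearization fails, so the second bound requires designing a concave potential whose Jacobian converts the branching tree recursion into a genuine contraction; constructing and verifying this potential --- which is the technical heart of \cite{BCCPSV21} --- is the step I expect to be the real difficulty. Once the per-level contraction is in hand, the geometric summation and the uniformity over boundary conditions $\sigma_\Lambda$ are essentially routine and yield the claimed bound $\eta(\beta,\Delta) = O(1/(1-\alpha))$.
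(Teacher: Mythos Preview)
The paper does not contain a proof of this statement. The theorem appears in Section~\ref{sec:iindeplist}, which is purely a survey of distributions already known in the literature to have bounded \iindependence; the result is attributed entirely to \cite{BCCPSV21} via the label ``See proof of Theorem~4.13'' and no argument is reproduced or sketched. There is therefore no ``paper's own proof'' to compare your proposal against.

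Regarding the proposal on its own merits: your outline is sound for the Ising case $q=2$, where Weitz's self-avoiding walk tree does compute marginals exactly and a potential-weighted tree recursion yields the per-level contraction you describe. The gap is in the Potts case $q\ge 3$. Weitz's SAW-tree reduction is specific to two-spin systems --- it relies on the fact that fixing a spin to one value is complementary to fixing it to the other --- and does not give exact marginals for $q\ge 3$. So your first step, ``transport the problem to $T_v$'', fails as stated for Potts. The argument in \cite{BCCPSV21} instead bounds the influence matrix directly on the graph via a coupling/Dobrushin-type contraction (propagating a one-site disagreement through the interaction and controlling the total-variation growth by the branching factor $\Delta$ times a per-edge contraction governed by $\beta$), without passing through an exact tree reduction. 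The high-level shape you anticipate --- a per-step contraction summed as a geometric series --- is correct, but the mechanism that produces it in the multi-spin regime is not the SAW tree.
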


\paragraph{List colouring}
Consider a graph $G=(V,E)$ of degree at most $\delta$. Let $q\geq \Delta+2$ be a positive integer and suppose we are given a collection of lists $(L(v))_{v\in V}$ one for each vertex of $G$ and that $L(v)\subseteq [q],\ \forall v \in V$. We call an assignment for each vertex of an element of in its list $\sigma \in \prod_{v\in V} L(v)$ a list-coloring of $G$. We say a list-coloring $\sigma$ is proper if $\sigma(u)\neq\sigma(v)\ \forall (u,v)\in E$ i.e. all neighbours are assigned to different elements. \\
Also here we use the encoding we defined for the Ising/Potts model to transform a colouring into a $\setof{0,1}^n$ vector.

\begin{theorem}[See proof of Theorem 1.1, Theorem 1.3, and Lemma~5.2 \cite{L21a}] 
Let $(G, \calL)$ be a list-coloring instance where $G = (V, E)$ is a graph of maximum degree $\Delta \leq O(1)$ and $\calL = (L(v))_{v\in V}$ is a collection of color lists of maximum length $q$. Then for some absolute constant $\epsilon \approx 10^{-5}$, if $q \geq(\frac{11}{6} -\epsilon)\Delta$ , then the uniform distribution over proper list-colorings for $(G, \calL)$ is {\iindependent} with parameter $O(1)$.
\end{theorem}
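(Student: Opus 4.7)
The plan is to bound $\sum_{(v,b)}|\calI_\mu^\Lambda((u,a)\to(v,b))|$ uniformly over the source $(u,a)$ and pinning $\Lambda$. First I would reduce the inner sum to a total-variation distance: for each vertex $v\in V$,
\[
\sum_{b\in L(v)}\big|\Pr[\sigma(v)=b\mid\Lambda\cup\{\xi(u,a)=1\}]-\Pr[\sigma(v)=b\mid\Lambda]\big|=2\,d_{TV}(\mu_v^+,\mu_v^-),
\]
where $\mu_v^\pm$ denote the marginal at $v$ under the conditional distribution with (resp.\ without) the extra conditioning $\sigma(u)=a$. Hence the whole task reduces to proving $\sum_{v\in V}d_{TV}(\mu_v^+,\mu_v^-)=O(1)$. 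Pinnings are handled uniformly: conditioning on $\Lambda$ produces another list-coloring instance $(G',\calL')$ by deleting pinned vertices and removing their colors from neighbors' lists. The hypothesis $q\ge(\tfrac{11}{6}-\epsilon)\Delta$ is tracked in the slack form $|L'(v)|\ge\deg_{G'}(v)+\alpha$ with $\alpha$ chosen so the estimates below survive every legal pinning.

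Second, I would realize $\sum_v d_{TV}(\mu_v^+,\mu_v^-)$ as the expected number of disagreement vertices in an optimal coupling of samples from the two conditional distributions, and control it by disagreement percolation. Concretely, for each simple path $u=v_0,v_1,\dots,v_\ell=v$ in $G'$, bound the probability that a disagreement at $u$ propagates along that path by $\beta^\ell$ for some contraction constant $\beta=\beta(q,\Delta)$. Then the union bound over self-avoiding walks emanating from $u$ yields
\[
\sum_v d_{TV}(\mu_v^+,\mu_v^-)\le\sum_{\ell\ge 0}\Delta(\Delta-1)^{\ell-1}\beta^\ell,
\]
which is $O(1)$ as soon as $\beta(\Delta-1)<1$.

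The main obstacle is producing a contraction constant $\beta<1/(\Delta-1)$ at the sharp threshold $q=(\tfrac{11}{6}-\epsilon)\Delta$. A naive Jerrum-style single-site coupling only yields $\beta\approx\Delta/(q-\Delta)$, which is insufficient below $q=2\Delta$. The $11/6$ barrier is crossed by using Vigoda-style flip moves together with a careful accounting of compensating multi-flip contributions along each disagreement edge; the quantitative per-edge contraction is the technical content of Lemma~5.2 of \cite{L21a}, which I would invoke as a black box. Combining this per-edge contraction with the walk summation and the pinning reduction above yields $\norm{\calI_\mu^\Lambda}_\infty=O(1)$ uniformly in $\Lambda$, proving that $\mu$ is \iindependent{} with parameter $O(1)$.
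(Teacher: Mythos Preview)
The paper does not give its own proof of this theorem; it is stated with the attribution ``See proof of Theorem 1.1, Theorem 1.3, and Lemma~5.2 \cite{L21a}'' and no argument follows. There is therefore nothing in the paper to compare your proposal against.

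Your outline is a reasonable high-level route to such a bound: reducing the row sum of $\calI_\mu^\Lambda$ to $\sum_v d_{TV}(\mu_v^+,\mu_v^-)$ and then controlling that by a path-summation of per-edge contractions is the standard template for establishing spectral or $\ell_\infty$-independence of spin systems. Since the hard step---obtaining a contraction factor $\beta<1/(\Delta-1)$ at the $(\tfrac{11}{6}-\epsilon)\Delta$ threshold---is precisely the content of the lemma you invoke as a black box, your proposal in effect coincides with the paper's treatment, namely to defer to \cite{L21a}. One caveat on the surrounding narrative: the phrase ``Vigoda-style flip moves together with a careful accounting of compensating multi-flip contributions'' describes the mixing-time analysis of the flip chain rather than the influence bound itself; the $\ell_\infty$-independence argument in \cite{L21a} is packaged as a coupling/recursion on marginals that draws on those ideas but is not literally a flip-dynamics analysis. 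This does not affect the validity of your outline, since you treat Lemma~5.2 as given.
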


\begin{theorem}[See proof of Theorem 9~\cite{CGSV21}, see also Lemma~6.1~\cite{FGYZ21}]
Let $\epsilon >0$, and suppose that $(G,\calL)$ is a list colouring instance where $G = (V, E)$ is a triangle-free graph of maximum degree $\Delta$ and $\calL = (L(v))_{v\in V}$ is a collection of color lists of maximum length $q\geq (1+\epsilon)\alpha^* \Delta +1$, $\alpha^*\approx 1.763$. Then the uniform distribution over proper list-colorings for $(G, \calL)$ is {\iindependent} with parameter $64(\frac{1}{\epsilon} +1)^2\frac{\Delta}{q}+1$.
\end{theorem}

\else 

In this section, we give a list of some notable distributions with bounded {\iindependence}.

\subsection{Stochastic covering property}
Stochastic covering property \cite{PP14} is a form of negative dependence that is weaker than Strongly Rayleigh property.
\begin{definition} 
Let $(\xi_1,\dots,\xi_n)\in \{0,1\}^n$. We say that the distribution of $\xxi$ has the stochastic covering property (SCP) if for every set of indexes $\tau \subset [n]$ and for every index $v\in[n]$ the following holds. Let $\xxi'\in \{0,1\}^{n-|\tau|-1}$ be the distribution on entries $[n]\setminus (\tau \cup \{v\})$ of $\xxi$ conditional on $\xi_i=1$ for $i\in \tau$ and $\xi_v =0$. Let $\xxi''$ be the distribution of the same entries of $\xxi$ conditional on $\xi_i=1$ for $i\in \tau$ and $\xi_v=1$.
Then, there exists a coupling between $\xxi'$ and $\xxi''$ (i.e. a joint distribution of the two vectors), s.t. in every outcome of the coupling the value of $\xxi'$ can be obtained from the value of $\xxi''$ by either changing a single from 0 to 1 or by leaving all entries unchanged.
\end{definition}
\begin{proposition}\label{prop:SCP_iindep}
A $k$-homogeneous distribution that satisfy SCP is {\tsiindependent} with parameter 2.
\end{proposition}
The proof of the proposition can be found in Appendix~\ref{sec:proof_appendix}.

\subsection{Gibbs Distributions of Spin Systems}

Spin systems capture many combinatorial models of interest in statistical physics, the Gibbs distribution is a probability distribution defined over the collection of all the configurations of a spin system. 
\paragraph{The monomer-dimer model}
Given a graph $G = (V,E)$ a matching $M\subseteq E$ of $G$ is a set of edges without common vertexes. Let $G = (V,E)$ be a graph and $\lambda > 0$ be a real parameter, the Gibbs distribution $\mu$ for the monomer-dimer model with fugacity $\lambda$ is defined on the collection $\calM$ of all matchings of $G$ where $$\mu(M ) :=  \frac{\lambda^{|M|}}{Z}$$ and $Z = \sum_{M \in \calM} \lambda^{|M|}$.
\begin{theorem}[Theorem~6.1 from \cite{CLV21}]
Fix an integer $\Delta\geq 3$ and a real number $\lambda >0$. Then for every graph $G=(V,E)$ with maximum degree at most $\Delta$ and $|E|=m$ edges, for every $\Lambda\subseteq E$, for every feasible boundary condition $\tau:\Lambda\mapsto \{0,1\}$, the Gibbs distribution of the monomer-dimer model with fugacity $\lambda$ is {\tsiindependent} with parameter $1+\min\{2\lambda\Delta,s\sqrt{1+\lambda\Delta}\}$
\end{theorem}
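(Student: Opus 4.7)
The plan is to bound, for each edge $e \in E \setminus \Lambda$ and each feasible boundary condition $\sigma_\Lambda$, the row-sum $\sum_{e' \in E \setminus \Lambda} |\PPsi_\mu^{\sigma_\Lambda}(e,e')|$ of the two-sided influence matrix. The first step is a reduction to the unconditional case: conditioning on a feasible $\sigma_\Lambda$ is equivalent to running the monomer-dimer model on a modified graph $G'$ obtained from $G$ by deleting every edge $f \in \Lambda$ with $\sigma_\Lambda(f)=0$, and by removing both endpoints (together with all incident edges) of every $f \in \Lambda$ with $\sigma_\Lambda(f)=1$. The graph $G'$ still has maximum degree at most $\Delta$, so it suffices to prove the bound for the unconditioned monomer-dimer model on an arbitrary subgraph of maximum degree $\leq \Delta$.

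The second step is to invoke the Heilmann--Lieb / Godsil self-avoiding walk tree. For a fixed edge $e=(u,v)$, the marginal of $e$ in the monomer-dimer Gibbs measure on $G'$ coincides with the corresponding marginal on a tree $T = T_{G',e}$ whose vertices are self-avoiding walks of $G'$ rooted at $e$. Each edge $e' \neq e$ of $G'$ appears in $T$ once for each self-avoiding walk from $e$ ending at $e'$, and the influence $\PPsi_\mu(e,e')$ can be decomposed as a sum of tree-influences over these occurrences. This linearization reduces the problem to a pointwise derivative bound on the tree marginal recursion
\[
p_v \;=\; \frac{\lambda}{1 + \lambda \sum_{i} (1 - p_{w_i})},
\]
where $w_1,\dots,w_d$ are the children of $v$ in $T$.

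The third step is the key decay estimate. Differentiating the recursion along a root-to-leaf path and using $0 \leq p_w \leq 1$ together with $d \leq \Delta$ shows that the partial derivative of $p_{\mathrm{root}}$ with respect to a perturbation of a marginal at depth $d$ decays as $\bigl(\lambda/(1+\lambda\Delta)\bigr)^d$ along that path. Since at most $\Delta^{d}$ walks reach a given edge at tree-depth $d$, summing yields
\[
\sum_{e'} |\PPsi_\mu(e,e')| \;\leq\; 1 + \sum_{d \geq 1} \Delta^{d} \cdot \left(\tfrac{\lambda}{1+\lambda\Delta}\right)^{d} \cdot O(1),
\]
and the geometric series evaluates to $1 + O(\lambda\Delta)$, giving the first branch $1+2\lambda\Delta$ of the minimum. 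For the refined branch $1 + s\sqrt{1+\lambda\Delta}$, I would apply Cauchy--Schwarz on the depth-$d$ contributions, weighting each walk by the square-root of its total weight under the matching polynomial, thereby trading one factor of $\lambda\Delta$ for $\sqrt{1+\lambda\Delta}$.

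The hard part is the third step: rigorously establishing the per-edge decay factor $\lambda/(1+\lambda\Delta)$ along self-avoiding walks in $T$, and then matching it against the combinatorial growth of the tree to obtain a $\sigma_\Lambda$-uniform bound. This is precisely the correlation-decay calculation at the heart of CLV21; the remaining ingredients (boundary reduction, Godsil's tree identity, and summation) are by now standard once the tree derivative estimate is in hand.
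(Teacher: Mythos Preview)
The paper does not prove this statement. It is quoted verbatim from \cite{CLV21} in Section~\ref{sec:iindeplist} as part of a catalogue of known $\ell_\infty$-independent distributions; no proof or proof sketch is given in the paper itself. There is therefore nothing in the paper to compare your proposal against.

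For what it is worth, your outline follows the correlation-decay strategy that \cite{CLV21} actually employs: reduce boundary conditions to a subgraph, pass to the self-avoiding-walk (Godsil) tree, bound the derivative of the tree recursion, and sum over paths. Two points deserve care if you ever flesh this out. First, your recursion and decay factor are not quite in the right form: the monomer-dimer tree recursion is usually written for the probability that a vertex is \emph{unmatched}, and the per-step contraction rate one obtains is not simply $\lambda/(1+\lambda\Delta)$ but involves the marginals themselves; getting the uniform bound over all pinnings requires a bit more work. Second, your justification of the $\sqrt{1+\lambda\Delta}$ branch via Cauchy--Schwarz is only a gesture; in \cite{CLV21} this branch comes from a different (potential-function) analysis of the same recursion rather than a direct reweighting of walks. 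These are refinements rather than fundamental obstacles, but since the paper under review does not attempt any of this, your proposal is strictly more than what the paper contains.
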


\paragraph{Ising/Potts model}
Consider a graph $G=(V\cup \partial V,E\cup \partial E)$ of degree at most $\Delta$, where $\partial E$ contains the edges between $V$ and $\partial V$. Let $q\geq 2$ be a positive integer, $\xi \in [q]^{\partial V}$ a vector of boundary conditions, $\AA_{xy}\in \R^{q\times q}$ for $\{x,y\} \in E \cup \partial E$ a collection of matrices representing the nearest neighbour interactions, and $\BB_x\in \R_{>0}^q$ for $x\in V$ a collection of vectors representing the external fields. A vector $\sigma \in [q]^{V}$ is called a configuration of the model and corresponds to an assignment of a label form $[q]$ for every vertex in $V$ . The Gibbs distribution the $q$-spin system is defined as the distribution $\mu$ over all the possible configurations $\sigma \in [q]^{V}$ such that
$$ \mu(\sigma) :=\frac{1}{Z_G} \prod_{\{x,y\}\in E} A_{xy}(\sigma_x,\sigma_y) \prod_{\{x,y\} \in \partial E, x\in V,y\in \partial V} A_{xy}(\sigma_x,\xi_y)\prod_{x\in V}B_x(\sigma_x), $$
where $Z_G$ is a normalization term such that $\sum_{\sigma\in [q]^V}\mu(\sigma)=1$.\\

The Gibbs distribution is not defined on $\setof{0,1}^n$ but we can encode a configuration $\sigma \in [q]^V$ with a binary vector $\xxi \in \setof{0,1}^{|V| q}$ such that $\xi(v,i) =1 \iff \sigma(v)=i$, for all $v\in V$ and $i\in [q]$. 
\begin{theorem}[Theorem 4.13\cite{BCCPSV21}]
Let $\beta \in \R$, $\Delta \geq 3$, $q\geq 2$, and $\hh \in \R^q$. The Ising/Potts model is a particular case of $q$-state spin system where $A_{xy}(i,j)= \exp(\beta \delta_{ij})$ and $B_x(i) =\exp(h(i))$. For the Ising/Potts model, if $\beta <\max\{\frac{2}{\Delta},\frac{1}{\Delta}\ln(\frac{q-1}{\Delta})\}$ then the Gibbs distribution $\mu$ is {\iindependent} with parameter $\eta(\beta,\Delta)$. 
\end{theorem}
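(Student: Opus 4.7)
The plan is to control $\|\calI_\mu^\Lambda\|_\infty$ uniformly over all feasible conditionings $\Lambda$, which in the spin representation means bounding $\sum_{w,j} \bigl|\prob{}{\sigma_w=j \mid \sigma_v=i, \Lambda} - \prob{}{\sigma_w=j \mid \Lambda}\bigr|$ uniformly over the row index $(v,i)$. Since for each fixed vertex $w$ the sum over $j$ is twice the total variation distance between two Gibbs marginals at $w$, the task reduces to a \emph{correlation decay} statement: show that $\sum_{w \in V} d_{\mathrm{TV}}(\mu_w^{\Lambda,\sigma_v=i}, \mu_w^{\Lambda}) \le \eta(\beta,\Delta)/2$.

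First I would handle the regime $\beta < 2/\Delta$ via a Dobrushin-type path coupling. Couple the two Gibbs measures (with and without the extra conditioning $\sigma_v = i$) optimally, and on each edge $\{x,y\}$ use a direct computation from $A_{xy}(i,j) = \exp(\beta \delta_{ij})$ to bound the single-edge influence on a conditional marginal by a quantity of order $\tanh(|\beta|)$ (for $q=2$, with an analogous Potts bound for general $q$). The hypothesis ensures that $\Delta$ times this edge influence is bounded by some constant $\alpha < 1$, yielding geometric correlation decay: the discrepancy at a vertex $w$ at graph distance $r$ from $v$ is at most $\alpha^r$. Since at most $\Delta^r$ vertices sit at distance $r$, summing over $w$ gives a convergent geometric series of value $O(1/(1-\alpha))$.

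For the second regime $\beta < \tfrac{1}{\Delta} \ln\bigl(\tfrac{q-1}{\Delta}\bigr)$, which is relevant when $q-1 \gg \Delta$, the Dobrushin bound is too crude, so I would switch to a tree-recursion argument in the spirit of Weitz's self-avoiding-walk tree. Build the SAW tree $T_v$ rooted at $v$, express the marginal of $\sigma_v$ recursively in terms of the marginals at its children, and show that this recursion is contractive in a suitably chosen potential precisely when $\beta$ satisfies the stated bound. Telescoping the contraction along $T_v$ yields a bound on the TV distance between the conditional and unconditional marginals at each vertex $w$, which again sums over $w$ to $\eta(\beta, \Delta) = O(1)$.

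The main obstacle will be the second regime: choosing the right potential function and controlling the Jacobian of the $q$-dimensional tree recursion requires a careful case analysis of how the $q$ colour classes interact, and the usual ferromagnetic vs.\ antiferromagnetic dichotomy shows up in the algebra. Moreover the bound must hold uniformly over \emph{every} feasible partial conditioning $\Lambda$, so I would need to verify that fixing arbitrary boundary spins preserves the contraction property and that the SAW-tree marginals faithfully represent the conditional marginals on the original graph -- both delicate points in a Weitz-style analysis adapted to the Potts setting.
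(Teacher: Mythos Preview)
The paper does \emph{not} prove this theorem: it appears in Section~\ref{sec:iindeplist} as part of a survey of distributions already known to be \iindependent, and is quoted verbatim as ``Theorem~4.13 \cite{BCCPSV21}'' from an external reference. There is therefore no proof in the present paper for your proposal to be compared against.

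That said, your outline is broadly in line with how results of this type are established in the literature. The reduction of $\norm{\calI_\mu^\Lambda}_\infty$ to a sum of total-variation distances of single-site marginals is correct, and a Dobrushin contraction argument in the first regime together with a tree-recursion/contraction analysis in the second regime is the standard route. One caution: the Weitz SAW-tree construction is specific to two-spin systems and does not extend directly to $q\ge 3$, so for the Potts case you would need an alternative such as a direct recursion on the conditional-marginal vector along a BFS ordering, or the coupling-based influence-matrix bounds used in \cite{BCCPSV21}; you correctly flag this as the delicate part. If you intend to supply a proof, it would be more appropriate to consult \cite{BCCPSV21} directly rather than reconstruct one here, since the present paper only invokes the result as a black box.
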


\paragraph{List colouring}
Consider a graph $G=(V,E)$ of degree at most $\Delta$. Let $q\geq \Delta+2$ be a positive integer and suppose we are given a collection of lists $(L(v))_{v\in V}$ one for each vertex of $G$ and that $L(v)\subseteq [q],\ \forall v \in V$. We call an assignment for each vertex of an element of in its list $\sigma \in \prod_{v\in V} L(v)$ a list-coloring of $G$. We say a list-coloring $\sigma$ is proper if $\sigma(u)\neq\sigma(v)\ \forall (u,v)\in E$ i.e. all neighbours are assigned to different elements. \\
Also here we use the encoding we defined for the Ising/Potts model to transform a colouring into a $\setof{0,1}^n$ vector.

\begin{theorem}[See Theorem 1.1, Theorem 1.3, and Lemma~5.2 \cite{L20}] 
Let $(G, \calL)$ be a list-coloring instance where $G = (V, E)$ is a graph of maximum degree $\Delta \leq O(1)$ and $\calL = (L(v))_{v\in V}$ is a collection of color lists of maximum length $q$. Then for some absolute constant $\epsilon \approx 10^{-5}$, if $q \geq(\frac{11}{6} -\epsilon)\Delta$ , then the uniform distribution over proper list-colorings for $(G, \calL)$ is {\iindependent} with parameter $O(1)$.
\end{theorem}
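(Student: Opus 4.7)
The plan is to bound $\norm{\calI_\mu^\Lambda}_\infty = \max_{(u,c')} \sum_{(v,c)} |\calI_\mu^\Lambda((u,c') \to (v,c))|$ by an absolute constant, uniformly over all feasible pinnings $\Lambda$. Conditioning on a partial coloring yields another list-coloring instance $(G', \calL')$ whose maximum degree is still at most $\Delta$ and whose lists remain long enough to satisfy a condition of the form $q' \geq (11/6 - \epsilon')\Delta$ after a mild adjustment of $\epsilon$, so it suffices to prove the row-sum bound for an arbitrary instance in the claimed regime with $\Lambda = \emptyset$. This reduces the problem to a purely instance-dependent correlation-decay estimate.

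My strategy is correlation decay via coupling. I would work with a Vigoda-type flip dynamics, using the refined flip probabilities from the line of work that crosses the $11/6$ barrier, and establish a path-coupling contraction on a weighted Hamming metric: two configurations differing at a single vertex can be coupled so that the expected weighted distance after one step contracts by a factor $1 - \Omega(1/|V|)$. Single-site Glauber only contracts when $q > 2\Delta$; pushing below $2\Delta$ requires augmenting with multi-vertex (Kempe-chain) flips and verifying a linear-programming certificate for contraction, and the tiny slack $\epsilon \approx 10^{-5}$ is an artifact of this LP. This is the main obstacle: above $2\Delta$ the coupling is the classical greedy one of Jerrum and Salas--Sokal, but below $2\Delta$ one must analyze the joint distribution of "matched" and "mismatched" color blocks in two coupled configurations, which is a delicate case analysis I would import from the flip-dynamics literature rather than re-derive.

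Once contraction is in hand, a standard argument converts it into exponential decay of pairwise influence: for any $u,v$ and colors $c',c$ one obtains
\[
|\calI_\mu^\emptyset((u, c') \to (v, c))| \leq C \rho^{d(u,v)}
\]
for some $\rho < 1/\Delta$, where $d(\cdot,\cdot)$ denotes graph distance. Since the ball of radius $r$ in $G$ contains at most $\Delta^r$ vertices and each vertex contributes at most $q$ colors, summing yields
\[
\sum_{(v,c)} |\calI_\mu^\emptyset((u,c') \to (v,c))| \leq \sum_{r \geq 0} q \, \Delta^r \cdot C \rho^r \leq \frac{C q}{1 - \rho \Delta} = O(1),
\]
using $\Delta = O(1)$ and hence $q = O(\Delta) = O(1)$. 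This is precisely the desired {\iindependence} bound with parameter $O(1)$. The heavy lifting is entirely in the coupling-contraction step; the geometric summation and the closure-under-pinning observation are routine.
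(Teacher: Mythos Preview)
The paper does not prove this statement itself; it is quoted from \cite{L20} (the cited Theorems~1.1, 1.3 and Lemma~5.2), and Section~3 only surveys such results. So the comparison is between your sketch and the argument in \cite{L20}.

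Your high-level plan --- run the Vigoda/Chen--Delcourt--Moitra--Perarnau--Postle flip-dynamics coupling that realizes the $\tfrac{11}{6}-\epsilon$ threshold, then convert the coupling into a bound on the row sums of the influence matrix --- is indeed the strategy of \cite{L20}. The gap is in the conversion step. You assert that path-coupling contraction (expected weighted Hamming distance shrinks by $1-\Omega(1/|V|)$ per step) yields, ``by a standard argument'', pointwise spatial decay $|\calI((u,c')\to(v,c))|\le C\rho^{d(u,v)}$ with $\rho<1/\Delta$, and then sum over balls. That implication is not standard and is not known to hold: path-coupling contraction is a \emph{temporal} mixing statement and does not by itself produce strong spatial mixing at any prescribed rate, let alone one beating $1/\Delta$. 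The $\tfrac{11}{6}-\epsilon$ threshold is specific to the flip-dynamics coupling; strong spatial mixing for colorings is not established at that threshold, which is precisely why \cite{L20} takes a different route.

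What \cite{L20} actually does (this is the content of the cited Lemma~5.2) is bound the row sum $\sum_{(v,c)}|\calI((u,c')\to(v,c))|$ \emph{directly} from the one-step contractive coupling, with no intermediate pointwise $\rho^{d(u,v)}$ estimate and no ball-growth summation. Roughly, the row sum is dominated by a $W_1$-type distance between the pinned and unpinned measures, and a local contractive coupling keeps that distance $O(1)$ uniformly. Your geometric-series step is simply not needed.

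A smaller imprecision: the invariant preserved under pinning is not a global ``$q'\ge(\tfrac{11}{6}-\epsilon')\Delta$'' with a drifting $\epsilon'$, but the per-vertex list--degree condition (pinning a neighbor deletes one color from the list and one edge from the degree, so the relevant slack does not degrade). The flip-dynamics analysis is carried out under this per-vertex hypothesis, which is why it applies uniformly to every feasible $\Lambda$ without any adjustment of $\epsilon$.
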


\begin{theorem}[See proof of Theorem 9~\cite{CGSV21}, see also Lemma~6.1~\cite{FGYZ21}]
Let $\epsilon >0$, and suppose that $(G,\calL)$ is a list colouring instance where $G = (V, E)$ is a triangle-free graph of maximum degree $\Delta$ and $\calL = (L(v))_{v\in V}$ is a collection of color lists of maximum length $q\geq (1+\epsilon)\alpha^* \Delta +1$, $\alpha^*\approx 1.763$. Then the uniform distribution over proper list-colorings for $(G, \calL)$ is {\iindependent} with parameter $64(\frac{1}{\epsilon} +1)^2\frac{\Delta}{q}+1$.
\end{theorem}

\subsection{Stable distributions}
Chen et al. \cite{CLV21b} generalize the results of Alimohammadi et al \cite{AASV21a} which formalize a connection between stability of polynomials and spectral independence of certain probability distributions. By inspection of the proof, we notice that in order to prove spectral independence, in both papers, the authors show a stronger result, namely that the distributions of interest are {\iindependent}. We summarise in this section the results obtained by Chen et al.
\par
Let $q \geq 2$ be an integer, $V$ be a finite set of vertices, and $Q=\setof{0,1,\dots,q-1}$. A spin assignment $\sigma : V \mapsto Q$ is called a configuration. Denote with $\Omega = Q^{V}$ be the set of all the configurations and let $w: \Omega \mapsto \R_{\geq 0}$ be a nonnegative weight function that is not identically zero. A configuration $\sigma \in \Omega$ is called feasible if $w(\sigma)>0$. For $\Lambda \subseteq V$ define the set of of pinnings on $\Lambda$ by 
$$ \Omega_{\Lambda} = \setof{ \tau \in Q^\Lambda: \exists\text{ feasible }\sigma \in \Omega\text{ s.t. }\sigma_\Lambda = \tau} $$
Let $\calT = \union_{\Lambda \subseteq V} \Omega_{\Lambda}$ be the collection of all pinnings. For $\tau \in \Omega_\Lambda$, let $V^\tau = V \setminus \Lambda$ denote the set of unpinned vertices. For $v\in V^\tau$ let 
$$ \Omega_v^\tau = \setof{ k\in Q: \exists \text{ feasible } \sigma \in \Omega\text{ s.t. }\sigma_\Lambda = \tau\text{ and }\sigma_v = k}$$
Given a complex function $\llambda$ that associates a complex number $\lambda_{v,\sigma_v}$ to each pair $(v, \sigma_v) $ such that $v\in V^\tau$, $\sigma_v\in \Omega_v^\tau$, and $\sigma_n \neq 0$, the conditional partition function under $\tau$ is 
$$ Z_w^\tau := \sum_{\sigma \in \Omega: \sigma_ \Lambda = \tau} w(\sigma) \llambda^{\sigma_U}, \text{ where } \llambda^{\sigma_U} = \prod_{v\in U: \sigma_v \neq 0} \lambda_{v,\sigma_v}.$$
Then, define the Gibbs distribution as 
$$ \mu^{\tau}(\sigma) = \mu(\sigma|\sigma_\Lambda =\tau) = \frac{w(\sigma) \llambda^{\sigma_U}}{Z_w^\tau(\llambda)}, \quad \forall \sigma \in \Omega \text{ s.t. } \sigma_{\Lambda} = \tau.$$
Finally, we define when a polynomial is stable.

\begin{definition}
For an integer $n\geq 1$ and $\calK\subset \C^n$, we say a multivariate polynomial $P\in \C[z_1,\dots,z_n]$ is $\calK$-stable if $P(z_1,\dots,z_n) \neq 0$ whenever $(z_1,\dots,z_n) \in \calK$. In particular if $\calK= \prod_{\ell=1}^{n} \Gamma$ for some $\Gamma \subseteq \C$, then we say $P$ is $\Gamma$-stable.
\end{definition}

Note that following the notation in \cite{CLV21b}, the Gibbs distribution is not defined on $\setof{0,1}^n$ but we can encode a configuration $\sigma \in Q^V$ with a binary vector $\xxi \in \setof{0,1}^{|V| q}$ such that $\xi(v,i) =1 \iff \sigma(v)=i$, for all $v\in V$ and $i\in [q]$. Note that the partition function is not changed by this change of encoding.

The following two theorems show that there is a relation between the stability of the partition function and the {\iindependence} of the Gibbs distribution.

\begin{theorem}[From proof of Theorem 7 \cite{CLV21b}]
Let $\Gamma \subset \C$ be a non-empty open connected region such that $\Gamma$ is unbounded and $0$ belongs to the closure of $\Gamma$. If the multivariate partition function $Z_w^\tau$ is $\Gamma$-stable, then for any $\lambda\in \R^+ \cap \Gamma$ the Gibbs distribution $\mu =\mu_{w,\lambda}$ with the uniform external field $\lambda$ is {\iindependent} with constant
$$ \frac{8}{\delta} $$
where  $\delta = \frac{1}{\lambda} \dist(\lambda,\partial \Gamma)$.
\end{theorem}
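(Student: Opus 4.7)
The plan is to bound $\max_v \sum_j |\calI_\mu^\Lambda(v \to j)|$ by first expressing it as a total variation distance between marginal vectors coming from logarithmic derivatives of $Z_w^\tau$, and then deducing the estimate from a one-variable stability argument. As a preliminary reduction, I would fold the pinning $\sigma_\Lambda$ into $\tau$, since $\ell_\infty$-independence has to hold under all pinnings and the specialization of a $\Gamma$-stable polynomial to a real positive value of one of its variables preserves $\Gamma$-stability in the remaining variables (standard closure). For a uniform external field $\lambda \in \R^+ \cap \Gamma$, each conditional marginal equals $\mu^\tau(\xi_j=\ell) = \lambda\, (\partial_{j,\ell} Z_w^\tau)(\lambda \cdot \vecone)/Z_w^\tau(\lambda \cdot \vecone)$, and conditioning on $\xi_v = k$ replaces the denominator by the normalized polynomial $\lambda\, \partial_{v,k} Z_w^\tau$. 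Hence the row sum of influences rewrites as
\begin{equation*}
\sum_{j} |\calI_\mu^\Lambda(v \to j)| \;=\; 2\, d_{\text{TV}}\!\left(\nabla\log Z_w^\tau(\lambda\cdot\vecone),\ \nabla\log(\lambda\,\partial_{v,k} Z_w^\tau)(\lambda\cdot\vecone)\right),
\end{equation*}
where the gradient is taken in the directions $(j,\ell)$ and scaled by $\lambda_{j,\ell}$ so that each vector is a probability distribution.

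Next, following the polynomial-to-independence template of Alimohammadi--Anari--Shiragur--Vuong (extended to general $\Gamma$ in CLV21b), I would control this TV distance by a one-variable argument. Freezing every coordinate of the field except $z := \lambda_{v,k}$, both $Z_w^\tau$ and $\partial_{v,k} Z_w^\tau$ become univariate polynomials that are non-vanishing on $\Gamma$ by the hypothesis; moreover, the TV quantity above is itself obtained by integrating a logarithmic derivative of such a univariate polynomial along the real ray from the evaluation point into $\Gamma$. The crux is then the analytic inequality: if $q \in \C[z]$ is non-vanishing on a connected open $\Gamma$ with $0 \in \overline{\Gamma}$ and $\lambda \in \R^+ \cap \Gamma$, then $\left| z\, q'(z)/q(z) \right|_{z = \lambda} \leq 8/\delta$ where $\delta = \dist(\lambda, \partial\Gamma)/\lambda$. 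Aggregating this pointwise estimate over the coordinate directions and over the coupling path yields the claimed $\ell_\infty$-independence constant $8/\delta$.

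The main obstacle is the third step: producing the sharp constant $8/\delta$ for an arbitrary connected $\Gamma$, rather than the specific sector $\Gamma_\alpha$ treated in AASV21a. The strategy is to conformally transplant $\Gamma$: choose a Möbius transformation $T$ sending $\lambda$ to a canonical reference point and mapping $\Gamma$ into a half-plane whose boundary sits at a normalized distance depending only on $\delta$. The hypothesis $0 \in \overline{\Gamma}$ is exactly what lets $T$ (essentially $z \mapsto z/(z-\lambda)$ up to normalization) be well-behaved on the relevant closure and identifies the derivative $|T'(\lambda)|$ with $1/(\lambda\delta)$ up to absolute constants. A Schwarz--Pick estimate applied to $T \circ q$, combined with the Borel--Carathéodory bound on $q'/q$ for polynomials non-vanishing in a half-plane, yields the logarithmic-derivative bound; chasing the constants through the two reductions produces the explicit factor $8$.
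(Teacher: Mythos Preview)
The paper does not contain its own proof of this statement: it is stated in Section~\ref{sec:iindeplist} purely as a citation, attributed to ``From proof of Theorem~7 \cite{CLV21b}'', alongside several other results quoted from the literature to catalogue known {\iindependent} distributions. There is therefore nothing in this paper to compare your attempt against.

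Your sketch does track the high-level strategy of the cited works (the polynomial-to-independence template of \cite{AASV21a}, generalized in \cite{CLV21b} from sectors to arbitrary connected regions): rewrite influences via logarithmic derivatives of the partition function, restrict to a single variable, and bound $|z\,q'(z)/q(z)|$ for a polynomial nonvanishing on $\Gamma$ via a conformal change of variables and a Schwarz--Pick/Borel--Carath\'eodory estimate. That said, several of your intermediate formulas are heuristic rather than precise (for instance, the displayed total-variation identity and the claimed identification of $|T'(\lambda)|$ with $1/(\lambda\delta)$), and the constant $8$ would require chasing the actual reductions in \cite{CLV21b} rather than the schematic ones you wrote down. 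If your goal is to reproduce the result, you should consult \cite{CLV21b} directly; the present paper only invokes it.
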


If we restrict to the case $q=2$, then we obtain that $\Omega = \setof{0,1}^{|V|}$ is the binary hypercube and the conditional partition $ Z_w^\tau $ is the generating polynomial of the distribution $w$. 
\begin{corollary}
Let $w:2^{[n]}\mapsto \R_{\geq0}$ be a distribution and $\Gamma \subset \C$ be a non-empty open connected region such that $\Gamma$ is unbounded and $0$ belongs to the closure of $\gamma$. If the generating multi-affine polynomial $Z_w \in \R[z_1,\dots,z_n]$ is $\Gamma$-stable,  then for any $\lambda\in \R^+ \cap \Gamma$ the Gibbs distribution $\mu =\mu_{w,\lambda}$ with the uniform external field $\lambda$ is {\iindependent} with constant
$$ \frac{8}{\delta} $$
where  $\delta = \frac{1}{\lambda} \dist(\lambda,\partial \Gamma)$.
\end{corollary}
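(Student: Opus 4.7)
The plan is to derive the corollary as a direct specialization of the preceding theorem to the case $q=2$. First I would set $q=2$ with $V=[n]$ and $Q=\setof{0,1}$; under the bijection $\sigma \leftrightarrow S := \setof{v \in V : \sigma_v = 1}$, the configuration space $\Omega = Q^V$ identifies with $2^{[n]}$, so the weight function $w : 2^{[n]} \to \R_{\geq 0}$ of the corollary plays the role of the weight function in the parent theorem, and the set of feasible $S$ is exactly the support of $w$.

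Next I would match the partition functions. Since $Z_w$ is multi-affine and indexed by subsets $S \subseteq [n]$, a direct expansion gives
\[
Z_w(z_1,\dots,z_n) = \sum_{S \subseteq [n]} w(S) \prod_{i \in S} z_i,
\]
which is exactly the (unpinned) multivariate partition function from the parent theorem in the $q=2$ case. Hence the $\Gamma$-stability hypothesis of the corollary matches that of the parent theorem, and the Gibbs distribution $\mu_{w,\lambda}(S) \propto w(S)\,\lambda^{|S|}$ is precisely the distribution to which the parent theorem's conclusion applies.

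With these identifications in place, I would simply invoke the parent theorem and read off the conclusion: $\mu_{w,\lambda}$ is {\iindependent} with parameter $8/\delta$ where $\delta = \dist(\lambda, \partial \Gamma)/\lambda$.

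The one subtle point worth checking is the encoding mismatch: the parent theorem formally encodes a $q$-state configuration into $\setof{0,1}^{|V|q}$ via the indicators $\xi(v,i) = 1 \iff \sigma(v) = i$, whereas the corollary uses the natural encoding $\xxi \in \setof{0,1}^n$. For $q=2$, the two indicators $\xi(v,0)$ and $\xi(v,1)$ at each vertex are complementary, so under the natural encoding $\xi_v := \xi(v,1)$ the pairwise influence matrix $\calI_\mu^{\Lambda}$ is (up to reindexing) a principal submatrix of the influence matrix under the redundant encoding, and any conditioning allowed in the natural encoding is realized in the redundant one by pinning the corresponding $\xi(v,1)$ indicators. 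Hence the $\ell_\infty$-norm bound on the influence matrix transfers without loss, and the stated constant $8/\delta$ carries over unchanged.
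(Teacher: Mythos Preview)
Your proposal is correct and follows exactly the paper's approach: the paper derives the corollary as the $q=2$ specialization of the preceding theorem, with a single sentence noting that in this case $\Omega=\setof{0,1}^{|V|}$ and $Z_w^\tau$ is the generating polynomial of $w$. Your treatment of the encoding mismatch (the $\setof{0,1}^{|V|q}$ versus $\setof{0,1}^n$ encodings, and why the $\ell_\infty$-bound transfers to the natural encoding) is more explicit than anything in the paper, which simply glosses over this point.
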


Note that this result generalizes the result obtained in \cite{AASV21a} that restricts to the case in which $\Gamma$ is a sector.

The next theorem states that in the case in which the region $\Gamma$ is not unbounded, we have a similar result but we need to assume that the conditional partition function $Z_w^\tau$ is stable for every pinning $\tau \in \calT$. 
Let $\mathcal{P} =\setof{(v,k)\in V\times Q:v\in V^\tau,k \in \Omega_v^\tau}$, define the marginal bound for a weight function $\mu$ as $$ b = \min_{\tau \in \calT, (v,k) \in \mathcal{P}^\tau} \mu^{\tau} (\sigma_v=k).$$

\begin{theorem}[From the proof of Theorem 8 \cite{CLV21b}]
Let $\lambda^* \in \R^+$ and let $\Gamma \subset \C$ be a non-empty open connected region such that $(0, \lambda^* )\subseteq \Gamma$ (respectively, $(\lambda^*,\infty) \subseteq \Gamma$). If for every pinning $\tau \in \calT$ the multivariate conditional partition function $Z_w^\tau$ is $\Gamma$-stable, then for any $\lambda\in (0,\lambda^*)$ (respectively, $ \lambda\in (\lambda^*, \infty )$ ) the Gibbs distribution $\mu =\mu_{w,\lambda}$ with the uniform external field $\lambda$ is {\iindependent} with constant
$$ \frac{8}{\delta} \min \setof{\frac{1-b}{b}, \frac{\lambda}{b(\lambda^* -\lambda)} +1} $$
$$ (\text{respectively} , \frac{8}{\delta} \min \setof{\frac{1-b}{b}, \frac{\lambda^*}{b(\lambda -\lambda^*)} +1}  )$$
where $b$  is the marginal bound for $\mu$ and $\delta = \frac{1}{\lambda} \dist(\lambda,\partial \Gamma)$.
\end{theorem}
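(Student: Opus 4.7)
The plan is to bound, for each pinning $\tau \in \calT$ and each vertex $v \in V^\tau$, the row sum $\sum_{u,k} |\calI_\mu^\tau(v \to u)|$ of the one-sided influence matrix using the stability hypothesis on $Z_w^\tau$. I would essentially adapt the argument of the previous theorem (which handles unbounded $\Gamma$) to the bounded setting via a carefully chosen Möbius change of variables, and combine it with a direct complex-analytic bound that pays a factor of $1/b$ for the marginal lower bound.

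First, I would express each influence $\calI_\mu^\tau(v \to u) = \mu^\tau(\sigma_u = k \mid \sigma_v = 1) - \mu^\tau(\sigma_u = k)$ as the change in a marginal induced by fixing the external field $\lambda_{v,\cdot}$ to the appropriate extreme value. Both marginals are ratios of multiaffine polynomials in the fields, and each can be written as a logarithmic derivative $\partial_{\log \lambda_u} \log Z_w^\tau$. Since $Z_w^\tau$ is $\Gamma$-stable and $\Gamma$ is open and connected containing the real segment $(0,\lambda^*)$, these derivatives are holomorphic on $\Gamma$, and summing $|\calI_\mu^\tau(v \to u)|$ over $u$ reduces to controlling a total-variation-like quantity for the marginals under a change of field at $v$.

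Second, to obtain the bound $\frac{8}{\delta}\cdot\frac{1-b}{b}$, I would apply Cauchy's integral formula on a disk of radius $\delta\lambda$ centered at $\lambda$, which is contained in $\Gamma$ by definition of $\delta$. On this disk, the logarithmic derivatives of $Z_w^\tau$ encode marginals, which are bounded in $[0,1]$ along the real axis; complex analysis extends this to a bound on a suitable contour. The factor $\frac{1-b}{b}$ captures the cost of converting the complex-analytic derivative bound into an influence bound, using $\mu^\tau(\sigma_u = k) \geq b$ to control the denominators that appear after normalizing. This piece of the argument proceeds exactly as in the previous theorem, except that the effective size of the derivative is capped by $\frac{1-b}{b}$ rather than by a trivial unit bound.

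Third, for the bound $\frac{8}{\delta}\bigl(\frac{\lambda}{b(\lambda^* - \lambda)} + 1\bigr)$, I would apply the Möbius transformation $\phi(z) = z/(\lambda^* - z)$, which sends the segment $(0,\lambda^*) \subseteq \Gamma$ onto the positive real ray $(0,\infty)$ and maps $\Gamma$ onto an unbounded connected region $\phi(\Gamma)$ on which the transformed (reparametrized) partition function is non-vanishing. The previous (unbounded-region) theorem then applies to the reparametrized distribution, yielding the factor $8/\delta$. Pulling back via the chain rule, the derivative $\phi'(\lambda) = \lambda^*/(\lambda^* - \lambda)^2$ introduces a factor proportional to $\lambda/(\lambda^* - \lambda)$, the marginal bound $b$ enters through the same denominator control as before, and the additive $+1$ absorbs a lower-order term coming from the Jacobian at $\lambda$. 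Taking the minimum of the two estimates yields the stated bound.

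The main obstacle will be the careful bookkeeping around the Möbius reparametrization, in particular verifying that $\Gamma$-stability of $Z_w^\tau$ transfers to $\phi(\Gamma)$-stability of the transformed polynomial with the correct distance-to-boundary parameter, and matching constants so that the two branches of the minimum cleanly combine. The complex-analytic derivative bounds themselves are standard once the parametrization is in place; the delicate part is ensuring that the marginal bound $b$ enters in exactly the right way in both branches.
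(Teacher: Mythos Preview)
The paper does not prove this statement. The theorem is stated in Section~\ref{sec:iindeplist} as a result quoted from \cite{CLV21b} (``From the proof of Theorem~8''), alongside several other cited results, for the sole purpose of documenting that various distribution families are known to be {\iindependent}. No proof, sketch, or argument is given in the paper itself, so there is no ``paper's own proof'' to compare your proposal against.

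That said, your outline is broadly aligned with how this type of result is established in the literature you are effectively reconstructing: influences are read off as (differences of) logarithmic derivatives of the partition function, zero-freeness on an open region is converted via Cauchy-type estimates into derivative bounds with the factor $1/\delta$, and the bounded-interval case $(0,\lambda^*)$ is reduced to an unbounded-region case by a M\"obius reparametrization of the external field. Two places in your sketch would need tightening before it becomes a proof. First, the claim that marginals are ``bounded in $[0,1]$ along the real axis'' and that ``complex analysis extends this to a bound on a suitable contour'' hides the actual work: one needs a uniform bound on $|\partial_{\log z}\log Z_w^\tau|$ over the disk, and this is obtained not from the real-axis values but from zero-freeness itself (e.g.\ via a Schwarz--Pick or distortion argument for $\log Z$), which is where the constant $8$ comes from. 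Second, your accounting for the factor $\frac{\lambda}{b(\lambda^*-\lambda)}+1$ is heuristic; the precise origin of the additive $+1$ and the exact placement of $b$ depend on how one splits the difference $\mu^\tau(\sigma_u=k\mid\sigma_v=1)-\mu^\tau(\sigma_u=k)$ into a piece controlled by the unbounded-region bound and a residual, and this bookkeeping is where most of the effort in the cited proof actually lies.
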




\fi

\section{Preliminaries}\label{sec:preliminaries}

Let $\xi_1,\xi_2,\dots,\xi_n\in \{0,1\}$ be $n$ random variables with some joint distribution $\mu$.  We think about an outcome from the distribution $\mu$ as both a vector $\xxi \in\{0,1\}^n$ and a set of indices $\sigma\subset [n]$ such that $v\in \sigma$ if and only if $\xi_v =1$.
 Let $\pp\in [0,1]^n$ be the vector of the marginal probabilities when sampling from the distribution $\mu$ i.e. $$ \pp := \expec{\xxi \sim \mu}{\xxi}.$$
We say that $\mu$ is $k$ homogeneous when every outcome contains exactly $k$ ones i.e.  $\norm{\xxi}_1 =k$ with probability $1$. In this paper we are mainly interested in homogeneous distributions. \par
For $k$-homogeneous distributions, we can sample an outcome in the following way: we first pick a single variable based on a certain distribution and then we pick the remaining $(k-1)$ variables based on the first variable that we sampled. Formally, we define the probability distribution $\nu$ over the elements of $[n]$ as
$$ \nu(v) := \frac{1}{k} \sum_{\sigma \subseteq [n], v\in \sigma} \mu(\sigma)= \frac{p(v)}{k}$$
where $p(v)$ denotes the $v$-th entry of the vector $\pp$. 
Note that $\sum_{v\in [n]} \nu(v) =1$.\\
For every $v\in[n]$ such that $p(v)>0$, the probability distribution $\mu_v$ of $\mu$ conditioned on $\xi_v=1$ is defined for a every set $\tau\subseteq [n]\setminus\{v\}$ of size $|\tau|=k-1$ as  
$$ \mu_v(\tau) := \frac{\mu(\{v\}\cup \tau)}{\sum_{\sigma \subseteq [n], v\in \sigma}\mu(\sigma)}.$$
Analogously, for every $v\in[n]$ such that $p(v)<1$, the probability distribution $\mu_{\setminus v}$ of $\mu$ conditioned on $\xi_v=0$ is defined for a every set $\tau\subseteq [n]\setminus\{v\}$ of size $|\tau|=k$ as  
$$ \mu_{\setminus v}(\tau) := \frac{\mu(\tau)}{\sum_{\sigma \subseteq [n], v\notin \sigma}\mu(\sigma)}.$$
If $\sum_{\sigma \subseteq [n], v\in \sigma}\mu(\sigma)=0$ (resp. $\sum_{\sigma \subseteq [n], v\in \sigma}\mu(\sigma)=1$) then say that the conditioning on $\xi_v=1$ or $v$ being always present (resp. $\xi_v = 0$ or $v$ being always absent) is not feasible and $\mu_v$ (resp. $\mu_{\setminus v}$) is not defined. 
We define the vector of the marginals of $\mu_v$ and $\mu_{\setminus v}$ respectively as
$$\pp_{v}:= \expec{\xxi\sim \mu}{\xxi|\xi_v =1} \quad \text{ and } \quad \pp_{\setminus v}:= \expec{\xxi\sim \mu}{\xxi|\xi_v =0} .$$ 

Finally, we extend the conditioning to more than one variable. For a set $\Lambda \subset [n]$ and a vector $\sigma_{\Lambda} \in \{0,1\}^\Lambda$, we define the distribution $\mu_{\sigma_\Lambda}$ by conditioning on $\xi_i = \sigma_\Lambda(i)$ for every $i\in \Lambda$. If $\prob{\xxi\sim \mu}{\xi_i = \sigma_\Lambda(i),\ \forall i\in \sigma} =0$ then we say that the conditioning is not feasible and $\mu_{\sigma_\Lambda}$ is not defined.

\paragraph{Linear algebra}
We denote vectors with bold lower case letters and matrices with bold capital letters, in particular $\II$ denotes the identity matrix. For a matrix $\AA\in \R^{d\times d}$, we denote the maximum and minimum eigenvalues of $\AA$ respectively with $\lambda_{\max}(\AA)$ and $\lambda_{\min}(\AA)$. \\
Given an scalar function $f:\R \mapsto \R$ and a p.s.d. matrix $\AA$ with spectral decomposition $\VV \LLambda \VV^\trp $ we define $$f(\AA):= \VV \diag_i(\lambda_i(\AA)) \VV^\trp .$$
The $\ell_\infty$ norm of a matrix $\AA \in \R^{d\times d}$ is defined as
\[
\norm{\AA}_\infty := \max_i \sum_{j} |A(i,j)| = \max_{\xx\neq \veczero} \frac{\norm{\AA \xx}_\infty}{\norm{\xx}_\infty}.
\]
The trace of a square matrix $\AA$, denoted with $\tr(\AA)$ is defined to be the sum of elements on the main diagonal of $\AA$ and is also equal to the sum of its eigenvalues counted with multiplicities.
The trace is a linear function in the sense that $\tr(\alpha \AA) = \alpha \tr(\AA)$ and $\tr(\AA+ \BB)= \tr(\AA) +\tr(\BB)$ for every $\AA,\BB \in \R^{d\times d}$ and $\alpha \in \R$. Furthermore the following theorem holds.
\begin{theorem}[Golden-Thompson]
Let $\AA, \BB \in \R^{d\times d}$ be two symmetric matrices, then
$$ \tr\left(e^{\AA+\BB} \right) \leq \tr\left(e^{\AA} e^{\BB} \right) $$
\end{theorem}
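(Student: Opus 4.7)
The plan is to reduce the trace-exponential inequality to a power-trace inequality in three steps. First, invoke the Lie--Trotter product formula, which states that for any symmetric $\AA, \BB$,
$$e^{\AA+\BB} = \lim_{m \to \infty}\left(e^{\AA/m} e^{\BB/m}\right)^m.$$
By continuity of the trace, it suffices to show that $\tr\left((\PP\QQ)^m\right) \leq \tr\left(\PP^m \QQ^m\right)$ for the positive definite matrices $\PP := e^{\AA/m}$ and $\QQ := e^{\BB/m}$, since $\PP^m = e^\AA$ and $\QQ^m = e^\BB$.

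Second, move to the positive semidefinite cone via similarity: writing $\PP\QQ = \PP^{1/2}\RR \PP^{-1/2}$ with $\RR := \PP^{1/2} \QQ \PP^{1/2} \succeq 0$, we obtain $\tr((\PP\QQ)^m) = \tr(\RR^m)$ since similar matrices share the same eigenvalues. Third, apply the Araki--Lieb--Thirring trace inequality: for positive semidefinite $\XX, \YY$ and any $q \geq 1$,
$$\tr\left((\XX^{1/2} \YY \XX^{1/2})^{q}\right) \leq \tr\left(\XX^{q/2} \YY^{q} \XX^{q/2}\right).$$
Taking $\XX = \PP$, $\YY = \QQ$, $q = m$, and applying cyclicity of the trace on the right-hand side yields $\tr(\RR^m) \leq \tr(\PP^m \QQ^m)$, as desired.

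The main obstacle is the Araki--Lieb--Thirring inequality itself, a classical but nontrivial matrix inequality. For the subsequence $m = 2^k$ (which suffices, since one may restrict the Lie--Trotter limit to any subsequence tending to infinity), I would establish it by induction on $k$. The base case $k = 1$ is $\tr((\PP\QQ)^2) \leq \tr(\PP^2 \QQ^2)$, which follows from Cauchy--Schwarz in the Hilbert--Schmidt inner product $\langle \cdot, \cdot \rangle_F$: with $\MM = \PP\QQ$ and $\MM^\trp = \QQ\PP$, one has $\tr(\MM^2) = \langle \MM^\trp, \MM\rangle_F \leq \|\MM^\trp\|_F \|\MM\|_F = \tr(\MM^\trp \MM) = \tr(\PP^2\QQ^2)$. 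The inductive step rewrites $\tr(\RR^{2^{k+1}}) = \tr\bigl((\PP^{1/2}(\QQ\PP\QQ)\PP^{1/2})^{2^k}\bigr)$ and aims to apply the hypothesis to the new positive semidefinite pair $(\PP, \QQ\PP\QQ)$; cleanly closing this recursion is the technical heart of the argument. A more conceptual alternative is a complex-interpolation proof via the Hadamard three-line lemma applied to the analytic operator-valued function $F(z) = \tr\bigl(\PP^{mz/2} \QQ^{mz} \PP^{mz/2}\bigr)$ on the strip $0 \leq \mathrm{Re}\, z \leq 1$, which bypasses the combinatorial bookkeeping entirely.
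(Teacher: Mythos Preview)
The paper does not prove Golden--Thompson: it is stated in the preliminaries as a classical result and invoked as a black box in the proof of Lemma~\ref{lemma:first_part}. So there is no ``paper's own proof'' to compare against.

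Your strategy is one of the standard routes to Golden--Thompson: reduce via Lie--Trotter to the power-trace inequality $\tr((\PP\QQ)^m)\le\tr(\PP^m\QQ^m)$ for positive definite $\PP,\QQ$, and obtain the latter from Araki--Lieb--Thirring (or, along the $m=2^k$ subsequence, from the doubling step $\tr((\PP\QQ)^{2m})\le\tr((\PP^2\QQ^2)^m)$). The similarity trick $\PP\QQ=\PP^{1/2}\RR\PP^{-1/2}$ and the base case via Cauchy--Schwarz are both correct. The one soft spot you already flag is the inductive step: applying the hypothesis to the pair $(\PP,\QQ\PP\QQ)$ gives $\tr(\RR^{2^{k+1}})\le\tr\bigl(\PP^{2^{k-1}}(\QQ\PP\QQ)^{2^k}\PP^{2^{k-1}}\bigr)$, which does not immediately collapse to $\tr(\PP^{2^k}\QQ^{2^{k+1}}\PP^{2^k})$; the clean way to close the recursion is instead to prove the single doubling inequality $\tr((\PP\QQ)^{2m})\le\tr((\PP^2\QQ^2)^m)$ once (your Cauchy--Schwarz argument does exactly the $m=1$ case, and the general $m$ follows by the same Hilbert--Schmidt estimate applied to $X=(\PP\QQ)^m$ together with cyclicity) and then iterate it, replacing $(\PP,\QQ)$ by $(\PP^2,\QQ^2)$ at each step. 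With that adjustment the argument is complete; the three-lines alternative you mention is also a valid and well-known route.
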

The following facts will be useful later, the proof of the facts can be found in  Appendix~\ref{sec:proof_appendix}.

\begin{fact}\label{fact:matrix_sqr}
Let $\AA, \BB \in \R^{d\times d}$ be two symmetric matrices, then
$$ \AA\BB+\BB\AA \preceq \AA^2+ \BB^2$$
\end{fact}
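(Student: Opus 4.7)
The plan is to use the matrix analog of the elementary scalar inequality $2ab \leq a^2 + b^2$, which for real numbers follows from expanding $(a-b)^2 \geq 0$. The key observation is that since $\AA$ and $\BB$ are both symmetric, the difference $\AA-\BB$ is symmetric as well, so $(\AA-\BB)^2 = (\AA-\BB)^\trp(\AA-\BB)$, and any matrix of the form $\MM^\trp\MM$ is positive semidefinite because its quadratic form is $\xx^\trp\MM^\trp\MM\xx = \norm{\MM\xx}_2^2 \geq 0$ for every $\xx$. Hence $(\AA-\BB)^2 \succeq \boldsymbol{0}$, and expanding the square produces precisely the cross terms $-\AA\BB - \BB\AA$ alongside $\AA^2$ and $\BB^2$; rearranging yields the claimed inequality.

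The only step that requires even a moment's thought is confirming that $(\AA-\BB)^2$ is indeed PSD, and this is where symmetry is used crucially: for a general non-symmetric matrix $\MM$ one has $\MM\MM^\trp \succeq \boldsymbol{0}$ and $\MM^\trp\MM \succeq \boldsymbol{0}$, but $\MM^2$ need not be PSD. Once symmetry is invoked to identify $(\AA-\BB)^2$ with $\MM^\trp\MM$ for $\MM = \AA - \BB$, the remainder of the argument is a single line of algebra, so there is no genuine obstacle. An alternative approach would be to verify the inequality on quadratic forms $\xx^\trp(\AA^2 + \BB^2 - \AA\BB - \BB\AA)\xx = \norm{(\AA-\BB)\xx}_2^2$ directly, which is really the same proof written one level lower; the matrix-level expansion is cleanest and seems preferable.
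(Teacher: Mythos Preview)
Your proof is correct and matches the paper's own proof essentially verbatim: the paper simply writes $0 \preceq (\AA-\BB)^2 = \AA^2 - \AA\BB - \BB\AA + \BB^2$ and rearranges. Your additional remarks about why symmetry is needed for $(\AA-\BB)^2 \succeq \matzero$ are a helpful elaboration but not a departure from the paper's approach.
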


\begin{fact}\label{fact:matrix_exp}
Let $\AA, \BB \in \R^{d\times d}$ be two symmetric matrices, such that $\AA\preceq \II$ and $\BB$ is p.s.d. then
$$ e^{\AA - \BB} \preceq \II + \AA -  \BB +2 \AA^2+2\BB^2$$
\end{fact}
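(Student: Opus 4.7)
The strategy is to reduce the matrix inequality to a scalar inequality on the eigenvalues of $\MM := \AA - \BB$, and then bound $\MM^2$ in terms of $\AA^2$ and $\BB^2$ using Fact~\ref{fact:matrix_sqr}. The key first observation is that $\MM$ is symmetric, and since $\AA \preceq \II$ and $\BB \succeq 0$, we have $\MM \preceq \II$. Any scalar inequality $e^x \leq p(x)$ that holds on $(-\infty, 1]$ therefore lifts, via the spectral calculus in the eigenbasis of $\MM$, to the operator inequality $e^{\MM} \preceq p(\MM)$.

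The scalar inequality I plan to use is $e^x \leq 1 + x + x^2$ for all $x \leq 1$. To verify it, set $h(x) = 1 + x + x^2 - e^x$; then $h(0) = h'(0) = 0$ and $h''(x) = 2 - e^x \geq 0$ on $(-\infty, \ln 2]$, so $h \geq 0$ there. On $[\ln 2, 1]$ the derivative $h'$ is decreasing but stays positive (since $h'(1) = 3 - e > 0$), so $h$ is increasing and remains nonnegative throughout the interval. Applying this bound eigenvalue-wise to $\MM$ yields
\[
  e^{\AA - \BB} \;\preceq\; \II + (\AA - \BB) + (\AA - \BB)^2.
\]

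It remains to control $(\AA - \BB)^2 = \AA^2 + \BB^2 - (\AA\BB + \BB\AA)$. Here I apply Fact~\ref{fact:matrix_sqr} to the pair of symmetric matrices $\AA$ and $-\BB$, obtaining $\AA(-\BB) + (-\BB)\AA \preceq \AA^2 + (-\BB)^2$, that is, $-(\AA\BB + \BB\AA) \preceq \AA^2 + \BB^2$. Hence $(\AA - \BB)^2 \preceq 2\AA^2 + 2\BB^2$, and substituting into the displayed inequality gives the stated bound.

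The only delicate point is the constant in the scalar inequality. If one chose a looser bound such as $e^x \leq 1 + x + 2x^2$ (which is also true on $(-\infty, 1]$), the subsequent factor-of-two loss incurred in bounding the non-commuting cross terms $\AA\BB + \BB\AA$ via Fact~\ref{fact:matrix_sqr} would inflate the final estimate to $4\AA^2 + 4\BB^2$. So the calibration that lands exactly at $2\AA^2 + 2\BB^2$ hinges on having a scalar inequality with quadratic coefficient at most $1$, and this is the main thing to check up front.
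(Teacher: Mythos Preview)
Your proof is correct and follows exactly the same route as the paper: observe $\AA-\BB\preceq\II$, apply the scalar bound $e^x\le 1+x+x^2$ on $(-\infty,1]$ via the spectral calculus, and then control $(\AA-\BB)^2$ by $2\AA^2+2\BB^2$ using Fact~\ref{fact:matrix_sqr}. The paper states these steps tersely without verifying the scalar inequality, so your version simply fills in details the authors left implicit.
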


\paragraph{{\iindependence} and {\amindependence}}
The notion of {\iindependence} has already been defined in the previous section (see Definition~\ref{def:iindependence}).
We introduce here another property that we call {\amindependence} and that will be crucial for the proof of the main theorem.
\begin{definition}\label{def:amindependence}
We say that a $k$-homogeneous distribution $\mu$ is {\amindependent} with parameter $\Dam$ if  
$$ \expec{v\sim \nu}{|\calI_\mu^\Lambda(v\rightarrow u)|} \leq \Dam \expec{\nu}{\xi_u} , \quad \forall u\in [n]$$ 
for every subset $\Lambda\subset [n]$.
\end{definition}

Coming back to the previous notation, {\iindependence} and {\amindependence} are equivalent of having that for all the distributions $\mu_\tau$ for any $\Lambda \subset [n]$ and $\tau = \{1\}^\Lambda$, respectively,
$$ \norm{\pp_v - \pp}_1 \leq \Dinf, \quad \forall v\in [n] $$ 
and
$$ \expec{v\sim \nu}{|\pp_v - \pp|} \leq \frac{\Dam}{k} \pp .$$ 
Surprisingly, for $k$-homogeneous distributions, the two notions are equivalent as shown in the following lemma, which we prove in the next section.
\begin{lemma} \label{lemma:ii-rmi_preliminaries}
A $k$-homogeneous distribution $\mu$ is {\iindependent} with parameter $D$ if and only if it is {\amindependent} with parameter $D$.
\end{lemma}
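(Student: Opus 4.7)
The plan is to derive both implications of the equivalence from a single detailed-balance identity for the influence matrix of a homogeneous distribution. Fix an arbitrary $\Lambda \subset [n]$; if the conditioning on $\{\xi_\ell = 1 : \ell \in \Lambda\}$ is infeasible, then $\calI_\mu^{\Lambda} \equiv 0$ by convention and both properties hold trivially at this $\Lambda$. Otherwise, the conditional distribution on $[n]\setminus\Lambda$ is $k'$-homogeneous with $k' = k - |\Lambda|$; I write $p(v)$ for its marginals, $p_v(u)$ for the further conditional marginal, and $\nu(v) = p(v)/k'$ for the induced distribution on $[n]\setminus\Lambda$ (matching the definition of $\nu$ in the preliminaries, applied to the conditional distribution).

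The heart of the proof is the identity
\[
  \nu(v)\,\calI_\mu^{\Lambda}(v \rightarrow u) \;=\; \nu(u)\,\calI_\mu^{\Lambda}(u \rightarrow v).
\]
To derive it, I use $k'$-homogeneity twice. The joint probability is symmetric in $u,v$, so
\[
  \nu(v)\,p_v(u) \;=\; \frac{\prob{}{\xi_v = \xi_u = 1 \mid \xi_\ell = 1,\ \ell \in \Lambda}}{k'} \;=\; \nu(u)\,p_u(v),
\]
and trivially $\nu(v)\,p(u) = p(v)p(u)/k' = \nu(u)\,p(v)$. Subtracting the two equations yields the identity. Taking absolute values preserves it, so summing over $v$ gives the pivotal equation
\[
  \sum_{v \in [n]} \nu(v)\,\bigl|\calI_\mu^{\Lambda}(v \rightarrow u)\bigr| \;=\; \nu(u)\sum_{v \in [n]} \bigl|\calI_\mu^{\Lambda}(u \rightarrow v)\bigr|,
\]
whose left-hand side is exactly what appears in the definition of {\amindependence} (noting that $\expec{\nu}{\xi_u} = \nu(u)$) and whose inner row sum on the right-hand side is exactly the one bounded by {\iindependence}.

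Both directions are then immediate. If $\mu$ is {\iindependent} with parameter $D$, every row sum is at most $D$; multiplying by $\nu(u)$ and invoking the pivotal equation yields {\amindependence} with the same constant $D$. Conversely, under {\amindependence} with parameter $D$ we have $\nu(u)\sum_v |\calI_\mu^{\Lambda}(u \rightarrow v)| \leq D\,\nu(u)$; for every $u$ with $\nu(u) > 0$ we may divide to obtain the row-sum bound, while for $u$ with $p(u) = 0$ the conditioning on $\xi_u = 1$ is infeasible and so the entire row of $\calI_\mu^{\Lambda}$ vanishes by convention, giving the bound trivially. Since $\Lambda$ was arbitrary, the equivalence holds.

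There is no real obstacle beyond bookkeeping for the infeasibility edge cases; the substantive content is the symmetry identity, which is precisely the reversibility relation forced by the homogeneity of $\mu$ and cleanly converts row bounds into $\nu$-weighted column bounds.
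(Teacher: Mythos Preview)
Your proof is correct and follows essentially the same route as the paper: both hinge on the symmetry $\nu(v)\,p_v(u)=\nu(u)\,p_u(v)$ (Bayes' rule / detailed balance) to convert the $\nu$-weighted column sum $\sum_v \nu(v)\,|\calI_\mu^{\Lambda}(v\rightarrow u)|$ into $\nu(u)$ times the row sum $\sum_v |\calI_\mu^{\Lambda}(u\rightarrow v)|$. Your version is slightly more careful about the infeasibility and $\nu(u)=0$ edge cases, but the substantive argument is identical.
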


We introduce a formal tool that turns a non-homogeneous distribution into an $n$-homogeneous distribution. We will use this tool in order to extend our result to non-homogeneous distributions.
\begin{definition}
Given a distribution $\mu$ over $[n]$. The \emph{homogenization} of $\mu$ is a distribution $\muhom$ over $[2n]$ such that:
\[
\begin{cases}
\muhom(\sigma\cup\setof{i+n|i\in [n]\setminus \sigma}) := \mu(\sigma) & \forall \sigma \subseteq [n],\\
\muhom(\sigma) := 0 & \mbox{otherwise.}
\end{cases}
\]
\end{definition}

In words, $\muhom$ is obtained from $\mu$ by completing every outcome $\sigma\subseteq [n]$ with $(i+n)$ for all $i$ not in $\sigma$. The distribution $\muhom$ is hence always $n$-homogeneous and furthermore, the first $n$ entries of a random vector $\xxi_\text{hom}\in\setof{0,1}^{2n},\xxi_{\text{hom}}\sim \muhom$ are distributed as the entries of a vector $\xxi\in \setof{0,1}^n, \xxi \sim\mu$. 

\section{Proof of the Chernoff Bound}\label{sec:proof_complete}
In this section, we are going to prove Theorem~\ref{theorem:main_contribution} which is restated below. The main part of the proof is to establish a bound on the moment generating function given in Lemma~\ref{lemma:first_part}. Given Lemma~\ref{lemma:first_part}, the proof of the theorem follows from the standard Chernoff argument and is reported later in this section. In order to simplify the exposition of the proof of Lemma~\ref{lemma:first_part} we use the helper Lemma~\ref{lemma:second_part} that we are going to prove immediately after. We conclude the section with the proof of Corollary~\ref{cor:homogenization}.

Below, we restate Theorem~\ref{theorem:main_contribution}, however, we omit the scalar parameter $R$. The full Theorem~\ref{theorem:main_contribution} follows directly by rescaling the variables.

\begin{theorem}[Theorem~\ref{theorem:main_contribution} restated.] \label{thm:main_theorem}
Suppose $(\xi_1,  \xi_2,\dots, \xi_n) \in \{0, 1\}^n$ is a random vector of $\{0, 1\}$ variables whose distribution $\mu$ is $k$-homogeneous and {\iindependent} with parameter $D$.  Let $\YY_1, \YY_2, \dots , \YY_m\in \R^{d\times d}$ be a collection of symmetric positive semidefinite matrices such that $\norm{\YY_i} \leq 1, \forall i$ and $\mu_{\min} \II \preceq \expec{\xxi \sim \mu}{\sum_{i} \xi_i \YY_i} \preceq \mu_{\max} \II$. Then $\forall\ 0\leq \delta\leq 1$, 
$$ \prob{}{\lambda_{\min}\left( \sum_i \xi_i \YY_i \right) \leq (1-\delta) \mu_{\min}} \leq d e^{-\frac{\delta^2}{20D^2} \mu_{\min}}$$
$$ \prob{}{\lambda_{\max}\left( \sum_i \xi_i \YY_i \right) \geq (1+\delta) \mu_{\max}} \leq d e^{-\frac{\delta^2}{20D^2} \mu_{\max}}$$
\end{theorem}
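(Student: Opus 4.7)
The plan is to follow the strategy sketched in Section~\ref{sec:proof_overview}. The crux is to establish the trace moment-generating function bound
$$\tr\!\left(\expec{\xxi\sim \mu}{e^{\HH + \theta \sum_{i}\xi_i \YY_i}}\right) \leq \tr\!\left(e^{\HH+(\theta+c\theta^2)\expec{\xxi \sim \mu}{\sum_{i}\xi_i \YY_i}}\right)$$
for every symmetric $\HH$, every $\theta$ with $|\theta|\leq 1/c$, and some $c = O(D^2)$; once this is in hand, both tail bounds of Theorem~\ref{thm:main_theorem} follow from the standard matrix-Chernoff route, by setting $\HH = 0$, applying Markov's inequality to the trace, and optimizing $\theta = \Theta(\delta/D^2)$.

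I will prove the MGF inequality by induction on $k$. The base case $k = 0$ is immediate since the sum is then identically zero. For the inductive step, I use $k$-homogeneity to decompose sampling from $\mu$ into first drawing $v \in [n]$ from $\nu$ (with $\nu(v) = p(v)/k$), then drawing the remaining $k-1$ coordinates from the conditional distribution $\mu_v$. By Definition~\ref{def:iindependence} applied with $\Lambda = \{v\}$, $\mu_v$ is $(k-1)$-homogeneous and \iindependent{} with the same parameter $D$, so the inductive hypothesis applies with the shifted matrix $\HH + \theta \YY_v$. Writing $\expec{\xxi\sim\mu_v}{\sum_i \xi_i \YY_i} = \expec{\xxi \sim \mu}{\sum_i \xi_i \YY_i} + \ZZ_v$ (the definition of $\ZZ_v$ from Section~\ref{sec:proof_overview}), pulling the $v$-independent piece out of the exponent, and applying Golden--Thompson to split off the remainder reduces the inductive step to the operator inequality
$$\expec{v\sim \nu}{\exp\!\left((\theta+c\theta^2)\ZZ_v - c\theta^2 \YY_v\right)} \preceq \II.$$

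Proving this operator inequality is the main obstacle, and it is where both forms of independence enter. The \iindependence{} assumption yields $-D\II \preceq \ZZ_v \preceq D\II$ pointwise: writing $\ZZ_v = \sum_j \alpha^{(v)}_j \YY_j$ with $\alpha^{(v)}_j = \calI_\mu^\emptyset(v\to j)$, the bounds $\sum_j |\alpha^{(v)}_j| \leq D$ and $0\preceq \YY_j \preceq \II$ give $\|\ZZ_v\|\leq D$. With $c = \Theta(D^2)$ and $|\theta| \leq 1/c$ this makes $(\theta+c\theta^2)\ZZ_v \preceq \II$, so Fact~\ref{fact:matrix_exp} applies and gives
$$e^{(\theta+c\theta^2)\ZZ_v - c\theta^2 \YY_v} \preceq \II + (\theta+c\theta^2)\ZZ_v - c\theta^2 \YY_v + 2(\theta+c\theta^2)^2 \ZZ_v^2 + 2c^2\theta^4 \YY_v^2.$$
When I average over $v\sim\nu$, the term linear in $\ZZ_v$ vanishes, because a short computation using $k$-homogeneity (swapping sums and using $\sum_v p(v)\Pr[\xi_j{=}1|\xi_v{=}1] = \mathbb{E}[\xi_j \sum_v \xi_v] = k\, p(j)$) gives $\expec{v\sim\nu}{\ZZ_v} = 0$. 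For the quadratic term I expand $\ZZ_v^2 = \sum_{j,\ell} \alpha^{(v)}_j \alpha^{(v)}_\ell \YY_j \YY_\ell$ and use Fact~\ref{fact:matrix_sqr} together with $\YY_j \preceq \II$ to bound it by $D \sum_j |\alpha^{(v)}_j| \YY_j$. Averaging and invoking Lemma~\ref{lemma:ii-rmi_preliminaries} (the equivalence of \iindependence{} and \amindependence{} for $k$-homogeneous distributions), the coordinate-wise estimate $\expec{v\sim\nu}{|\alpha^{(v)}_j|} \leq (D/k)\, p(j)$ yields $\expec{v\sim\nu}{\ZZ_v^2} \preceq D^2 \expec{v\sim\nu}{\YY_v}$. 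Combined with $\YY_v^2 \preceq \YY_v$, choosing $c = \Theta(D^2)$ large enough makes the $-c\theta^2 \expec{v\sim\nu}{\YY_v}$ term dominate the sum of the two remaining positive terms, closing the operator inequality.

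The whole argument thus rests on two ingredients working in tandem: the pointwise control of $\ZZ_v$ provided by \iindependence{}, which lets the matrix exponential be replaced by a quadratic Taylor-like majorant, and the averaged control supplied by \amindependence{} via Lemma~\ref{lemma:ii-rmi_preliminaries}, which lets the resulting second-moment term be compared directly against $\expec{v\sim\nu}{\YY_v} = (1/k)\expec{\xxi\sim\mu}{\sum_i \xi_i \YY_i}$ instead of a direction-blind quantity like the trace. This is what makes the final exponent scale with $\mu_{\min}$ or $\mu_{\max}$ rather than with $d$ or with $k$, and the rest of the derivation is bookkeeping to track the constant $c$ through the Chernoff optimization to produce the $-\delta^2 \mu_{\max/\min}/O(D^2)$ exponent claimed in the statement.
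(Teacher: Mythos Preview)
Your proposal is correct and follows essentially the same route as the paper: induction on $k$ for the trace-MGF inequality (Lemma~\ref{lemma:first_part}), the reduction via Golden--Thompson to the operator inequality $\expec{v\sim\nu}{e^{(\theta+c\theta^2)\ZZ_v - c\theta^2\YY_v}}\preceq\II$ (Lemma~\ref{lemma:second_part}), and the proof of the latter via Fact~\ref{fact:matrix_exp}, the vanishing of $\expec{v\sim\nu}{\ZZ_v}$, and the bound $\expec{v\sim\nu}{\ZZ_v^2}\preceq D^2\expec{v\sim\nu}{\YY_v}$ obtained from $\ell_\infty$-independence together with Lemma~\ref{lemma:ii-rmi_preliminaries}. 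The only cosmetic difference is that the paper applies the MGF lemma with $\HH=-\theta\,\expec{\xxi\sim\mu}{\sum_i\xi_i\YY_i}$ rather than $\HH=0$, which amounts to the same Chernoff computation.
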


\par In the remaining of the section, we suppose that we are given a distribution $\mu$ that is $k$-homogeneous, {\iindependent} w.p. $\Dinf$, and {\amindependent} w.p. $\Dam$.
\begin{lemma}\label{lemma:first_part}
Suppose that $\mu$ is $k$-homogeneous, {\iindependent} w.p. $\Dinf$, and {\amindependent} w.p. $\Dam$.
 There exists a choice of  $c=\Theta(\Dinf \Dam)$ such that for every $\theta \in[-1/(2c),1/(2c)]$ and for every symmetric matrix $\HH$,
$$ \tr\left( \expec{\xxi \sim \mu}{e^{\HH + \theta \sum_{i}\xi_i \YY_i}} \right) \leq \tr \left( e^{\HH+  (\theta+c\theta^2)   \expec{\xxi \sim \mu}{\sum_{i}\xi_i \YY_i} }\right) $$
\end{lemma}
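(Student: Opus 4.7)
I plan to argue by induction on $k$. The base case $k=0$ is immediate: $\mu$ is the point mass on the zero vector, so both sides equal $\tr(e^{\HH})$. For the inductive step, I exploit the two-stage sampling decomposition of $k$-homogeneous distributions: first draw an index $v \sim \nu$ where $\nu(v) = p(v)/k$, then draw the remaining $k-1$ coordinates from $\mu_v$ (the distribution conditioned on $\xi_v = 1$). Restricted to $[n] \setminus \{v\}$, the distribution $\mu_v$ is $(k-1)$-homogeneous, and it inherits {\iindependence} with parameter $\Dinf$ (and hence, by Lemma~\ref{lemma:ii-rmi_preliminaries}, {\amindependence} with parameter $\Dam$), since both definitions quantify uniformly over all conditionings.

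Writing $\MM := \expec{\xxi \sim \mu}{\sum_i \xi_i \YY_i}$ and $\ZZ_v := \expec{\xxi \sim \mu}{\sum_i \xi_i \YY_i \mid \xi_v = 1} - \MM$, I apply the inductive hypothesis to $\mu_v$ with the ``header'' matrix $\HH + \theta \YY_v$. Using the identity $\expec{\mu_v}{\sum_{i \neq v} \xi_i \YY_i} = \MM + \ZZ_v - \YY_v$, this yields
\[
\tr\left(\expec{\xxi \sim \mu_v}{e^{\HH + \theta \sum_i \xi_i \YY_i}}\right) \leq \tr\left(e^{\HH + (\theta + c\theta^2)\MM + (\theta + c\theta^2)\ZZ_v - c\theta^2 \YY_v}\right).
\]
Averaging over $v \sim \nu$, applying the Golden--Thompson inequality to separate the $v$-dependent factor, and using $\tr(\AA \XX) \leq \tr(\AA)\norm{\XX}$ for PSD $\AA$ and PSD $\XX$, I obtain
\[
\tr\left(\expec{\xxi \sim \mu}{e^{\HH + \theta \sum_i \xi_i \YY_i}}\right) \leq \tr\left(e^{\HH + (\theta + c\theta^2)\MM}\right) \cdot \lambda_{\max}\left(\expec{v \sim \nu}{e^{(\theta + c\theta^2)\ZZ_v - c\theta^2 \YY_v}}\right).
\]
Closing the induction then reduces to showing that the matrix expectation inside $\lambda_{\max}(\cdot)$ is $\preceq \II$, which I would package as Lemma~\ref{lemma:second_part}.

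For this auxiliary bound I apply Fact~\ref{fact:matrix_exp} pointwise in $v$; this is valid because $(\theta + c\theta^2)\ZZ_v \preceq \II$ for $|\theta| \leq 1/(2c)$, using $\norm{\ZZ_v} \leq \Dinf$. Three observations then close the argument: (i) $\expec{v \sim \nu}{\ZZ_v} = \matzero$, which follows from a direct swap of sums using $k$-homogeneity ($\sum_v \xi_v \equiv k$ forces $\expec{v \sim \nu}{p_v(i)} = p(i)$, equivalently, two-stage sampling reproduces $\mu$); (ii) $\expec{v}{\YY_v^2} \preceq \expec{v}{\YY_v}$ because $\YY_v \preceq \II$; and (iii) $\expec{v}{\ZZ_v^2} \preceq 2 \Dinf \Dam \expec{v \sim \nu}{\YY_v}$. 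Assembling these, for $|\theta| \leq 1/(2c)$ the expansion collapses to $\II + \theta^2 \expec{v}{\YY_v} \cdot (-c + O(\Dinf \Dam))$, which is $\preceq \II$ whenever $c = \Theta(\Dinf \Dam)$ is chosen large enough.

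\textbf{Main obstacle.} The crux is ingredient (iii), since $\ZZ_v = \sum_i (p_v(i) - p(i))\YY_i$ is a \emph{signed} combination of the $\YY_i$'s whereas {\amindependence} only controls the average of absolute marginal deviations. My plan is to split $\ZZ_v = \AA_v - \BB_v$, where $\AA_v$ and $\BB_v$ are PSD combinations of the $\YY_i$'s with coefficients equal to the positive and negative parts of $p_v(i) - p(i)$, respectively. Fact~\ref{fact:matrix_sqr} gives $-\AA_v\BB_v - \BB_v\AA_v \preceq \AA_v^2 + \BB_v^2$, so $\ZZ_v^2 \preceq 2(\AA_v^2 + \BB_v^2)$. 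Since {\iindependence} ensures $\sum_i |p_v(i) - p(i)| \leq \Dinf$, both $\AA_v, \BB_v \preceq \Dinf \II$, yielding $\AA_v^2 \preceq \Dinf \AA_v$ and $\BB_v^2 \preceq \Dinf \BB_v$. Finally, {\amindependence} gives $\expec{v \sim \nu}{|p_v(i) - p(i)|} \leq (\Dam/k)\, p(i)$, and combining with the identity $\sum_i p(i) \YY_i = k \expec{v \sim \nu}{\YY_v}$ delivers $\expec{v}{\ZZ_v^2} \preceq 2\Dinf\Dam \expec{v \sim \nu}{\YY_v}$. This splitting is the decisive step where both dependence parameters enter simultaneously and is what accounts for the $\Theta(\Dinf \Dam)$ rate in the final constant $c$.
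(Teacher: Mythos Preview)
Your proposal is correct and follows essentially the same approach as the paper: induction on $k$, the two-stage sampling decomposition, inductive hypothesis applied with header $\HH+\theta\YY_v$, Golden--Thompson to peel off the $v$-dependent factor, and then the expansion via Fact~\ref{fact:matrix_exp} combined with $\expec{v}{\ZZ_v}=\matzero$, $\YY_v^2\preceq\YY_v$, and a bound on $\expec{v}{\ZZ_v^2}$. The only noticeable difference is in how you obtain ingredient~(iii): you split $\ZZ_v=\AA_v-\BB_v$ into positive and negative parts and use $\ZZ_v^2\preceq 2(\AA_v^2+\BB_v^2)\preceq 2\Dinf(\AA_v+\BB_v)$, whereas the paper expands $\ZZ_v^2$ term by term, applies Fact~\ref{fact:matrix_sqr} to each cross term $\delta_v(i)\YY_i,\delta_v(j)\YY_j$, and regroups to obtain $\ZZ_v^2\preceq \Dinf\sum_i|p_v(i)-p(i)|\YY_i$ directly (without the factor~$2$). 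Both routes are valid and the extra constant is absorbed into the $\Theta(\Dinf\Dam)$.
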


For the sake of the clarity of the exposition we prove Lemma~\ref{lemma:first_part} taking for granted the following Lemma~\ref{lemma:second_part} that we are going to prove later.
\begin{lemma}\label{lemma:second_part}
Suppose that $\mu$ is $k$-homogeneous, {\iindependent} w.p. $\Dinf$, and {\amindependent} w.p. $\Dam$. Suppose also that $k\geq1$. For a fixed $v\in [n]$ define the quantity $\ZZ_v := \expec{\xxi\sim \mu}{\sum_{i} \xi_i \YY_i} - \expec{\xxi \sim\mu_v}{\sum_{i} \xi_i \YY_i} $ that measures how much the conditioning on $v$ changes the expectation of the sum. Then if $c\geq 5\Dinf \Dam$, for all $\theta \in[-1/(2c),1/(2c)]$,
$$ \expec{v\sim \nu}{ e^{(\theta+c\theta^2) \ZZ_v - c\theta^2 \YY_v}} \preceq \II.$$
\end{lemma}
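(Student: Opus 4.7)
The approach is to bound $e^{(\theta+c\theta^2)\ZZ_v - c\theta^2 \YY_v}$ pointwise in $v$ via Fact~\ref{fact:matrix_exp}, setting $\AA := (\theta+c\theta^2)\ZZ_v$ and $\BB := c\theta^2 \YY_v$, and then to average over $v \sim \nu$. The hypothesis $\AA \preceq \II$ follows from \iindependence: $\smallnorm{\ZZ_v} \leq \smallnorm{\pp - \pp_v}_1 \leq \Dinf$, and $|\theta+c\theta^2| \leq 3/(4c)$ in the allowed range of $\theta$, so $\smallnorm{\AA} < 1$ for $c$ as in the lemma. Taking expectation over $\nu$ kills the linear-in-$\ZZ_v$ term because $\expec{v\sim\nu}{\ZZ_v} = \veczero$---a direct consequence of $k$-homogeneity via $\sum_v p(v) p_v(i) = \expec{\xxi\sim\mu}{\xi_i \sum_v \xi_v} = k\, p(i)$---so everything reduces to showing
\[
  2\expec{v\sim\nu}{\AA^2} + 2\expec{v\sim\nu}{\BB^2} \preceq c\theta^2 \expec{v\sim\nu}{\YY_v}.
\]

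The $\BB^2$ piece is routine: $0\preceq \YY_v \preceq \II$ gives $\YY_v^2 \preceq \YY_v$, and $2c^2\theta^4 \leq \tfrac12 \theta^2$ in the allowed range of $\theta$, so $2\expec{v}{\BB^2} \preceq \tfrac12 \theta^2 \expec{v}{\YY_v}$. The main obstacle is controlling $\expec{v}{\ZZ_v^2}$ in a direction-aware way by $\expec{v}{\YY_v}$, rather than by a multiple of $\II$ that would ruin the Chernoff constants in the isotropic regime. Writing $\ZZ_v = \sum_i a_i^{(v)} \YY_i$ with $a_i^{(v)} := p(i) - p_v(i)$, I plan to expand $\ZZ_v^2$, pair each ordered $(i,j)$ with $(j,i)$, and invoke Fact~\ref{fact:matrix_sqr}---applied to $(\YY_i, \YY_j)$ or to $(\YY_i, -\YY_j)$ according to the sign of $a_i^{(v)} a_j^{(v)}$---to dominate the symmetrised cross-term $a_i^{(v)} a_j^{(v)}(\YY_i \YY_j + \YY_j \YY_i)$ by $|a_i^{(v)}||a_j^{(v)}|(\YY_i^2 + \YY_j^2)$ in both sign cases. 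Folding in the diagonal, the double sum collapses to
\[
  \ZZ_v^2 \preceq \Big(\sum_j |a_j^{(v)}|\Big) \sum_i |a_i^{(v)}| \YY_i^2 \preceq \Dinf \sum_i |p(i) - p_v(i)|\, \YY_i,
\]
where the first step uses \iindependence\ for $\smallnorm{\pp - \pp_v}_1$ and the second uses $\YY_i^2 \preceq \YY_i$.

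To finish, I would average over $v \sim \nu$ and invoke the \amindependence\ bound $\expec{v\sim\nu}{|p(i) - p_v(i)|} \leq (\Dam/k) p(i)$ (the equivalent formulation from Lemma~\ref{lemma:ii-rmi_preliminaries}), together with $\expec{v\sim\nu}{\YY_v} = k^{-1} \sum_i p(i) \YY_i$, to obtain $\expec{v}{\ZZ_v^2} \preceq \Dinf \Dam\, \expec{v}{\YY_v}$. Multiplying by $2(\theta + c\theta^2)^2 \leq (9/2)\theta^2$, the target matrix inequality reduces to the scalar condition $\tfrac92 \Dinf\Dam + \tfrac12 \leq c$, which is covered by the choice $c = 5\Dinf\Dam$ provided $\Dinf\Dam \geq 1$ (a harmless constant bump handles any edge case). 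Putting everything together yields $\expec{v\sim\nu}{e^{\AA - \BB}} \preceq \II$, as required.
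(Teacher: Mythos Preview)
Your proposal is correct and follows essentially the same route as the paper: apply Fact~\ref{fact:matrix_exp} with $\AA=(\theta+c\theta^2)\ZZ_v$ and $\BB=c\theta^2\YY_v$, use $\expec{v\sim\nu}{\ZZ_v}=\veczero$, and reduce to the two bounds $\expec{v}{\ZZ_v^2}\preceq \Dinf\Dam\,\expec{v}{\YY_v}$ (your inline argument is exactly the paper's Claim~\ref{claim:bound_Zv}(ii), including the sign trick for Fact~\ref{fact:matrix_sqr}) and $\YY_v^2\preceq\YY_v$, with the same constants $9/2$ and $1/2$. Your explicit justification of $\expec{v\sim\nu}{\ZZ_v}=\veczero$ and the remark on the edge case $\Dinf\Dam\geq 1$ are minor additions not spelled out in the paper.
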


\begin{proof}[Proof of Lemma~\ref{lemma:first_part}]
We prove the statement by induction on the number $k$ of elements in each outcomes. If the only possible outcome is the empty set i.e. $\xxi =0$ deterministically then 
$$ \tr\left( \expec{\xxi \sim \mu}{e^{\HH + \theta \sum_{i}\xi_i \YY_i}} \right) = \tr\left( e^{\HH} \right) =\tr \left( e^{\HH+  (\theta+c\theta^2)   \expec{\xxi \sim \mu}{\sum_{i}\xi_i \YY_i} }\right) .$$ Hence the statement is trivially true. \\

Suppose that $\mu$ is $k$-homogeneous with $k\geq 1$, we can think about sampling an outcome from $\mu$ by first sampling one variable $v\sim \nu$ and then remaining $(k-1)$ variables with the distribution $\mu_v$. By the law of total probability we can write:
\begin{equation}\label{eq:1}
 \tr\left( \expec{\xxi \sim \mu}{e^{\HH +\theta \sum_{i} \xi_i \YY_i}} \right) = \expec{v \sim \nu}{ \tr \left(  \expec{\xxi \sim \mu_v}{e^{\HH + \theta\sum_{i} \xi_ i\YY_i}} \right)} 
\end{equation}
The outcomes of the distribution $\mu_v$ always contain $v$ and exactly $k-1$ other variables. Furthermore, since $\mu$ satisfy {\iindependence}  and {\amindependence}, then also $\mu_v$ for every $v\in [n]$ must satisfy the {\iindependence} and {\amindependence} with the same parameters. As a consequence, the restriction of the distribution $\mu_v$ to all the variables except $v$ is $(k-1)$-homogeneous and satisfy the hypothesis of the lemma.   Hence we can apply the inductive hypothesis and we get
\begin{equation}\label{eq:2}
\tr \left(  \expec{\xxi \sim \mu_v}{e^{\HH + \theta \YY_v + \theta\sum_{i\neq v} \xi_i \YY_i}} \right) \leq \tr \left(  e^{\HH + \theta \YY_v + (\theta+c\theta^2) \expec{\xxi \sim \mu_v}{\sum_{i\neq v}\xi_i \YY_i}} \right). 
\end{equation}
In order to simplify the notation we introduce the quantity $\ZZ_v := \expec{\xxi \sim\mu_v}{\sum_{i} \xi_i \YY_i} - \expec{\xxi\sim \mu}{\sum_{i} \xi_i \YY_i} $.
We substitute (\ref{eq:2}) into (\ref{eq:1}).
\begin{eqnarray*}
 \tr\left( \expec{\xxi \sim \mu}{e^{\HH +\theta \sum_{i} \xi_i \YY_i}} \right) &\leq &\expec{v \sim\nu}{ \tr \left(  e^{\HH + (\theta+c\theta^2) \expec{\xxi \sim\mu_v}{\sum_{i}\xi_i \YY_i} - c \theta^2 \YY_v } \right)} \\
 &= & \expec{v\sim \nu}{ \tr \left(  e^{\HH + (\theta+c\theta^2) \expec{\xxi\sim \mu}{\sum_{i} \xi_i \YY_i} + (\theta+c\theta^2) \ZZ_v -c\theta^2 \YY_v}  \right)}  \\
 &\leq & \expec{v\sim \nu}{ \tr \left(  e^{\HH + (\theta+c\theta^2) \expec{\xxi\sim \mu}{\sum_{i} \xi_i \YY_i}} e^{ (\theta+c\theta^2) \ZZ_v -c\theta^2 \YY_v}  \right)} \quad\mbox{(Golden-Thompson ineq.)}  \\
  &= &   \tr \left(  e^{\HH + (\theta+c\theta^2) \expec{\xxi \sim \mu}{\sum_{i}\xi_i \YY_i}} \expec{v\sim\nu}{ e^{ (\theta+c\theta^2) \ZZ_v - c\theta^2 \YY_v}}  \right)  \\
& \leq & \tr \left(  e^{\HH + (\theta+c\theta^2) \expec{\xxi \sim \mu}{\sum_{i}\xi_i \YY_i}}\right) \norm{\expec{v\sim \nu}{ e^{(\theta+c\theta^2) \ZZ_v - c\theta^2 \YY_v}}} \\
& \leq & \tr \left(  e^{\HH + (\theta+c\theta^2) \expec{\xxi \sim \mu}{\sum_{i}\xi_i \YY_i}}\right) \quad \mbox{(Lemma~\ref{lemma:second_part})}
\end{eqnarray*}

\end{proof}

It remains to prove Lemma~\ref{lemma:second_part}. We first provide the following claims where we exploit {\iindependence} and {\amindependence} of $\mu$. 
\begin{claim}\label{claim:bound_Zv}
Under the hypothesis of Lemma~\ref{lemma:second_part}.
\begin{enumerate}[label=(\roman*)]
\item 
$$ \ZZ_v \preceq \Dinf \II $$
\item $$ \expec{v\sim \nu}{\ZZ_v^2} \preceq \Dinf \Dam \expec{v\sim \nu}{\YY_v} $$
\end{enumerate}
\end{claim}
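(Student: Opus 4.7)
\textbf{Proof plan for Claim~\ref{claim:bound_Zv}.}
The plan is to expand $\ZZ_v$ using the influence matrix and then split it into positive and negative PSD pieces. Writing $a_v(i) := p(i) - p_v(i)$ (so that $a_v(i) = -\calI_\mu^\emptyset(v\to i)$ for $i\neq v$ and $a_v(v) = p(v)-1$), we have $\ZZ_v = \sum_i a_v(i)\YY_i$. Since $\mu$ is $k$-homogeneous, both $\pp$ and $\pp_v$ sum to $k$, so $\sum_i a_v(i) = 0$; together with $\sum_i |a_v(i)| \leq \Dinf$ from {\iindependence}, this shows that $\sum_{i:a_v(i)\geq 0} a_v(i) = \sum_{i:a_v(i)<0}(-a_v(i)) \leq \Dinf/2$. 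Define the PSD matrices $\AA_v := \sum_{i:a_v(i)\geq 0} a_v(i)\YY_i$ and $\BB_v := \sum_{i:a_v(i)<0}(-a_v(i))\YY_i$, so that $\ZZ_v = \AA_v - \BB_v$ and $\AA_v + \BB_v = \sum_i |a_v(i)|\YY_i$.

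For part~(i), $\YY_i\preceq \II$ yields $\AA_v \preceq \left(\sum_{i:a_v(i)\geq 0} a_v(i)\right)\II \preceq (\Dinf/2)\II$, and by the same argument $\BB_v \preceq (\Dinf/2)\II$. Since $\BB_v\succeq 0$, we conclude $\ZZ_v \preceq \AA_v \preceq \Dinf\II$.

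For part~(ii), first expand
\[
\ZZ_v^2 = \AA_v^2 - (\AA_v\BB_v + \BB_v\AA_v) + \BB_v^2,
\]
and apply Fact~\ref{fact:matrix_sqr} to the pair $(\AA_v, -\BB_v)$ to obtain $-(\AA_v\BB_v + \BB_v\AA_v) \preceq \AA_v^2 + \BB_v^2$; hence $\ZZ_v^2 \preceq 2(\AA_v^2 + \BB_v^2)$. Combining with the bounds $\AA_v,\BB_v \preceq (\Dinf/2)\II$ from part~(i) gives $\AA_v^2 \preceq (\Dinf/2)\AA_v$ and $\BB_v^2 \preceq (\Dinf/2)\BB_v$, so
\[
\ZZ_v^2 \preceq \Dinf(\AA_v+\BB_v) = \Dinf\sum_i |a_v(i)|\YY_i.
\]
Taking expectations over $v\sim\nu$ and substituting the entry-wise {\amindependence} bound $\expec{v\sim\nu}{|a_v(i)|} \leq \Dam\,\nu(i)$, we get
\[
\expec{v\sim\nu}{\ZZ_v^2} \preceq \Dinf\sum_i \expec{v\sim\nu}{|a_v(i)|}\YY_i \preceq \Dinf\Dam\sum_i \nu(i)\YY_i = \Dinf\Dam\,\expec{v\sim\nu}{\YY_v}.
\]

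The only nontrivial step is controlling the cross term $\AA_v\BB_v + \BB_v\AA_v$ coming from squaring a difference of non-commuting PSD matrices; Fact~\ref{fact:matrix_sqr} is tailored to precisely this situation and it costs a factor of $2$, which is exactly offset by the homogeneity-based saving of $1/2$ in the $\ell_1$-mass of $a_v$. Apart from this, the argument is a direct substitution of the two independence hypotheses together with the bound $\YY_i \preceq \II$.
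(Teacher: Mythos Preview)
Your proof is correct. For part~(i) the paper uses the cruder one-line bound $\ZZ_v \preceq \sum_i |p_v(i)-p(i)|\,\YY_i \preceq \|\pp_v-\pp\|_1\,\II \preceq \Dinf\II$, without splitting into positive and negative parts or invoking homogeneity; your decomposition in fact yields $\ZZ_v \preceq \AA_v \preceq (\Dinf/2)\II$, which is slightly sharper but not needed here.

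For part~(ii) the two arguments reach the same intermediate inequality $\ZZ_v^2 \preceq \Dinf\sum_i|a_v(i)|\YY_i$ by genuinely different routes. The paper expands the square $\bigl(\sum_i a_v(i)\YY_i\bigr)^2$ termwise and applies Fact~\ref{fact:matrix_sqr} to every pair $(\delta_v(i)\YY_i,\delta_v(j)\YY_j)$, then uses $\YY_i^2\preceq \YY_i$ and regroups to obtain $\bigl(\sum_j|a_v(j)|\bigr)\sum_i|a_v(i)|\YY_i \preceq \Dinf\sum_i|a_v(i)|\YY_i$; homogeneity is not used at this step. You instead aggregate into two PSD blocks $\AA_v,\BB_v$, apply Fact~\ref{fact:matrix_sqr} once to the pair $(\AA_v,-\BB_v)$, and exploit the homogeneity identity $\sum_i a_v(i)=0$ to get $\AA_v,\BB_v\preceq(\Dinf/2)\II$, so that the factor $2$ from the cross-term estimate is exactly cancelled. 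From that point the two proofs coincide. Your route is arguably cleaner (a single application of Fact~\ref{fact:matrix_sqr} rather than one per pair), at the price of using homogeneity; the paper's termwise expansion would give the same constant even without $\sum_i a_v(i)=0$.
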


\begin{proof}
\begin{enumerate}[label=(\roman*)]
\item
\begin{eqnarray*}
\ZZ_v & = & \sum_i \YY_i (p_{v}(i) - p(i)) \\
 & \preceq &  \sum_i \YY_i |p_{v}(i) - p(i)| \\
 & \preceq & \norm{\pp_v -\pp}_1 \II \\ 
 & \preceq & \Dinf \II \quad \mbox{(\iindependence)}
\end{eqnarray*}
\item
Let $\delta_v(i) := \begin{cases} +1, & p_v(i) - p(i) \geq 0 \\ -1,&  p_v(i) - p(i) < 0 \end{cases}$; so that $p_v(i) - p(i) =  \delta_v(i) |p_v(i) - p(i)|$.
\begin{eqnarray*}
\ZZ_v^2 & = & \left( \sum_i \YY_i \left(p_v(i) -p(i)\right) \right)^2\\
& = & \sum_i (p_v(i)-p(i))^2 \YY_i^2 + \sum_i \sum_{j<i} |p_v(i)-p(i)| |p_v(j)-p(j)| \left(\delta_v(i) \YY_i \delta_v(i) \YY_j + \delta_v(j)\YY_j \delta_v(i)\YY_i\right)\\
& \preceq & \sum_i (p_v(i)-p(i))^2 \YY_i^2 + \sum_i \sum_{j<i} |p_v(i)-p(i)| |p_v(j)-p(j)|((\delta_v(i) \YY_i)^2 + (\delta_v(j) \YY_j)^2 \quad \mbox{(Fact~\ref{fact:matrix_sqr})} \\
& \preceq & \sum_i |p_v(i)-p(i)| \sum_j |p_v(j)-p(j)| \YY_j    \quad \mbox{(since $\YY_i^2\preceq \YY_i$)}  \\
& \preceq & \Dinf \sum_j |p_v(j)-p(j)| \YY_j  \quad \mbox{(\iindependence)}
\end{eqnarray*}
\begin{eqnarray*}
 \expec{v\sim \nu}{\ZZ_v^2} & \preceq & \Dinf \expec{v\sim \nu}{\sum_i \YY_i |p_{v}(i) - p(i)|} \\
 & = & \Dinf \sum_v \nu(v) \sum_i \YY_i |p_{v}(i) - p(i) | \\
 & = & \Dinf \sum_i \YY_i  \sum_v \nu(v) |p_v(i) - p(i) |  \quad \mbox{(exchange the sums )} \\
 & \preceq & \Dinf \Dam  \sum_i \YY_i \frac{p(i)}{k} \quad \mbox{(\amindependence)} \\
 & = & \Dinf \Dam \expec{i\sim \nu}{\YY_i} 
\end{eqnarray*}
\end{enumerate}
\end{proof}

We can now proceed with the proof of Lemma~\ref{lemma:second_part}.
\begin{proof}[Proof of Lemma~\ref{lemma:second_part}.]
Since for each $v\in[n]$, 
\begin{eqnarray*}
(\theta+c\theta^2) \ZZ_v - c\theta^2 \YY_v & \preceq & (\theta + c \theta^2) \ZZ_v  \\
& \preceq & \frac{3}{4c} \ZZ_v \quad \mbox{($c\theta \leq 1/2$)} \\
& \preceq & \II \quad \mbox{(Claim~\ref{claim:bound_Zv} i, for $c\geq \Dinf$),}
\end{eqnarray*}
we can apply Fact~\ref{fact:matrix_exp} to $e^{(\theta+c\theta^2) \ZZ_v - c\theta^2 \YY_v} $.\
\begin{eqnarray*}
\expec{v \sim \nu}{ e^{(\theta+c\theta^2) \ZZ_v - c\theta^2 \YY_v}} & \preceq & \expec{v\sim \nu}{ \II + (\theta+c\theta^2) \ZZ_v - c\theta^2 \YY_v + 2(\theta+c\theta^2)^2 \ZZ_v^2 + 2 (c\theta^2)^2 \YY_v^2 } \\
& \preceq &  \II  - c\theta^2 \expec{v\sim \nu}{\YY_v} + \frac{9}{2} \theta^2 \expec{v\sim \nu}{\ZZ_v^2} + \frac{1}{2} \theta^2 \expec{v\sim \nu}{\YY_v^2 } \quad \mbox{($c\theta\leq 1/2$, $\expec{v}{\ZZ_v}=0$)} \\
& \preceq & \II  - c \theta^2 \expec{v\sim \nu}{\YY_v} + \frac{9}{2} \theta^2 \Dinf \Dam \expec{v\sim \nu}{\YY_v} + \frac{1}{2} \theta^2 \expec{v\sim \nu}{\YY_v} \quad  \mbox{(Claim~\ref{claim:bound_Zv} ii)} \\
& = &  \II  - \theta^2 \expec{v\sim \nu}{\YY_v} \left(c - \frac{9}{2} \Dinf \Dam - \frac{1}{2} \right) \\\
& \preceq & \II \quad \mbox{(for $c\geq 5 \Dinf\Dam$)}
\end{eqnarray*}
\end{proof}
Before giving the proof of Theorem~\ref{thm:main_theorem}, we give the proof of Lemma~\ref{lemma:ii-rmi_preliminaries} that we restate here for convenience. 
\begin{lemma}[Lemma~\ref{lemma:ii-rmi_preliminaries} restated]\label{lemma:ii-rmi}
A $k$-homogeneous distribution $\mu$ is {\iindependent} with parameter $D$ if and only if it is {\amindependent} with parameter $D$.
For an arbitrary $\tau =\{1\}^\Lambda$, $\Lambda \subset [n]$, consider the notation as described before, for the distribution $\mu_\tau$ conditional on always choosing the variables in $\Lambda$.
 $$ \norm{\pp_{u} - \pp}_1 \leq D, \quad \forall u$$ if and only if
$$ \expec{v}{|\pp_{v}-\pp|} \leq \frac{D}{k} \pp $$
\end{lemma}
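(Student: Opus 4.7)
The plan is to reduce both directions of the equivalence to a single symmetry identity coming from the fact that the marginal and the conditional marginal encode the same joint probability from two sides. Concretely, the key identity I want to establish first is
\[
p(v)\, p_v(u) \;=\; \prob{\xxi \sim \mu}{\xi_v=1,\,\xi_u=1} \;=\; p(u)\, p_u(v),
\]
which is just Bayes' rule. Subtracting $p(v)p(u)$ from both sides gives
\[
p(v)\bigl(p_v(u)-p(u)\bigr) \;=\; p(u)\bigl(p_u(v)-p(v)\bigr),
\]
and so $p(v)\,|p_v(u)-p(u)| = p(u)\,|p_u(v)-p(v)|$. This is the one identity that does all the work; $k$-homogeneity is not needed to derive it, but the two notions live on a common footing precisely because we can reweight by marginals on either side.

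With this symmetry in hand, both implications become one-line calculations. For {\iindependence} $\Rightarrow$ {\amindependence}, I fix a coordinate $u$ and compute
\[
\expec{v\sim\nu}{|p_v(u)-p(u)|} \;=\; \frac{1}{k}\sum_v p(v)\,|p_v(u)-p(u)| \;=\; \frac{p(u)}{k}\sum_v |p_u(v)-p(v)| \;\leq\; \frac{D}{k}\,p(u),
\]
using the symmetry in the middle step and the {\iindependence} bound $\norm{\pp_u - \pp}_1 \leq D$ at the end. For the reverse direction, I fix $u$ with $p(u)>0$ (the case $p(u)=0$ is vacuous, since then $\calI_\mu(u\to\cdot)\equiv 0$ by the convention in Definition~\ref{def:iindependence}) and compute
\[
\norm{\pp_u - \pp}_1 \;=\; \sum_j |p_u(j)-p(j)| \;=\; \frac{1}{p(u)}\sum_j p(j)\,|p_j(u)-p(u)| \;=\; \frac{k}{p(u)}\expec{v\sim\nu}{|p_v(u)-p(u)|} \;\leq\; D,
\]
again using the symmetry and then the {\amindependence} bound.

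Finally, to cover the conditional version stated in the lemma (for arbitrary $\Lambda \subset [n]$ and $\tau=\{1\}^{\Lambda}$), I observe that $\mu_\tau$ is itself a $(k-|\Lambda|)$-homogeneous distribution on $[n]\setminus\Lambda$, so the unconditional equivalence applied to $\mu_\tau$ immediately yields the conditional one. There is essentially no obstacle here: the only step that needs any care is checking that the symmetry identity still holds after the conditioning, which it does because $\mu_\tau$ is a genuine probability distribution and the same Bayes' rule argument applies verbatim. I also want to be slightly careful with the convention $\calI_\mu^\Lambda(v\to u)=0$ when the relevant conditioning is infeasible, so that terms with $p(v)=0$ or $p(u)=0$ drop out on both sides of the equivalence without contributing spurious mass.
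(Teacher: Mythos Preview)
Your proof is correct and follows essentially the same route as the paper: both arguments hinge on the Bayes-rule symmetry $p(v)\,p_v(u)=p(u)\,p_u(v)$ and use it to derive the identity $\expec{v\sim\nu}{|p_v(u)-p(u)|}=\frac{p(u)}{k}\norm{\pp_u-\pp}_1$, from which both directions follow immediately. Your treatment is in fact slightly more careful than the paper's in explicitly handling the $p(u)=0$ edge case and spelling out the passage to the conditional distribution $\mu_\tau$.
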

\begin{proof}
For any $u,v \in [n]$, by Bayes rule we can compute the probability that an outcome sampled form $\mu$ contains both $u$ and $v$ as 
$$ \nu(u) p_u(v)= \frac{p(u)}{k} p_{u}(v) = \frac{p(v)}{k} p_{v}(u) = \nu(v) p_v(u)$$
Then for all $u$,
\begin{eqnarray*}
 \expec{v\sim \nu}{|p_{v}(u) - p(u)|} & =& \sum_{v} \nu(v) |p_{v}(u) - p(u)| \\
 & =& \sum_{v} |\nu(v) p_{v}(u) - \frac{p(v)}{k} p(u)| \\
 & =& \sum_{v} |\nu(u) p_{u}(v) - \frac{p(v)}{k} p(u)| \\
 & =& \sum_{v} \nu(u) |p_{u}(v) - p(v)| \\
 & =& \nu(u) \norm{\pp_{u} -\pp}_1 \\
\end{eqnarray*}
From this equality we deduce that $ \norm{\pp_{u} - \pp}_1 \leq D,  \forall u$ if and only if
$ \expec{v}{|\pp_{v}-\pp|} \leq D \qq $
\end{proof}

We conclude the section with the proof of the main theorem and the proof of Corollary~\ref{cor:homogenization}.

\begin{proof}[Proof of  Theorem~\ref{thm:main_theorem}.]
The trace of a matrix is the sum of its eigenvalues. The eigenvalues of the exponential of a matrix are the exponentials of the eigenvalues of the matrix. As a consequence, the trace of $ \tr \left( \exp( \theta \sum_i \xi_i \YY_i )\right)$ is an upper bound on the $\exp\left(\lambda_{\max} \left( \theta \sum_i \xi_i \YY_i )\right)\right)$.\\
Upper bound.
\begin{eqnarray*}
&&\prob{}{\lambda_{\max}\left( \sum_i \xi_i \YY_i - \expec{\xxi \sim \mu}{\sum_i \xi_i \YY_i} \right) \geq \delta \mu_{\max}} \\
& \leq & \prob{}{\tr \left( \exp\left( \theta \left( \sum_i \xi_i \YY_i - \expec{\xxi \sim \mu}{\sum_i \xi_i \YY_i} \right) \right)\right)  \geq \exp(\delta \theta \mu_{\max})} \\
& \leq & \frac{\expec{}{\tr \left( \exp\left( \theta \left( \sum_i \xi_i \YY_i -\expec{\xxi \sim \mu}{\sum_i \xi_i \YY_i} \right) \right) \right)}}{\exp(\delta \theta \mu_{\max})} \\
& \stackrel{\mbox{\small (i)}}{\leq} & \frac{\tr \left( \exp( 5D^2\theta^2 \expec{}{ \sum_i \xi_i \YY_i })\right)}{\exp(\delta \theta \mu_{\max})} \\
& \leq & \frac{\tr \left( \exp( 5D^2\theta^2 \mu_{\max}\II)\right)}{\exp(\delta \theta \mu_{\max})} \\
& \leq & d \exp(5 D^2 \theta^2 \mu_{\max} - \delta \theta\mu_{\max}) \\
& = & \exp\left(-\frac{\delta^2 \mu_{\max}}{20}\right) \quad \mbox{(choose $\theta=\delta/(10D)$)}
\end{eqnarray*}
In inequality (i) we applied Lemma~\ref{lemma:first_part} choosing $\HH = - \theta \expec{\xxi \sim \mu}{\sum_i \xi_i \YY_i}$ and Lemma~\ref{lemma:ii-rmi}.\\
Lower bound.
Since $\lambda_{\min}(-A)=-\lambda_{\max}(A)$
\begin{eqnarray*}
&& \prob{}{\lambda_{min}\left(\sum_i \xi_i \YY_i - \expec{\xxi \sim \mu}{\sum_i \xi_i \YY_i} \right) \leq  -\delta\mu_{\min}} \\
& = & \prob{}{\lambda_{max} \left( - \left( \sum_i \xi_i \YY_i -\expec{\xxi \sim \mu}{\sum_i \xi_i \YY_i }\right) \right)  \geq -(-\delta)\mu_{\min} } \\
& \leq  &  \frac{ \expec{}{\tr\left( \exp\left( - \theta  \left( \sum_i \xi_i \YY_i - \expec{\xxi \sim \mu}{\sum_i \xi_i \YY_i } \right) \right) \right)} }{\exp \left(\theta \delta \mu_{\min} \right) }  \\
& \leq & d \exp( 5 D^2 \theta^2 \mu_{\min}  -\theta \delta \mu_{\min}) \\
& = & d\exp\left(-\frac{\delta^2 \mu_{\min}}{20 D^2}\right)  \quad \mbox{(choose $\theta=\delta/(10D)$)}
\end{eqnarray*}
\end{proof}

\begin{proof}[Proof of Corollary~\ref{cor:homogenization}]
Let $\Omega = [2n] $ and $\muhom$ be the homogenization of $\mu$. We want to apply Theorem~\ref{thm:main_theorem} to the distribution $\muhom$. We notice that $\muhom$ is $n$-homogeneous, and we claim that $\muhom$ is {\iindependent} with parameter $2D$. 
In order to prove this, we notice that for all the outcomes $\sigma \subseteq \Omega$ such that $v\in \sigma$ and $(v+n) \in \sigma$ for some $v \in [n]$, $\muhom(\sigma)=0$. As a consequence,  for an arbitrary feasible conditioning $\tau =\{1\}^{\Lambda' \cup \Lambda ''}, \Lambda' \subset [n], \Lambda''\subset \{i+n ,i\in [n] \setminus \Lambda'\}$,  we have that:
\[
\prob{\muhom}{\xi_i=1|\xi_j =1 \forall j \in \Lambda' \land \xi_{j+n} = 1\forall j\in\Lambda'' } =  1 - \prob{\mu}{\xi_{i+n}=1|\xi_j =1 \forall j \in \Lambda' \land \xi_j = 0\forall j \in\Lambda'' },
\]
so, by definition of influence matrix:
$$ \calI_{\mu^\text{hom}}^{\Lambda}(i,j) = -\calI_{\mu^\text{hom}}^{\Lambda}(i,j+n).$$
Hence 
\begin{eqnarray}\label{eq:cor_a}
\sum_{v\in \Omega} |\calI_{\mu^\text{hom}}^{\Lambda}(i,v)| & = & \sum_{j\in [n]} |\calI_{\mu^\text{hom}}^{\Lambda}(i,j)| + \sum_{j\in [n]} |\calI_{\mu^\text{hom}}^{\Lambda}(i,j+n)| \\
& = & 2\sum_{j\in [n]} |\calI_{\mu^\text{hom}}^{\Lambda}(i,j)|. \nonumber
\end{eqnarray}
Furthermore, for an arbitrary feasible conditioning $\tau =\{1\}^{\Lambda' \cup \Lambda ''}, \Lambda' \subset [n], \Lambda''\subset \{i+n,i\in [n] \setminus \Lambda'\}$
\[
 \prob{\muhom}{\xi_i=1|\xi_j =1 \forall j \in \Lambda' \land \xi_{j+n} = 1\forall j\in\Lambda'' } =  \prob{\mu}{\xi_i=1|\xi_j =1 \forall j \in \Lambda' \land \xi_j = 0\forall j \in\Lambda'' },
\]
i.e the restriction to the first $n$ variables of $\mu^\text{hom}_{\tau}$ is distributed as $\mu_{\sigma_\Lambda}$ where $\Lambda = \Lambda' \cup \{i|j+n \in \Lambda''\}$ and $\sigma \in \{0,1\}^{\Lambda}$, $\sigma(i)=1\ \forall i\in \Lambda', \sigma(i)=0\ \forall i+n \in \Lambda''$.

 So, since $\mu$ is {\tsiindependent} with parameter $D$, for all $i\in [n]$:
\begin{eqnarray}\label{eq:cor_b}
\sum_{j\in [n]} |\calI_{\mu^\text{hom}}^{\Lambda}(i,j)| &\leq & \sum_{j\in [n]} \Psi_\mu^{\sigma_\Lambda}(i,j)\\
& \leq & D \nonumber
\end{eqnarray}
and
\begin{eqnarray}\label{eq:cor_c}
\sum_{j\in [n]} |\calI_{\mu^\text{hom}}^{\Lambda}(i+n,j)| &\leq & \sum_{j\in [n]} \Psi_\mu^{\sigma_\Lambda}(i,j)\\
& \leq & D. \nonumber
\end{eqnarray}
Combining equations (\ref{eq:cor_a}), (\ref{eq:cor_b}), and (\ref{eq:cor_c}) we obtain that $\mu^{\text{hom}}$ is {\iindependent} with parameter $2D$. Finally, we define the collection of matrices $\YY_{i} = \matzero$ for $i \in \setof{ n+1, \dots, 2n}$ and we notice that 
\[
\prob{\xxi \sim \mu}{\sum_{i\in n} \xi_i \YY_i} = \prob{\xxi \sim \mu^\text{hom}}{\sum_{i\in \Omega} \xi_i \YY_i}
\] and \[
\expec{\xxi \sim \mu}{\sum_{i\in n} \xi_i \YY_i} = \expec{\xxi \sim \mu^\text{hom}}{\sum_{i\in \Omega} \xi_i \YY_i}.
\]
 The corollary follows by applying Theorem~\ref{theorem:main_contribution} to the distribution $\mu^\text{hom}$ and the collection of matrices $\YY_i$ for $i\in \Omega$. 
\end{proof}

\printbibliography

\appendix
\section{Missing Proofs}\label{sec:proof_appendix}

\begin{proof}[Proof of Proposition~\ref{prop:SCP_iindep}.]
For every fixed $\tau \subset [n]$ and $v\in [n]\setminus \tau$ we have to prove that 
\[
\sum_{j\in [n]} |\expec{}{\xi_j| \xi_v=1 \land \xi_i=1\ \forall i \in \tau} - \expec{}{\xi_j|\xi_v=0 \land \xi_i=1\ \forall i \in \tau}| \leq 2
\]
We split the sum over $j$ into three parts.\\
For $j\in \tau$ we have that  $\sum_{j\in \tau} |\expec{}{\xi_j|\xi_v=1 \land \xi_i=1\ \forall i \in \tau} - \expec{}{\xi_j|\xi_v=0 \land \xi_i=1\ \forall i \in \tau}| =0 $.\\
For $j=v$, trivially $|\expec{}{\xi_v|\xi_v=1 \land \xi_i=1\ \forall i \in \tau} - \expec{}{\xi_v|\xi_v=0 \land \xi_i=1\ \forall i \in \tau}| = 1$.\\
For $j\in [n]\setminus (\tau \cup \{v\})$, let $\nu$ be the coupling between $\xi'$ and $\xi''$ given in the definition of SCP. The distribution $\nu$ over $\{0,1\}^n \times \{0,1\}^n$ satisfies that
$$\sum_{\xx} \nu(\xx,\yy) = \prob{}{\xi''=\yy},$$
$$\sum_{\yy} \nu(\xx,\yy) = \prob{}{\xi'=\xx}.$$
By the definitions of $\xi'$ and $\xi''$, 
\begin{eqnarray*}
\sum_{j\in [n]\setminus (\tau \cup \{v\})} |\expec{}{\xi_j|\xi_v =1 \land \xi_i=1\ \forall i \in \tau} - \expec{}{\xi_j|\xi_v=0 \land \xi_i=1\ \forall i \in \tau}| &=& \sum_{\xx,\yy} \norm{\xx-\yy}_1 \nu(\xx,\yy) \\
& \leq & \sum_{\xx,\yy} \nu(\xx,\yy) =1,
\end{eqnarray*}
where the inequality follows from the fact that $\nu(\xx,\yy)\neq 0$ only if $\xx$ can be obtained from the value of $\yy$ by either changing a single from 0 to 1 or by leaving all entries unchanged.\\
Combining the three parts we obtain 
$$ \sum_{j\in [n]} |\expec{}{\xi_j| \xi_v=1 \land \xi_i=1\ \forall i \in \tau} - \expec{}{\xi_j|\xi_v=0 \land \xi_i=1\ \forall i \in \tau}| \leq 2. $$
\end{proof}

\begin{proof}[Proof of Fact~\ref{fact:matrix_sqr}]
$$ 0\preceq (\AA-\BB)^2 = \AA^2-\AA\BB-\BB\AA+\BB^2 $$
Equivalently,
$$ \AA\BB+\BB\AA \preceq \AA^2+\BB^2 .$$
\end{proof}

\begin{proof}[Proof of Fact~\ref{fact:matrix_exp}]
Since $\AA\preceq \II$ and $\BB \succeq 0$, then $\AA-\BB\preceq \II$ hence
\begin{eqnarray*}
e^{\AA - \BB} & \preceq & \II + \left( \AA -  \BB \right) + (\AA-\BB)^2\\
& \preceq & \II + \AA -  \BB +2 \AA^2+2\BB^2\quad \mbox{(Fact~\ref{fact:matrix_sqr})}.
\end{eqnarray*}
\end{proof}

\section{Spectral Graph Sparsification using Random Spanning Trees}\label{sec:tree_appendix}

We define random spanning trees as in \cite{KS18}.  Let $G = (V,E,w)$ be a connected undirected weighted graph positive edge weights $w : E \to \R$.
  For each $e \in E$, we assign an arbitrary direction so that $e = (i,j)$, and we the vector $\bb_e \in \R^V$ to have all zero entries except $\bb_e(i) = 1$ and $\bb_e(j) = -1$.
  The Laplacian of $G$ is $\LL_G = \sum_{e \in E}  w(e) \bb_e \bb_e^\trp$. Let ${\cal T}_G$ denote the set of all spanning trees of $G$.
  We use $\MM^\dagger$ to denote the Moore-Penrose pseudo-inverse of a matrix $\MM$.
  We let $\PPi = \LL_G \LL_G^{\dagger}= \II - \vecone\vecone^{\trp}/n$ be the projection matrix to image of $\LL_G$.
  
\begin{definition}[$w$-uniform distribution on trees]
\label{def:wtUnifTreeDistr}
Let ${\cal D}_G$ be a probability distribution on ${\cal T}_G$ such that
\begin{align*}
\Pr_{X \sim {\cal D}_G }[ X = T ] \propto \prod_{e \in T} w(e).
\end{align*}
\end{definition}
We refer to ${\cal D}_G$ as the $w$-uniform distribution on ${\cal
  T}_G$. When the graph $G$ is unweighted, this corresponds to the
uniform distribution on ${\cal T}_G$.
It was shown in \cite{BBL09a} that the distribution of edges in random spanning trees are Strongly Rayleigh.
\begin{fact}[Spanning Trees are Strongly Rayleigh]
\label{fac:spantreemargin}
In a connected weighted graph $G$, the $w$-uniform distribution on
spanning trees is $(n-1)$-homogeneous Strongly Rayleigh.
\end{fact}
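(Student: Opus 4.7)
The plan is to verify the two conditions separately. For $(n-1)$-homogeneity, I would simply invoke the fact that in a connected graph on $n$ vertices, every spanning tree has exactly $n-1$ edges by definition, so the indicator vector $\xxi_T \in \{0,1\}^E$ satisfies $\norm{\xxi_T}_1 = n-1$ with probability one under any distribution supported on ${\cal T}_G$. This step is immediate.

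The substantive step is showing the $w$-uniform distribution on ${\cal T}_G$ is Strongly Rayleigh, which by definition means its multivariate generating polynomial $g(z) = \sum_{T \in {\cal T}_G} \Pr[X = T] \prod_{e \in T} z_e$ is real stable. The key tool is Kirchhoff's Matrix-Tree theorem, which expresses the weighted sum $\sum_{T} \prod_{e \in T} w(e) z_e$ as $\det(\MM_z)$, where $\MM_z$ is any $(n-1) \times (n-1)$ principal submatrix of the weighted Laplacian $\LL_{G,z} := \sum_{e \in E} w(e) z_e \bb_e \bb_e^\trp$ obtained by deleting one row and the corresponding column. Up to the constant normalization $1/\sum_{T} \prod_{e\in T} w(e)$, this determinant is $g(z)$.

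The last step is to observe that $\MM_z$ has the form $\sum_e z_e \AA_e$ where each $\AA_e := w(e) (\bb_e')(\bb_e')^\trp$ is a rank-one positive semidefinite matrix (with $\bb_e'$ the restriction of $\bb_e$ to the surviving coordinates). I would then apply the classical theorem of Borcea--Brändén stating that for PSD matrices $\AA_1, \dots, \AA_m$, the polynomial $\det(\sum_e z_e \AA_e)$ is real stable in $z_1, \dots, z_m$. Since each $\AA_e$ is rank one, $\det(\MM_z)$ is multi-affine in the $z_e$, matching the form of a generating polynomial of a distribution on $\{0,1\}^E$. Thus $g(z)$ is a real stable multi-affine polynomial, which is exactly the condition that the underlying distribution be Strongly Rayleigh.

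I expect no genuine obstacle here, as this is essentially a textbook consequence of the Matrix-Tree theorem combined with Borcea--Brändén; the only mild point of care is the normalization constant and verifying that the choice of deleted row/column does not affect the conclusion (which it doesn't, by Kirchhoff's theorem giving the same value regardless of choice). The result is due to \cite{BBL09a}, and the above sketch recovers it.
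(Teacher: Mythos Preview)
Your proof sketch is correct and is essentially the standard argument. Note, however, that the paper does not actually prove this fact: it is stated as a \emph{Fact} and simply attributed to \cite{BBL09a} with the sentence ``It was shown in \cite{BBL09a} that the distribution of edges in random spanning trees are Strongly Rayleigh.'' There is nothing further to compare against; you have supplied the proof that the paper omits by citation, and your outline (Matrix-Tree theorem plus the Borcea--Br\"and\'en determinant stability theorem) is precisely the route taken in the cited reference.
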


\begin{definition}[Effective Resistance]
The effective resistance of a pair of vertices $u,v \in V_G$ is defined as 
\begin{align*}
R_{\mathrm{eff}} (u,v) = \bb^\top_{u,v} L^\dagger \bb_{u,v},
\end{align*}
where $\bb_{u,v} \in \R^V$ is an all zero vector, except for entries of $1$ at $u$ and $-1$ at $v$.
\end{definition}

The following standard facts about random spanning
trees can be found in \cite{KS18}. 

\begin{definition}[Leverage Score]
The statistical leverage score, which we will abbreviate to leverage score, of an edge $e = (u,v) \in E_G$ is defined as 
\begin{align*}
l_e = w_e R_{\mathrm{eff}}(u,v).
\end{align*}
\end{definition}

\begin{fact}[Spanning Tree Marginals]
\label{fac:spantreemargin}
The probability $\Pr[e]$ that an edge $e\in E_G$ appears in a tree sampled $w$-uniformly randomly from ${\cal T}_G$ is given by
\begin{align*}
\Pr[e] = l_e,
\end{align*}
where $l_e$ is the leverage score of the edge $e$.
\end{fact}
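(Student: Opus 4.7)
\textbf{Proof plan for Fact~\ref{fac:spantreemargin}.} The plan is to go through Kirchhoff's weighted matrix-tree theorem, followed by the Matrix Determinant Lemma, and finally an identification of a reduced-Laplacian quadratic form with the pseudoinverse quadratic form defining effective resistance. Throughout, let $\tau(H) = \sum_{T \in \calT_H} \prod_{f \in T} w_f$ denote the weighted tree count of a graph $H$.

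First I would write $\Pr[e] = 1 - \tau(G-e)/\tau(G)$, where $G-e$ is the graph with edge $e$ deleted (its weight set to $0$). By Kirchhoff's theorem, $\tau(G) = \det(\LL_G^{(u)})$, where $\LL_G^{(u)}$ is the principal submatrix obtained by deleting the row and column indexed by $u$, and similarly for $\tau(G-e) = \det(\LL_{G-e}^{(u)})$. Observe that $\LL_{G-e} = \LL_G - w_e \bb_e \bb_e^\trp$, and since $e = (u,v)$ has $u$ removed, the restriction of $\bb_e \bb_e^\trp$ to the complement of $u$ equals $\ee_v \ee_v^\trp$, where $\ee_v$ is the standard basis vector indexed by $v$ in $\R^{V \setminus \{u\}}$. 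Hence $\LL_{G-e}^{(u)} = \LL_G^{(u)} - w_e \ee_v \ee_v^\trp$.

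Next I would apply the Matrix Determinant Lemma to write
\[
\det(\LL_G^{(u)} - w_e \ee_v \ee_v^\trp) = \det(\LL_G^{(u)}) \bigl(1 - w_e \, \ee_v^\trp (\LL_G^{(u)})^{-1} \ee_v\bigr),
\]
which (using that $G$ is connected, so $\LL_G^{(u)}$ is invertible) gives
\[
\Pr[e] = 1 - \frac{\tau(G-e)}{\tau(G)} = w_e \, \ee_v^\trp (\LL_G^{(u)})^{-1} \ee_v.
\]
The bridge case $\tau(G-e) = 0$ is handled correctly since then $w_e \ee_v^\trp (\LL_G^{(u)})^{-1} \ee_v = 1$.

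The step I expect to be the main obstacle is identifying $\ee_v^\trp (\LL_G^{(u)})^{-1} \ee_v$ with $\bb_{u,v}^\trp \LL_G^\dagger \bb_{u,v} = R_{\mathrm{eff}}(u,v)$. To do this, I would build the $n \times n$ matrix $\MM$ obtained by extending $(\LL_G^{(u)})^{-1}$ with zeros in row and column $u$; then $\LL_G \MM \yy = \PPi \yy$ for any $\yy$ orthogonal to $\vecone$ (since both sides agree on the kernel of the projection $\PPi$ and on its complement). Because $\bb_{u,v} = \ee_v - \ee_u$ lies in $\im(\LL_G)$, this gives $\LL_G^\dagger \bb_{u,v} = \MM \bb_{u,v} + \alpha \vecone$ for some scalar $\alpha$, and hence $\bb_{u,v}^\trp \LL_G^\dagger \bb_{u,v} = \bb_{u,v}^\trp \MM \bb_{u,v} = \ee_v^\trp (\LL_G^{(u)})^{-1} \ee_v$, where the last equality uses that the $u$-th row and column of $\MM$ are zero. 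Combining this identification with the previous display yields $\Pr[e] = w_e R_{\mathrm{eff}}(u,v) = l_e$, completing the proof.
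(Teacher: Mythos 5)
Your proof is correct, but there is nothing in the paper to compare it against: the paper does not prove Fact~\ref{fac:spantreemargin} at all — it is stated as a standard fact about random spanning trees, cited from \cite{KS18} (ultimately the classical Kirchhoff/transfer-current fact that edge marginals equal leverage scores). Your route — write $\Pr[e]=1-\tau(G-e)/\tau(G)$, apply the weighted matrix-tree theorem to both numerator and denominator with the same vertex $u$ deleted, use $\LL_{G-e}^{(u)}=\LL_G^{(u)}-w_e\ee_v\ee_v^\trp$ and the matrix determinant lemma, and then identify $\ee_v^\trp(\LL_G^{(u)})^{-1}\ee_v$ with $\bb_{u,v}^\trp\LL_G^\dagger\bb_{u,v}$ — is a complete and standard derivation, and every step checks out. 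Two small points worth tightening: the justification of $\LL_G\MM\yy=\PPi\yy$ for $\yy\perp\vecone$ as ``both sides agree on the kernel of $\PPi$ and on its complement'' is phrased circularly; the actual verification is that the column sums of $\LL_G$ vanish, so $\vecone^\trp\LL_G^{(u)}$ equals minus the $u$-th row of $\LL_G$ restricted to $V\setminus\{u\}$, whence the $u$-th coordinate of $\LL_G\MM\yy$ is $-\vecone^\trp\yy_{-u}=y_u$ while the remaining coordinates reproduce $\yy_{-u}$ exactly. Also, your convention $\bb_{u,v}=\ee_v-\ee_u$ has the opposite sign of the paper's definition, which is harmless here since the vector only enters quadratically, but is worth a remark.
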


\begin{proof}[Sketch of proof of Corollary~\ref{cor:spantreesparsify}]
  We briefly sketch this proof, which is identical to the proof of Theorem 1.3 in \cite{KS18}, except for a change in the number of spanning trees needed for concentration.

Let $T \subseteq E$ be a random spanning tree of $G$ in the sense of
Definition~\ref{def:wtUnifTreeDistr}.
Let the weights of the edges in $T$ be given by $w' : T \to R$ where 
$w'(e) = w(e) / l_e$, where $l_e$ is the leverage score of $e$ in $G$.
Thus the Laplacian of the tree is $\LL_T = \sum_{e \in T}  w'(e) b_e
b_e^\trp = \sum_{e \in T}  \frac{w(e)}{l_e} b_e b_e^\trp$.
Then by Fact~\ref{fac:spantreemargin}, $Pr[ e \in T ] = l_e$, and
hence $\E{\LL_T} = \LL_G$.

Note also that for all $e \in E$,  $\norm{ (\LL_G^\dagger)^{1/2} w(e)b_e
b_e^\trp (\LL_G^\dagger)^{1/2} }= l_e $.
Consider the random matrix $ (\LL_G^\dagger)^{1/2}  \LL_T
(\LL_G^\dagger)^{1/2}$.
The distribution of edge in the spanning tree can be seen as an $n-1$
homogeneous vector in $\{0,1\}^m$ where $m = |E|$.

Consider now a union of $t = C \epsilon^{-2} \log n$ spanning trees $T_{1}, \ldots, T_{t}$, and consider the vector $\xxi \in \setof{0,1}^{tm}$consisting of $t$ concatenated indicator vectors for the presence of edge $i$ in spanning tree $j$.
This random vector is $t(n-1)$-homogeneous and {\iindependent} with parameter $2$, because the concatenation of independent random binary vectors has {\iindependence} equal to the maximum among the concatenated parts.

To apply Theorem~\ref{theorem:main_contribution},
let $\xi_i$ be an entry of this vector corresponding to the indicator of $e \in E$ for some tree $j$.
Let
\[
\AA_i =
 (\LL_G^\dagger)^{1/2} w'(e)\bb_e
\bb_e^\trp (\LL_G^\dagger)^{1/2}
\]
Note $\AA_e \succeq 0$.
Now $|| \AA_e || = 1$ and
\[ \frac{1}{t} \E{\sum_{i} \xi_i \AA_i} = \E{(\LL_G^\dagger)^{1/2} \LL_T (\LL_G^\dagger)^{1/2} } =
    (\LL_G^\dagger)^{1/2} \LL_G (\LL_G^\dagger)^{1/2},\]
  Thus, as each we get $\lambda_{\max}(\frac{1}{t} \E{\sum_{i} \xi_i \AA_i}) = 1$.
  In the space\footnote{We omit a formal version of the argument restricting to matrices orthogonal to the kernel of all the matrices, but this is standard and straightforward.} orthogonal to $\vecone$, we have $\lambda_{\min}( \frac{1}{t} \E{\sum_{i} \xi_i \AA_i}) = 1.$
This means we can apply Theorem~\ref{theorem:main_contribution}, with $R = 1/t$,
$\mu = 1$ and conclude that whp.
\[
  \norm{(\LL_G^\dagger)^{1/2} \LL_T (\LL_G^\dagger)^{1/2} - \PPi } \leq \epsilon
  .
  \]
\end{proof}

\section{{\iindependence} does not imply {\tsiindependence}}
\label{sec:example_appendix}
In this section, we prove that {\tsiindependence} is strictly stronger than {\iindependence}, including for homogeneous distributions. We do this by showing an example of a $(k+1)$-homogeneous distribution over the set $[n+1]$. This distribution is one-sided {\iindependent} with parameter $D=O(1)$ but has unbounded {\tsiindependence} as $n$ and $k$ grow.

Let $n\geq k >0$ be two integers. Let $O$ a set that will be the set of all the possible outcomes. For every set $S\in \binom{[n]}{k}$, we add to $O$
\[
\begin{cases}
S\cup \setof{n+1} & \mbox {if } S \neq \setof{1,\dots,k} \\
S\cup \setof{k+1} & \mbox {if } S = \setof{1,\dots,k} \\ 
\end{cases}
\]
Finally we define $\mu$ as the uniform distribution over all the outcomes in $O$.

\begin{claim}\label{claim:B1}
For every $\Lambda \subset [n+1]$, 
$$ \norm{\calI_\mu^\Lambda}_{\infty} \leq 2\frac{n-k}{k} +1 $$
\end{claim}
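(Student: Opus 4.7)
The proof proceeds by case analysis on $\Lambda$, exploiting the fact that $\mu$ is a one-outcome perturbation of the reference distribution $\mu_0$ that is uniform over $\{S \cup \{n+1\} : S \in \binom{[n]}{k}\}$. The crucial observation is that the marginal of $\mu$ on the coordinates $\{1, \dots, n\}$ coincides with the uniform distribution on $\binom{[n]}{k}$, since relabeling a single outcome (swapping $\{1,\dots,k,n+1\}$ for $\{1,\dots,k+1\}$) preserves the marginal on $[n]$. This immediately reduces a large portion of the case analysis.

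From this observation, whenever $\Lambda \subseteq [n]$ and $i, j \in [n]\setminus\Lambda$, the influence $\calI_\mu^\Lambda(i\to j)$ equals the corresponding influence under the uniform distribution on $\binom{[n]}{k}$. For the uniform distribution, a direct calculation gives $|\calI^\Lambda(i\to j)| = (n-k)/[(n-t)(n-t-1)]$ for $j\neq i$ and $|\calI^\Lambda(i\to i)| = (n-k)/(n-t)$, where $t=|\Lambda|$. Summing over feasible $j$ yields a row sum of $2(n-k)/(n-t)$ when $t \leq k-2$, and $(n-k)/(n-k+1)$ when $t=k-1$ (where the feasibility convention $\calI(i,j)=0$ for infeasible $\Lambda\cup\{i,j\}$ kicks in and eliminates would-be contributions).

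I then handle the remaining cases, organized by which of the two distinguished outcomes satisfies $\Lambda$: (i) only the special outcome $\{1,\dots,k+1\}$ does (which requires $\Lambda\subseteq\{1,\dots,k+1\}$, $n+1\notin\Lambda$, $k+1\in\Lambda$); (ii) only $\{1,\dots,k,n+1\}$ does (symmetric); and (iii) both do, which forces $\Lambda\subseteq\{1,\dots,k\}$. In cases (i) and (ii), $U_\mu^\Lambda$ differs from $U_{\mu_0}^\Lambda$ by exactly one outcome, so each conditional marginal shifts by at most $1/|U_\mu^\Lambda|$, contributing an additional $O(1)$ to the row sum (this is where the ``$+1$'' in the bound arises, essentially from the indicator-like swap between coordinates $k+1$ and $n+1$). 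I combine this perturbation estimate with the main $2(n-k)/(n-t)$ term from the uniform contribution.

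The main obstacle is case (iii), $\Lambda\subseteq\{1,\dots,k\}$, where $U_\mu^\Lambda$ consists of $\binom{n-|\Lambda|}{k-|\Lambda|}-1$ non-special outcomes plus the special outcome $\{1,\dots,k+1\}$, i.e.\ the conditional distribution is a mixture of a near-uniform ``bulk'' and a point mass. Here I must analyze $i\in\{1,\dots,k\}\setminus\Lambda$, $i=k+1$, $i\in\{k+2,\dots,n\}$, and $i=n+1$ separately, computing $P_\mu(\xi_j=1\mid\Lambda)$ and $P_\mu(\xi_j=1\mid\Lambda\cup\{i\})$ in closed form in terms of the binomials $\binom{n-|\Lambda|}{k-|\Lambda|}$ and $\binom{n-|\Lambda|-1}{k-|\Lambda|-1}$. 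The dominant term $2(n-k)/k$ in the bound arises from the $k$ and $n-k$ counts of coordinates in $\{1,\dots,k\}\setminus\Lambda$ and $\{k+2,\dots,n\}$ multiplied by their per-coordinate influence, after carefully applying the feasibility convention to discard the many infeasible $\Lambda\cup\{i,j\}$ entries that would otherwise inflate the row sum.
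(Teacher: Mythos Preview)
Your foundational observation in the first two paragraphs is incorrect, and the proof as written relies on it in a way that cannot be repaired without reorganization. You claim that the marginal of $\mu$ on the coordinates $\{1,\dots,n\}$ equals the uniform distribution on $\binom{[n]}{k}$, because ``swapping $\{1,\dots,k,n+1\}$ for $\{1,\dots,k+1\}$ preserves the marginal on $[n]$.'' But it does not: the outcome $\{1,\dots,k,n+1\}$ restricted to $[n]$ is the $k$-set $\{1,\dots,k\}$, whereas the outcome $\{1,\dots,k+1\}$ restricted to $[n]$ is the $(k+1)$-set $\{1,\dots,k+1\}$. So the swap changes the distribution on $[n]$; in particular the marginal of coordinate $k+1$ increases by $1/\binom{n}{k}$. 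Consequently, your blanket claim that for all $\Lambda\subseteq[n]$ and $i,j\in[n]\setminus\Lambda$ the influence $\calI_\mu^\Lambda(i\to j)$ agrees with the uniform $\binom{[n]}{k}$ influence is false. For a concrete counterexample take $\Lambda=\emptyset$, $i=1$, $j=k+1$: under $\mu$ one has $\Pr[\xi_{k+1}=1\mid\xi_1=1]=(\binom{n-2}{k-2}+1)/\binom{n-1}{k-1}$, which differs from the uniform value $(k-1)/(n-1)$.

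Your case analysis in paragraphs three and four overlaps with the regime where the observation fails (e.g.\ $\Lambda=\emptyset$ is your case~(iii), yet you also invoke the observation there), so the organization is inconsistent. The approach can be salvaged: rather than asserting exact agreement with the uniform distribution, observe that the conditional distribution of $\mu$ given any feasible $\Lambda$ differs from that of the reference $\mu_0$ by at most one outcome (added or removed), and bound the resulting shift in each conditional marginal by $O(1/|U_\mu^\Lambda|)$; this is precisely the perturbation idea you already use in cases~(i) and~(ii). The paper's own proof takes a different route: it observes (with some looseness) that conditioning on $\xi_j=1$ for $j\in[n]$ returns a distribution of the same form with $(n,k)$ replaced by $(n-1,k-1)$, so by induction it suffices to check $\Lambda=\emptyset$ and $\Lambda=\{n+1\}$ directly. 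That inductive reduction avoids your case explosion entirely.
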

\begin{proof}[Sketch of the proof.]
First we notice that also after conditioning on $\xi_j=1$ for some $j\in[n+1]$, the distribution will remain uniform. Furthermore, if we condition on $\xi_{n+1}=1$ we only remove the outcome $\setof{1,\dots,k,k+1}$ and if we condition on $\xi_j=1$ with $i\neq n+1$ and we restrict the distribution to $[n+1] \setminus \setof{j}$, we obtain again the distribution $\mu$ with $k$ and $n$ decreased by 1. 

As a consequence it suffices to compute $\norm{\calI_\mu^\Lambda}_{\infty}$ for $\Lambda = \emptyset$ and $\Lambda = \setof{n+1}$ and then we obtain $\norm{\calI_\mu^\Lambda}_{\infty}$ for all possible $\Lambda$ by induction. 
\end{proof}

\begin{claim}\label{claim:B2}
There exists $\Lambda \subset [n+1]$ and $\sigma \in \setof{0,1}^\Lambda$ such that, 
$$ \norm{\Psi_\mu^{\sigma_\Lambda}}_{\infty} \geq 2k \frac{n-k}{n} $$
\end{claim}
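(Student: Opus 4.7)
The plan is to take $\Lambda = \emptyset$ (with $\sigma_\Lambda$ the empty assignment) and show that the row of $\Psi_\mu^\emptyset$ indexed by $i = n+1$ already has $\ell_1$-norm at least $2k(n-k)/n$. The underlying intuition is that conditioning on $\xi_{n+1}=0$ is extremely informative: the unique outcome in the support of $\mu$ that omits $n+1$ is $\{1,\dots,k+1\}$, so this conditioning is a point mass and gives $\Pr[\xi_j = 1 \mid \xi_{n+1}=0] = 1$ for $j \in \{1,\dots,k+1\}$ and $0$ otherwise. Conditioning on $\xi_{n+1}=1$ instead yields the uniform distribution on the remaining $\binom{n}{k}-1$ outcomes, under which each marginal $\Pr[\xi_j=1\mid \xi_{n+1}=1]$ is roughly $k/n$ for $j \in [n]$. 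The large gap between these two conditional marginals, summed across $j$, will produce the claimed bound.

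Concretely, I would compute $\Psi_\mu^\emptyset(n+1, j)$ entry by entry, splitting into four cases according to whether $j \in \{1,\dots,k\}$, $j = k+1$, $j \in \{k+2,\dots,n\}$, or $j = n+1$; the case split is exactly what is needed to track the exclusion of the special outcome $\{1,\dots,k\}\cup\{n+1\}$ from the support of $\mu$ (it is replaced by $\{1,\dots,k+1\}$). A direct count in each case gives exact expressions for both conditional marginals in terms of $A := \binom{n-1}{k}$, $B := \binom{n-1}{k-1}$, and $C := \binom{n}{k}$.

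I would then sum the absolute values. Using $A + B = C$ together with the identity $kA = (n-k)B$, the algebra simplifies to the closed form
\[
\sum_{j \in [n+1]} \left|\Psi_\mu^\emptyset(n+1, j)\right| \;=\; \frac{2(n-k)(k+1)}{n} + \frac{2k(n-k-1)}{n(C-1)}.
\]
For $n \geq k+1$ both summands on the right are nonnegative, so dropping the second one yields $\sum_j |\Psi_\mu^\emptyset(n+1, j)| \geq 2(n-k)(k+1)/n \geq 2k(n-k)/n$, as required. The main technical effort is the combinatorial bookkeeping in the case analysis, since a single mis-attributed outcome in any of the four cases would propagate through the simplification; once the row sum is in closed form, the lower bound is immediate.
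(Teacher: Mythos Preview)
Your proposal is correct and follows essentially the same approach as the paper: take $\Lambda=\emptyset$, examine row $n+1$ of $\Psi_\mu^\emptyset$, use that conditioning on $\xi_{n+1}=0$ forces the single outcome $\{1,\dots,k+1\}$ while conditioning on $\xi_{n+1}=1$ gives the uniform distribution on the remaining $\binom{n}{k}-1$ outcomes, and sum the resulting marginal differences. The paper's sketch simply asserts the ``straightforward calculation'' yields $\|\Psi_\mu^\emptyset\|_\infty \geq 2k - 2k^2/n$, whereas you carry out the case split explicitly and obtain the sharper closed form; your expression indeed reduces to $2k(n-k-1)/n + 2 + \text{(positive term)} = 2(n-k)(k+1)/n + \text{(positive term)} \geq 2k(n-k)/n$, matching the claim.
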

\begin{proof}[Sketch of the proof.]
Consider $\Lambda = \emptyset$. Notice that when we condition on $\xi_{n+1}=1$ we obtain the uniform distribution among all the outcomes in $O\setminus \setof{1,\dots,k,k+1}$ and when we condition on $\xi_{n+1}=0$ then $\setof{1,\dots,k,k+1}$ is the only possible outcome.
A straightforward calculation gives $$ \norm{\Psi_\mu^\emptyset} \geq 2k - 2\frac{k^2}{n} .$$
\end{proof}

In conclusion we notice that Claim~\ref{claim:B1} implies that $\mu$ is {\iindependent} with parameter 3, while if we chose for example $n= k^2$, Claim~\ref{claim:B2} tells us that it is not possible for $\mu$ to be {\tsiindependent} with parameter $D<k-1$.

\end{document}